\newtheorem{theorem}{Theorem}
\newtheorem{lemma}{Lemma}
\newtheorem{definition}{Definition}
\crefname{lemma}{Lemma}{Lemmas}
\crefname{theorem}{Theorem}{Theorems}
\crefname{definition}{Definition}{Definitions}
\crefname{section}{Section}{Sections}
\crefname{appendix}{Appendix}{Appendices}
\definecolor{dgreen}{rgb}{.2,.6,.2}
\definecolor{dblue}{rgb}{.2,.2,.8}
\definecolor{dred}{rgb}{.8,.2,.2}
\newcommand{\ket}[1]{|{#1}\rangle}
\newcommand{\bra}[1]{\langle{#1}|}
\newcommand{\braket}[2]{\langle{#1}|{#2}\rangle}
\newcommand{\ketbra}[2]{\ket{#1}\bra{#2}}
\newcommand{\proj}[1]{\ketbra{#1}{#1}}
\newcommand{\BU}{\ensuremath{\mathsf{BU}}\xspace}
\newcommand{\EUCMA}{\ensuremath{\mathsf{EU\text{-}CMA}}\xspace}
\newcommand{\world}[1]{\textsf{\upshape{#1}}\xspace}
\newcommand{\RealWorld}{\world{Real world}}
\newcommand{\QWorld}{\world{Quantum independent world}}
\newcommand{\IWorld}[1]{\world{Intermediate world~#1}}
\newcommand{\IWorlds}{\world{Intermediate worlds 1} and~\world{2}}
\DeclarePairedDelimiter{\set}{\lbrace}{\rbrace}
\DeclarePairedDelimiter{\abs}{\lvert}{\rvert}
\DeclarePairedDelimiter{\norm}{\lVert}{\rVert}
\DeclarePairedDelimiter{\floor}{\lfloor}{\rfloor}
\DeclarePairedDelimiter{\ceil}{\lceil}{\rceil}
\DeclarePairedDelimiter{\of}{\lparen}{\rparen}
\DeclarePairedDelimiter{\sof}{\lbrack}{\rbrack}
\DeclareMathOperator{\KeyGen}{\ensuremath{\mathsf{KeyGen}}\xspace}
\DeclareMathOperator{\Sign}{\ensuremath{\mathsf{Sign}}\xspace}
\DeclareMathOperator{\Ver}{\ensuremath{\mathsf{Ver}}\xspace}
\DeclareMathOperator{\BlindForge}{\ensuremath{\mathsf{BlindForge}}\xspace}
\newcommand{\sk}{\textup{sk}}
\newcommand{\pk}{\textup{pk}}
\newcommand{\acc}{\textup{\texttt{acc}}}
\newcommand{\rej}{\textup{\texttt{rej}}}
\newcommand{\tp}{^{\mathsf{T}}} %
\newcommand{\ct}{^{\dagger}} %
\newcommand{\1}{\mathbb{1}} %
\newcommand{\x}{\otimes} %
\newcommand{\xp}[1]{^{\otimes #1}} %
\newcommand{\rgets}{\overset{\scriptscriptstyle\smash\$}{\gets}} %
\newcommand{\N}{\mathbb{N}} %
\newcommand{\R}{\mathbb{R}} %
\newcommand{\C}{\mathbb{C}} %
\newcommand{\A}{\mathcal{A}} %
\renewcommand{\H}{\mathcal{H}} %
\newcommand{\NOT}{\mathrm{NOT}}
\newcommand{\CNOT}{\mathrm{CNOT}}
\renewcommand{\P}{\Phi}
\title{Quantum-access security of the\\Winternitz one-time signature scheme}
\newcommand{\email}[1]{\thanks{\href{mailto:#1}{#1}}}
\newcommand{\aff}[1]{\textit{\normalsize#1}}
\author[1]{Christian Majenz\email{christian.majenz@cwi.nl}}
\author[2]{Chanelle Matadah Manfouo\email{cmatadah@quantumleapafrica.org}}
\author[3]{Maris Ozols\email{marozols@gmail.com}}
\affil[1]{\aff{Centrum Wiskunde \& Informatica and QuSoft, The Netherlands}}
\affil[2]{\aff{African Institute for Mathematical Science \& Quantum Leap Africa, Rwanda}}
\affil[3]{\aff{Institute for Logic, Language, and Computation, Korteweg-de Vries Institute for Mathematics, and Institute for Theoretical Physics, University of Amsterdam and QuSoft, The Netherlands}}
\date{}
\begin{document}
\maketitle

\begin{abstract}
Quantum-access security, where an attacker is granted superposition access to secret-keyed functionalities, is a fundamental security model and its study has inspired results in post-quantum security. We revisit, and fill a gap in, the quantum-access security analysis of the Lamport one-time signature scheme (OTS) in the quantum random oracle model (QROM) by Alagic et al.~(Eurocrypt 2020). We then go on to generalize the technique to the Winternitz OTS. Along the way, we develop a tool for the analysis of hash chains in the QROM based on the superposition oracle technique by Zhandry (Crypto 2019) which might be of independent interest.
\end{abstract}

\tableofcontents

\section{Overview}
\subsection{Introduction}

Recently, research and development efforts towards building a universal quantum computer have intensified. As quantum computers will break currently deployed public-key cryptosystems \cite{shor1994algorithms}, finding adequate replacement schemes (called \emph{post-quantum} secure) has been increasingly a priority, too, as reflected by the ongoing NIST standardization effort for post-quantum secure digital signature schemes and key encapsulation mechanisms \cite{alagic2020status}. 

\paragraph{Quantum-access security.} While post-quantum security is the most important attack model involving quantum computers, the stronger \emph{quantum-access} or \emph{quantum world} attack model \cite{boneh2013secure,GHS16}, where attackers are granted quantum access to secret-keyed functionalities, has received considerable attention, too. There are a number of reasons why this stronger attack model is important. On the one hand, it is of theoretical importance because it captures the strongest-known achievable security notions for standard classical cryptographic primitives. On the other hand, there are a number of conceivable scenarios where they become relevant, e.g.~for composability with obfuscation or when constructing quantum-cryptographic schemes, or to prevent implementation-level vulnerabilities in a future hybrid quantum-classical computing infrastructure. Finally, results in the quantum access model can inform post-quantum cryptographic research, as exemplified by the offline Simon's algorithm attack \cite{BHNPSS19}.

\paragraph{Blind unforgeability.} 
In this work, we study the security of signature schemes under quantum-access attacks, in the quantum random oracle model (QROM) \cite{boneh2011random}. Here, generalizing the standard notion of existential unforgeability under chosen message attacks, the attacker is granted quantum query access to the signing algorithm. In the end, the adversary should output a forgery that they did not obtain from a query. Formalizing such a security notion is complicated due to the so-called \emph{quantum no-cloning principle} according to which quantum states cannot be copied. We use the notion of blind unforgeability introduced in \cite{alagic2020quantum} (see \cite{boneh2013secure,GYZ17} for previous and complementary notions). We remark that the choice of the blind unforgeability definition is due to the fact that it implies the previous notions, which are the Boneh and Zhandry definition \cite{boneh2013secure} and the one-time unforgeabilty~\cite{GYZ17}, as established in~\cite{alagic2020quantum}. Informally, blind unforgeability credits an adversary with a successful break of, e.g., a digital signature scheme, if it outputs a valid message-signature pair given a modified signing oracle that is ``blinded'' on a random subset of all messages in the sense that it outputs a dummy symbol instead of a signature, and if the output message is among these blinded messages (see section \cref{sec:BU} for details).

\paragraph{Hash-based signature schemes.}
Hash-based signature schemes are prominent candidates for the replacement of digital signature schemes based on quantum-broken number-theoretic hardness assumptions. In particular, the stateful hash-based signature scheme XMSS \cite{BDH11} has been standardized as RFC8391 \cite{RFC8391}, and the stateless hash-based signature scheme SPHINCS+ \cite{BHKNRS19} is an alternate candidate in the ongoing NIST standardization process for post-quantum cryptographic schemes \cite{alagic2020status}.  The security of hash-based signature schemes can be based on weak computational assumptions, like e.g.~the one-wayness of the underlying hash function. Common hash based signature schemes, including the mentioned examples, are constructed using one-time\footnote{And sometimes few-time signature schemes, e.g.~in SPHINCS+.} signature (OTS) schemes in combination with a hash-based authentication graph (e.g.~a Merkle tree). The most well-known OTSs are the Lamport \cite{Lam79} and Winternitz \cite{merkle1989certified} OTS. Variations of the latter are used in both XMSS and SPHINCS+.

\paragraph{Previous work.}
In \cite{alagic2020quantum}, the Lamport OTS is studied in the context of blind-unforgeability. More precisely, a proof of one-time blind-unforgeability in the QROM is provided. That proof, however, contains an imprecision in the analysis of the adversarial success. In particular, an auxiliary measurement is used to ``collapse'' an invariant property that holds \emph{in superposition} into holding \emph{classically}, but the effect of the dependence of this auxiliary measurement on the forgery message is not analyzed.

\paragraph{Related work.}
Quantum-access security for encryption is an active research area, and generalizing chosen-ciphertext security notions to the quantum access setting has posed, and poses, similar challenges as the ones encountered in the authenticity setting \cite{boneh2013secure,GHS16,GKS20}. On the negative side, key recovery attacks in the quantum access model against a number of symmetric-key primitives that are secure in the respective standard attack models have been discovered \cite{SS17,KLLNP16}, and have lead to the discovery of quantum attacks that can be performed without quantum access to secret-keyed functionalities \cite{BHNPSS19}.

There are a number of works that prove query lower bounds using variants of the superposition oracle technique \cite{LZ19,CFHL20,GHHM20}. In particular, the works \cite{CFHL20,blocki2020security} prove query bounds for generating hash chains in the QROM, considering \emph{parallel queries}. This analysis does not help when proving the security of the Winternitz OTS, as here the difficulty of \emph{inverting} an existing hash chain has to be exploited, and not the difficulty of \emph{generating} a hash chain.%

\subsection{Summary of results}\label{sec:summary}

\paragraph{The Lamport OTS is blind-unforgeable.}
We revisit the analysis of the Lamport OTS in the QROM presented in \cite{alagic2020quantum} and give a complete proof of blind unforgeability as stated in the following theorem.

\begin{theorem}[Blind unforgeability of the Lamport OTS, informal]
The Lamport OTS is blind-unforgeable if the underlying hash function $h$ is modeled as a quantum-accessible random oracle.
More precisely, the success probability of any blind unforgeability adversary $\A$ against the Lamport OTS that makes $q>0$ quantum queries to the random oracle is bounded as
\begin{equation*}
   \Pr[\textnormal{$\A$ succeeds}] \le C_L q^2l^3\cdot 2^{-n},
\end{equation*}
where $C_L$ is a constant, $n$ is the security parameter of the Lamport OTS and $l$ is the message length.
\end{theorem}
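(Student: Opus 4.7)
The plan is to analyse the blind unforgeability game for the Lamport OTS via Zhandry's compressed/superposition oracle technique, tracking an invariant which captures how little the adversary has learned about the preimages of the public-key hashes. First I would instantiate the QROM as a compressed oracle acting on a database register $D$, and sample the blinded set $B \subseteq \{0,1\}^l$ and the secret key $\{x_{i,b}\}_{i\in[l],b\in\{0,1\}}$ uniformly as part of the setup. The public key values $y_{i,b} = h(x_{i,b})$ are then obtained by querying the compressed oracle on the $2l$ secret points; the signing oracle releases $\{x_{i,m_i}\}_i$ on input $m\notin B$ and $\bot$ on $m\in B$. Because the signer is one-time, at most one message $m^\ast$ is ever queried (and it contributes nothing useful when $m^\ast\in B$), so the rest of the argument is about a pure preimage-search in the QROM.

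Next I would isolate the invariant: at every point in the execution, the compressed oracle database $D$ should not contain an input--output pair $(x,y)$ with $y=y_{i,b}$ for any ``forbidden'' index $(i,b)$, where $(i,b)$ is forbidden iff revealing $x_{i,b}$ enables a forgery on some $m\in B$ (i.e.\ for which some $m\in B$ has $m_i=b$ and, at every other position $j$, either $m_j=m^\ast_j$ or the adversary has also found $x_{j,m_j}$). Call the projector onto the complementary ``good'' subspace $\Pi_{\mathrm{good}}$. Initially $\Pi_{\mathrm{good}}$ holds with certainty because the $x_{i,b}$ are uniform and independent of the adversary's view. Each adversarial quantum query moves at most $O\bigl(\sqrt{1/2^n}\bigr)$ amplitude out of $\Pi_{\mathrm{good}}$ per forbidden target, by the standard Zhandry-style calculation on the compressed oracle (approximately invariant subspaces). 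Summing over $q$ queries and over the at most $2l$ relevant secret values yields an amplitude loss of order $\sqrt{q^2 l / 2^n}$, hence a probability bound of order $q^2 l / 2^n$ for violating the invariant during the oracle interaction.

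The remaining ingredient is the forgery step. If the adversary outputs $(m^{\mathrm{forge}},\sigma^{\mathrm{forge}})$ with $m^{\mathrm{forge}}\in B$ and verification accepting, then for every position $j$ the pair $(x_{j,m^{\mathrm{forge}}_j},y_{j,m^{\mathrm{forge}}_j})$ must in particular be consistent with $h$, and the positions where $m^{\mathrm{forge}}_j\neq m^\ast_j$ require preimages the adversary was never handed. Using the standard recording-oracle lemma that measuring $D$ after the forgery yields these required preimages with probability at least (verification probability) $- O(\ell/2^n)$, one recovers that a successful forgery on a blinded message implies a violation of $\Pi_{\mathrm{good}}$, up to a negligible error from consistency checks at unused positions. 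Combining with the invariant-maintenance bound, summing over the (at most $l$) positions where the forgery could have leveraged a newly found preimage, and summing over the $\leq 2^l$ possible forgery targets via an averaging argument that exploits the randomness of $B$ and the secret key, one arrives at the claimed $O(q^2 l^3 / 2^n)$ bound; the extra factors of $l$ compared to a naive preimage bound come from the union bound over forbidden indices, the union bound over which position the forgery ``newly inverts'', and the counting of consistent $m\in B$ completions.

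The hardest part is precisely the step where the gap in \cite{alagic2020quantum} sits: the final measurement that extracts the preimages from $D$ depends on $m^{\mathrm{forge}}$, which is itself a quantum-coherent output of the adversary entangled with the oracle register. I would address this by defining $\Pi_{\mathrm{good}}$ so that it is \emph{message-independent}, i.e.\ simultaneously forbidding every index $(i,b)$ that could appear in \emph{any} blinded-message forgery, and only then performing the $m^{\mathrm{forge}}$-dependent measurement after the invariant has been established. Because $\Pi_{\mathrm{good}}$ is diagonal in the database basis and commutes with any reading of $D$ at specific points, the non-commutativity problem in the original argument disappears, at the cost of the aforementioned union bounds which are the source of the $l^3$ factor. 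Turning this into a clean inequality, with constants, is the main technical work.
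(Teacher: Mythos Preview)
Your approach is genuinely different from the paper's, and as written it has a concrete gap that the paper's technique is specifically designed to avoid.

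The paper does \emph{not} use the compressed oracle on a database register $D$. Instead it introduces a \QWorld in which the $2l$ secret-key strings are held in registers $S_i^j$ prepared in uniform superposition $\ket{\Phi}$, and the \emph{standard} random oracle is then reprogrammed in superposition at those $2l$ inputs. The invariant is a projector $P_S$ on the \emph{secret-key} registers (asserting that, for some un-blinded $m$, all registers $S_i^{\bar m_i}$ are still in $\ket{\Phi}$), not on any database. Progress is controlled by the commutator bound $\|[U_h,P_S]\|_\infty = O(l\,2^{-n/2})$; the message-dependence of the final analysis is handled by an $(l{+}1)$-outcome $Q$-measurement (depending on $m^*$) together with the orthogonality $Q_{l+1}^{m^*}P_S=0$ for $m^*\in B$ and the pinching lemma. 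The $l^3$ arises as $(l{+}1)$ (pinching) times $(q\cdot l\,2^{-n/2})^2$ (commutator squared).

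Your proposal runs into trouble at the very first step. In the compressed-oracle picture, key generation queries $h$ on each $x_{i,b}$ to produce $y_{i,b}$, so the database $D$ \emph{already contains} the pairs $(x_{i,b},y_{i,b})$ before the adversary does anything; your $\Pi_{\mathrm{good}}$ is violated from the outset. The whole point of the paper's \QWorld is to break this entanglement between the secret key and the oracle: by sampling the $s_i^j$ (and, for Winternitz, the full chains) as independent uniform strings in superposition and reprogramming $h$ accordingly, the ``has the adversary learned $s_i^j$?'' question becomes the clean statement ``is $S_i^j$ still in $\ket{\Phi}$?'', with no database bookkeeping at all. Your attempt to make $\Pi_{\mathrm{good}}$ message-independent also over-reaches: once you forbid \emph{every} $(i,b)$ that appears in \emph{some} blinded message, you forbid essentially all $2l$ indices (with high probability over $B$), so the projector is again violated by key generation and by the single signing query, and you have no slack left to run the argument. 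The paper's $P_S$ is message-independent too, but it is carved out of the \emph{un-blinded} set $B^c$ and lives on the secret-key registers, which is exactly what makes it compatible with one honest signing query while still being orthogonal to $Q_{l+1}^{m^*}$ for every blinded $m^*$.
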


Compared to \cite{alagic2020quantum}, our security proof features the following improvements:
\begin{itemize}
    \item We streamline the usage of the superposition oracle technique of Zhandry \cite{Zhandry19}. In particular, our analysis only uses (a variant of) the superposition oracle technique to sample the secret key. We reprogram, \emph{in superposition}, the standard random oracle at inputs contained in the secret key. This technique represents a general tool to analyze hash chains in the QROM and might be of independent interest.
    \item We give a full analysis of the success probability using an auxiliary measurement idea from \cite{alagic2020quantum}. To tackle the problem mentioned above, we introduce a novel technique of tracking an invariant property \emph{in superposition} using projectors and commutators.
\end{itemize}

\paragraph{The Winternitz OTS is blind-unforgeable.}
With the full blind unforgeability analysis of the Lamport OTS in hand, we generalize the approach to the Winternitz OTS. 

\begin{theorem}[Blind unforgeability of the Winternitz OTS, informal]
The Winternitz OTS is blind-unforgeable if the underlying hash function $h$ is modeled as a quantum-accessible random oracle.
More precisely, the success probability of any blind unforgeability adversary $\A$ against the Winternitz OTS that makes $q>0$ quantum queries to the random oracle is bounded as 
\begin{equation*}
   \Pr[\textnormal{$\A$ succeeds}] \le C_W q^2 a^3\frac{w^4}{\log^3 w}\cdot 2^{-n},
\end{equation*}
where $C_W$ is a constant, $n$ is the security parameter of the Winternitz OTS, $a$ is the message length and $w\ge 2$ is the Winternitz parameter used to trade off signature size versus signing and verification time.
\end{theorem}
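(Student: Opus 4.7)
The plan is to lift the Lamport analysis to the chain setting of the Winternitz scheme. The signing key consists of $a$ independently sampled preimages $\sk_1, \ldots, \sk_a$, and the public key is obtained by applying $h$ repeatedly, so that $\pk_i = h^{w-1}(\sk_i)$. A signature of a message digest with digits $(m_1, \ldots, m_a) \in \set{0, \ldots, w-1}^a$ (together with the Winternitz checksum) reveals the intermediate chain values $h^{m_i}(\sk_i)$. Consequently, a blind-unforgeability adversary who has seen one signature chain and must produce a valid signature on a blinded message is, by the checksum property, forced to invert $h$ by at least one step in at least one of the chains. This reduces blind unforgeability to the hardness of extending an unknown hash chain, which is precisely the general QROM problem captured by the hash-chain tool announced in the introduction.

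Concretely, I would first replace the random oracle by Zhandry's superposition oracle and lazily sample the chain starts, reprogramming the oracle on the chain inputs in superposition; collision events between chain values or with adversarial queries contribute the standard $O(q^2 a w/2^n)$ slack. I would then formulate the intermediate-world invariant that, after $t$ queries, the adversarial state lies almost entirely in the subspace where the superposition oracle database contains no input equal to $h^j(\sk_i)$ for any $j < m_i$ and any chain $i$. This invariant is tailored to the success event: a valid forgery on a blinded message, combined with the checksum, forces the database to contain such an exposed input in some chain, so controlling the invariant deficit directly bounds the success probability.

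To propagate the invariant through queries I would reuse the commutator-with-projector technique of the Lamport proof, but now applied chain-step-by-chain-step. For each chain $i$ and each level $j$, the projector checks whether the oracle database contains an input mapping to the next chain value; the per-step commutator bound with a quantum random oracle query is, as in Lamport, controlled by $O(1/2^{n/2})$. Telescoping through the at most $a(w-1)$ chain links and summing over the $q$ queries yields the $q^2$ and part of the $a$ and $w$ factors; the auxiliary measurement that collapses the invariant into a classical event, now indexed by both the forgery message and the chain position at which the forgery forks off the signed chain, contributes the remaining dimensions, while an optimization that balances the chain length $w$ against the effective alphabet size $\log w$ per chain trims the naive $w^6$ down to the claimed $w^4/\log^3 w$.

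The hard part will be handling the dependence of the auxiliary collapsing measurement on \emph{where} along each chain the forgery departs from the chain obtained from the single allowed signing query: the commutator bounds must remain uniform over this fork position, whose number of possible values is $\Theta(aw)$. The Winternitz checksum is essential here because it guarantees that the fork must lie strictly below the top of at least one chain, ruling out trivial forgeries and enabling the combinatorial tightening that produces the $w^4/\log^3 w$ rather than $w^6$. Beyond this subtlety, the reduction structure, the reprogramming argument, and the accounting of statistical distance should closely mirror the Lamport proof, with the hash-chain QROM tool playing the role of the central technical lemma.
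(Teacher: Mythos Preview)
Your high-level structure (invariant subspace + commutator bounds + pinching) matches the paper, but several details are off in ways that affect the final bound and the technique.

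First, the Winternitz scheme does not have $a$ chains. The message is written in base $w$, so there are $l = l_1 + l_2 = \Theta(a/\log w)$ chains, each of length $w$; see \cref{eq:Winternitz-params}. This is precisely where the $1/\log^3 w$ factor comes from: the paper proves a bound of the form $O(q^2 l^3 w^4 \cdot 2^{-n})$ (\cref{thm:winternitz}), and substituting $l = \Theta(a/\log w)$ gives the stated $q^2 a^3 w^4/\log^3 w$. There is no ``optimization that balances chain length against alphabet size'' trimming a naive $w^6$; the $\log^3 w$ is purely the change of variables from $l$ to $a$.

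Second, the auxiliary $Q$-measurement has only $l+1$ outcomes, not $\Theta(aw)$. For a forgery message $m^*$ with encoding $b^*$, the projector $Q_{i^*}^{b^*}$ simply tests whether the single register $\Gamma_{i^*}^{b^*_{i^*}}$ is still in uniform superposition; it is indexed by the chain number $i^*$, not by the fork level along the chain (the level is fixed by $b^*_{i^*}$). This is what keeps the pinching-lemma loss at $l+1$.

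Third, the paper does \emph{not} use the compressed-oracle database for the random oracle. Its point is that one can put only the hash-chain elements $\Gamma_i^j$ into uniform superposition and reprogram a \emph{standard} random oracle controlled on those registers (the \QWorld of \cref{hash oracles}). The invariant projector $P_\Gamma$ then lives on the chain registers alone, which stay in a product state until touched; this product structure is what makes the commutator bound $\|[U_h,P_\Gamma]\|_\infty = O(lw^2 \cdot 2^{-n/2})$ tractable. Your ``database contains no input equal to $h^j(\sk_i)$'' formulation would instead live on the full oracle database and would not obviously give the same clean per-register commutator bounds.
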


While the simplified analysis of hash chains in the QROM described above was advantageous in proving the blind unforgeability security of the Lamport OTS, it is indispensable in the analysis of the Winternitz scheme. Here, long hash chains are considered and the technique of using the superposition oracle to detect which hash chain elements are known to the adversary relies on the oracle register (or rather here: the hash chain register) being in a product state.

\subsection{Technical overview}

In this technical overview, we give a high-level description of our techniques for analyzing the blind unforgeability security of the Lamport and Winternitz OTSs in the QROM.

\paragraph{The superposition oracle technique and hash chains.}
As in many contexts that concern message authenticity and integrity, the main roadblock we have to overcome in our analysis is the so-called \emph{recording barrier}: quantum oracle queries can, in general, not be recorded for later use. In particular, after a single quantum signing query, it is not possible to reason about the unused parts of the secret key. This is because, in general, all secret key strings have been used in some part of the superposition.

In \cite{alagic2020quantum}, Zhandry's superposition oracle technique is used in a novel way to recover the ability to reason about which secret key strings are (un)known to the adversary. There, the secret key of the Lamport scheme, which is a $2 \times l$ array of independent uniformly random $n$-bit strings, is essentially regarded as a random function from $\{0,1\}\times \{1,\dotsc,l\}$. This function, as well as the hash function the Lamport OTS is constructed from, is then modelled using the superposition oracle technique.

We improve this technique as follows. We use the fact that sampling two correlated random variables $X$ and $Y$ can be done by first sampling $X$, and then $Y$ according to the conditional distribution, or vice versa. In the context of \emph{hash chains} in the (Q)ROM, i.e.~sequences of strings $x_0, x_1=H(x_0), x_2=H(x_1),\dotsc$ for a random oracle $H$, this means that instead of sampling $x_0$ and $H$, and then computing the remaining hash chain elements, we can as well sample $x_0, x_1, \dotsc$ from their joint distribution, sample $H$, and \emph{reprogram $H$ to be consistent with the $x_i$}. This allows us to i) change the distribution of the $x_i$ to a  simpler one that is close in total variational distance, and ii) refrain from using the full superposition oracle technique for $H$.
In particular, we use i) to replace the hash chains that are generated by the key generation algorithms of the Lamport and Winternitz schemes by tuples of independent random strings. This incurs only a small error, as the uniform distribution and the distribution of a hash chain in the (Q)ROM with random starting value $x_0$ are equal conditioned on all $x_i$ being distinct. But collisions between different hash chain elements are unlikely.

Now that the hash chain elements are  independent strings, we can use the full power of the superposition oracle technique. In particular, the one-to-one correspondence between the adversary's ignorance of a hash chain element, and the corresponding superposition oracle register being in uniform superposition, is restored.

Throughout the paper, and in the rest of this technical overview, we perform the analysis in a world where hash chains are formed using a superposition oracle modeling independent uniformly random strings, and the random oracle is reprogrammed accordingly. We call this the \QWorld. To conclude our analysis, we make use of the approximate indistinguishability of the \world{Real} and \QWorld.

\paragraph{Blind unforgeability and classical invariants in superposition.}
With the tools for analyzing hash chains in the QROM in hand, the next challenge consists of generalizing the classical security arguments for the Blind Unforgeability (\BU) of the Lamport and Winternitz OTSs to the quantum access setting. The core of these security arguments is, at a high level, that for each unqueried message, any valid signature contains a string that is unknown to the adversary.\footnote{When basing the security on one-wayness, ``unknown'' is to be taken in a computational sense, but as this paper is about security in the (Q)ROM, it is sufficient to interpret ``unknown to'' as ``independent of the state of''.} As mentioned above, this kind of reasoning does not generalize to the quantum access setting, as here an adversary can query all messages in superposition.

In the security game for the notion of \BU, however, the adversary is not provided with an oracle for the full signing algorithm functionality. Instead, the adversary is provided with an oracle for a modified signing algorithm that is ``blinded'' on a random subset of the messages, in the sense that for these messages it outputs a dummy symbol instead of a signature. These ``blinded messages'' can now replace the unqueried messages in security arguments, as by definition the adversary is prevented from obtaining a valid signature for them from the blinded signing oracle.

For obtaining a quantum generalization, we need to reformulate this argument. The statement that for each unqueried message any valid signature contains a string unknown to the adversary, is equivalent to saying that, for each fixed message $m^*$ and all $m \neq m^*$, some information related to the secret key is necessary to compute the signature for $m^*$ that is not revealed by the signature for $m$. For Blind Unforgeability, it suffices to consider blinded $m^*$ and unblinded $m$. In the superposition oracle framework, the statement ``there exists an unblinded message such that the registers corresponding to all parts of the secret key that the signature for that message does not reveal, are in the uniform superposition state'' defines a subspace $I$. By definition, the global state after a \BU-adversary makes a single query to the blinded signing oracle, and no queries to the random oracle, is in that subspace.

The crucial step in our analysis is to show that the adversary-oracle state approximately remains in the subspace $I$, even if the adversary performs a moderate number of quantum queries to the random oracle. This means the subspace $I$ can serve as an \emph{invariant}.

\paragraph{Random oracle queries and commutators.}
To analyze the ``leakage'' from the invariant subspace $I$, we use bounds on the norm of matrix commutators: to prove that the final oracle-adversary state is approximately in the invariant subspace $I$, we can equivalently show that applying the corresponding projector $\Pi_I$ does not change the state by a lot. We know, however, that the projector does not change the state at all before any random oracle queries have been made. Therefore it suffices to bound the operator norm of the commutator between the projector $\Pi_I$ and the unitary operator that facilitates random oracle queries in the \QWorld. We derive such a norm bound (see e.g.~\cref{commutator hash unitary and P_S} for the Lamport case), and the proof follows the classical intuition about the one-wayness of the random oracle.

\section{Preliminaries}

Let us introduce some notation that will be used throughout the paper. In this document, quantum systems are associated with finite-dimensional complex Euclidean vector spaces endowed with an inner product. Registers will be denoted by capital letters.
We say that $\epsilon = \epsilon(n)$ is negligible if, for all polynomials $p(n)$, $\epsilon(n) < 1/p(n)$ for large enough $n \in \N$.
We use the notation $x \rgets D$ to say that $x$ is chosen uniformly at random from a set $D$. 
We write $S^c$ to denote the complement of set $S$ (in a superset that is clear from the context).
We write $s \parallel t$ to denote the \emph{concatenation} of strings $s$ and $t$, and $[A,B] = AB - BA$ to denote the \emph{commutator} of operators $A$ and $B$.
Throughout this paper, quantum adversaries refer to quantum poly-time algorithms and are denoted by $\A$.

\subsection{Quantum computing}

In this section, we introduce some basic notions from quantum computing. We refer the reader to \cite{nielsen2002quantum} for more details.

\subparagraph*{Quantum states.}
To a given quantum system with $d$ distinguished states we associate a $d$-dimensional complex Euclidean vector space $\H = \C^d$ endowed with an inner product $\bra\cdot\cdot\rangle$. We refer to the standard basis of $\C^d$ as the computational basis. The \emph{state} of such system is described by a unit vector, i.e., a vector $\ket \psi \in \H$ such that $\bra\psi\psi\rangle = 1$. For example, a \emph{qubit} state is described by a vector $\ket{\psi} = \alpha \ket{0} + \beta \ket{1} \in \C^2$ such that $|\alpha|^2 + |\beta|^2 = 1$, where $\ket{0}$ and $\ket{1}$ are the \emph{computational basis} vectors.
The corresponding \emph{dual vector} is given by $\bra{\psi} = \bar{\alpha} \bra{0} + \bar{\beta} \bra{1}$, which can also be expressed through entry-wise complex conjugation and transpose: $\bra{\psi} = \ket{\psi}\ct = \overline{\ket{\psi}}{}\tp$.

Given two quantum systems $A$ and $B$ with state spaces $\H_A$ and $\H_B$, the composite system $AB$ has state space $\H_A \x \H_B$ described by the tensor product. In particular, if $\ket\psi_A\in\H_A$ and $\ket\psi_B\in\H_B$ are states of the two individual systems, then their joint state is given by $\ket\psi_A \otimes \ket\psi_B$ or simply $\ket\psi_A \ket\psi_B$. We will often refer to the subsystems $A$ and $B$ as \emph{registers}.
For example, an $n$-qubit system consists of $n$ qubit registers and its computational basis is given by $\ket{x} = \ket{x_1} \dotsb \ket{x_n}$ where $x = x_1 \dots x_n \in \set{0,1}^n$. A general $n$-qubit state is then a linear combination of the computational basis states:
\begin{equation*}
    \sum_{x \in \set{0,1}^n} \alpha_x \ket{x}
    \quad \text{with} \quad
    \sum_{x \in \set{0,1}^n} \abs{\alpha_x}^2 = 1.
\end{equation*}
In particular, when all $\alpha_x$ are equal to $2^{-n/2}$, we call this the \emph{uniform superposition}.
Throughout this paper, we will denote this state and the corresponding projector by
\begin{align}
    \ket{\Phi} &= \frac{1}{\sqrt{2^n}}
        \sum_{x \in \set{0,1}^n} \ket{x}, &
    \P &= \proj{\Phi},
    \label{eq:Phi}
\end{align}
where the latter corresponds to $2^n \times 2^n$ matrix with all entries equal to $1/2^n$.

Quantum computation proceeds by applying unitary transformations to the state. The information is then read out by applying a measurement.

\paragraph{Unitary transformations.} The evolution of a $d$-dimensional quantum system is described by a \emph{unitary transformation}, i.e., a complex $d \times d$ matrix $U$ such that $UU^\dagger = I$, where $U^\dagger = \bar{U}\tp$ denotes the conjugate transpose of $U$. If a unitary $U$ is applied only on the $A$ register of a joint system $AB$ that is in state $\ket{\psi}_{AB}$, we write $(U_A\otimes\1_B) \ket{\psi}_{AB}$
where $\1$ denotes the \emph{identity transformation}. We will often abbreviate this as $U_A \ket{\psi}_{AB}$.

\paragraph{Measurement.} We can extract information from a quantum state $\ket{\psi}$ by performing a measurement. For our purpose it will be enough to consider only projective measurements. A \emph{projective measurement} is described by a set $\{P_1,\dotsc,P_k\}$ of orthogonal projectors ($P_i\ct = P_i$ and $P_i^2 = P_i$) such that $\sum_{i=1}^k P_i = \1$. When performing a measurement on a quantum state $\ket \psi$, the probability of getting outcome $i$ is $p(i) = \bra\psi P_i\ket \psi$. Upon getting outcome $i$, the state $\ket\psi$ collapses to $P_i\ket\psi/\sqrt{p(i)}$. Given a composite system $AB$, a measurement on the subsystem $A$ has operators of the form $(P_i)_A \x \1_B$, and the outcome probabilities and post-measurement states are determined analogously \cite{watrous2018theory}. 

\paragraph{Quantum-accessible oracles.}
On a quantum computer, a function can be evaluated on a superposition of inputs. The standard way of modelling superposition black-box access to a function $f:\{0,1\}^n\to\{0,1\}^m$ is by providing an oracle for the unitary operation $O_f$ that acts on $n+m$ qubits and is defined by 
\begin{equation}
    O_f \ket x \ket y \mapsto \ket x \ket{y \oplus f(x)},
    \label{eq:Of}
\end{equation}
for all $x\in\{0,1\}^n$ and $y\in\{0,1\}^m$.
Without loss of generality, an algorithm $\A$ that makes $q$ queries to such an oracle has the following form:
\[
    U_q O_f \cdots U_1 O_f U_0 \ket{\Psi_0}
    = V^{O_f}_{\A}\ket{\Psi_0}
    = \ket{\Psi},
\]
possibly followed by a measurement.
Here, $\ket{\Psi_0}$ is an initial state $U_i$ are arbitrary unitary operations that do not depend on $f$.

In this work, we will deal with algorithms that have two oracles, $O_1$ and $O_2$, but may only query $O_2$ at most once ($O_1$ will be a random oracle and $O_2$ a signing oracle for a one-time signature scheme). In this case, we can regard an algorithm $\A^{O_1, O_2} = (\A_0^{O_1},\A_1^{O_1})$ as a two-stage process: $\A_0^{O_1}$ prepares the input for $O_2$ and an internal register, while $\A_1^{O_1}$ receives the internal state and the output of $O_2$, and produces the final output of $\A$. In other words, the execution of $\A$ results in the state
\[
    \ket{\Psi} =
    V^{O_1}_{\A_1}
    O_2
    V^{O_1}_{\A_0}
    \ket{\Psi_0}.
\]

The most well-known situation in cryptography that features a quantum oracle is the so-called \emph{quantum random oracle model} (QROM) \cite{boneh2011random}. In the QROM, just as in the classical random oracle model (ROM) \cite{Bellare1993a}, a hash function is modeled as a uniformly random function $h$ that all agents have oracle access to, meaning that quantum oracle access to the unitary $O_h$ defined in \cref{eq:Of} is provided. This model is used to prove cryptographic security against quantum adversaries when basing security on concrete properties like, e.g., collision resistance, is hard or inefficient.

\subsection{Tools from linear algebra}

In this section, we state a couple of simple lemmas used in security proofs in \cref{sec:lamport,sec:winternitz}. For the first lemma, we use the formulation from \cite{boneh2013secure} (Lemma~2.1), and the proof is also provided in the same reference.

\begin{lemma}[Special case of the pinching lemma \cite{hayashi2002optimal}]\label{Pinching} 
Let $\A$ be a quantum algorithm and $x$ any output value of $\A$. Let $\A_0$ be another quantum algorithm obtained from $\A$ by pausing $\A$ in an arbitrary stage of execution, performing a projective measurement that obtains one of $k$ outcomes, and then resuming $\A$. Then,
\begin{equation}
    \Pr[\A_0(1^n) = x] \geq \frac{\Pr[\A(1^n) = x]}{k}.
\end{equation}
\end{lemma}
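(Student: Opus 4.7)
\emph{Proof plan.} The plan is to identify $\A_0$ with the pinching channel associated with the inserted measurement, applied to the intermediate state of $\A$, and then appeal to the operator form of the pinching inequality. First I would let $\rho$ denote the state of $\A$'s full workspace immediately before the inserted measurement, let $\{P_i\}_{i=1}^k$ be the projectors of that measurement, and let $M_x$ be the POVM element that packages the entire remainder of $\A$'s computation together with the event ``output equals $x$''. With this setup, $\Pr[\A(1^n)=x] = \operatorname{Tr}(M_x \rho)$. Since $\A_0$ runs the same continuation but does not condition on the extra outcome, its effective state immediately after the inserted measurement is the classical mixture $\Lambda(\rho) := \sum_i P_i \rho P_i$, and hence $\Pr[\A_0(1^n)=x] = \operatorname{Tr}(M_x \Lambda(\rho))$.

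Next, I would prove the operator inequality $\Lambda(\rho) \succeq \rho/k$. The slickest route is via the random-phase representation
\[
  \Lambda(\rho) = \frac{1}{k} \sum_{s=0}^{k-1} U_s \rho U_s\ct,
  \qquad U_s := \sum_{j=1}^k \omega^{sj} P_j,
  \qquad \omega := e^{2\pi i /k}.
\]
Each $U_s$ is unitary because the $P_j$ are mutually orthogonal projectors summing to $\1$, and the identity for $\Lambda$ follows from the character orthogonality relation $\sum_{s=0}^{k-1} \omega^{s(j-j')} = k\,\delta_{jj'}$. Peeling off the $s=0$ term (for which $U_0 = \1$) gives $\Lambda(\rho) - \rho/k = \tfrac{1}{k} \sum_{s=1}^{k-1} U_s \rho U_s\ct \succeq 0$, since each summand is the conjugation of a positive operator by a unitary.

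Combining the two steps and using that $M_x$ is positive semidefinite, so that $\operatorname{Tr}(M_x \cdot)$ is monotone on positive operators, yields
\[
  \Pr[\A_0(1^n) = x] = \operatorname{Tr}\bigl(M_x \Lambda(\rho)\bigr)
  \ge \tfrac{1}{k} \operatorname{Tr}(M_x \rho)
  = \tfrac{1}{k} \Pr[\A(1^n) = x].
\]

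The main obstacle is not algebraic---the pinching inequality itself is a one-line computation---but rather being explicit that the effective state seen by the continuation of $\A_0$ is the mixture $\Lambda(\rho)$ rather than any particular post-measurement branch $P_i \rho P_i / p(i)$. Once that framing is fixed, the rest of the argument is independent of the internal structure of $\A$ and of the particular stage at which the auxiliary measurement is inserted.
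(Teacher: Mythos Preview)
Your argument is correct. The paper does not actually include its own proof of this lemma: it only cites \cite{boneh2013secure} (Lemma~2.1) for both the formulation and the proof, so there is no in-paper argument to compare against. Your route---packaging the continuation of $\A$ into a single POVM element $M_x$, identifying $\A_0$'s post-measurement state with the pinching channel $\Lambda(\rho)=\sum_i P_i\rho P_i$, and then invoking the random-phase representation $\Lambda(\rho)=\tfrac1k\sum_{s} U_s\rho U_s^{\dagger}$ to isolate the $U_0=\1$ term---is the standard proof of the pinching inequality and matches in spirit the argument in the cited reference. The one point worth making fully explicit (which you do address) is that $\A_0$ discards the measurement outcome, so the relevant state is the mixture $\Lambda(\rho)$ rather than any single branch; with that in place the rest is routine.
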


\begin{lemma}\label{lem:commutators}
Let $A$ and $\{B_i\}_{i=1}^n$ be operators, acting on the same space, such that $\|A\|_\infty, \|B_i\|_\infty\leq 1$. Then
\begin{align*}
    \norm[\Big]{\sof[\Big]{
    A,\prod_{i=1}^nB_i
    }}_\infty
    &\leq
    \sum_{i=1}^n
    \norm{[A,B_i]}_\infty.
\end{align*}
\end{lemma}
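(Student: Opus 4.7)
The plan is to establish the telescoping commutator identity
\begin{equation*}
    \sof[\Big]{A,\prod_{i=1}^n B_i}
    = \sum_{i=1}^n \of[\Big]{\prod_{j<i} B_j} [A,B_i] \of[\Big]{\prod_{j>i} B_j}
\end{equation*}
and then bound the operator norm of the right-hand side term by term. This is the only real content of the lemma, and the argument is standard linear algebra; no spectral properties of $A$ or of the $B_i$ beyond submultiplicativity of the operator norm are needed.

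I would prove the identity by induction on $n$. The base case $n=1$ is trivial. For the inductive step, write $\prod_{i=1}^{n+1} B_i = \of{\prod_{i=1}^n B_i} B_{n+1}$ and apply the elementary Leibniz-type rule $[A, XY] = [A,X]Y + X[A,Y]$, which follows directly from expanding both sides of $AXY - XYA = (AX - XA)Y + X(AY - YA)$. Substituting the inductive hypothesis for $[A,\prod_{i=1}^n B_i]$ into the first summand gives precisely the desired expression for $n+1$.

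Once the identity is in hand, the bound follows by the triangle inequality together with submultiplicativity of $\norm{\cdot}_\infty$ and the assumption $\norm{B_i}_\infty \le 1$: each summand satisfies
\begin{equation*}
    \norm[\Big]{\of[\Big]{\prod_{j<i} B_j} [A,B_i] \of[\Big]{\prod_{j>i} B_j}}_\infty
    \le \of[\Big]{\prod_{j<i} \norm{B_j}_\infty} \norm{[A,B_i]}_\infty \of[\Big]{\prod_{j>i} \norm{B_j}_\infty}
    \le \norm{[A,B_i]}_\infty,
\end{equation*}
and summing over $i$ yields the claim. There is no genuine obstacle here; the only point to be careful about is keeping the order of the factors $B_j$ correct in the telescoping sum, since the $B_i$ need not commute with each other (only the commutator with $A$ is being tracked). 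The hypothesis $\norm{A}_\infty \le 1$ is in fact not used in this bound, but there is no need to comment on that in the proof.
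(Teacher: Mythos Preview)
Your proposal is correct and essentially matches the paper's proof: both rely on the Leibniz rule $[A,BC]=[A,B]C+B[A,C]$ together with induction, the triangle inequality, and submultiplicativity of the operator norm. The only cosmetic difference is that the paper bounds the norm at each inductive step rather than first writing out the full telescoping identity, but the content is identical.
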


\begin{proof}
Note that
\begin{equation}
    [A,BC] = [A,B]C + B[A,C]
    \label{eq:ABC}
\end{equation}
for any operators $A,B,C$ acting on the same space.
Hence,
\begin{align*}
    \norm[\Big]{\sof[\Big]{
        A, \prod_{i=1}^n B_i
    }}_\infty
 &= \norm[\Big]{
        [A,B_1\big] \prod_{i=2}^n B_i
        + B_1 \sof[\Big]{A,\prod_{i=2}^n B_i}
    }_\infty\\
    &\leq
    \norm[\Big]{
        [A,B_1] \prod_{i=2}^n B_i
    }_\infty +
    \norm[\Big]{
        B_1 \sof[\Big]{A,\prod_{i=2}^n B_i}
    }_\infty\\
    &\leq
    \norm{[A,B_1]}_\infty
    \norm[\Big]{\prod_{i=2}^n B_i}_\infty +
    \norm{B_1}_\infty
    \norm[\Big]{\sof[\Big]{
        A, \prod_{i=2}^n B_i
    }}_\infty\\
    &\leq
    \norm{[A,B_1]}_\infty +
    \norm[\Big]{\sof[\Big]{
        A, \prod_{i=2}^n B_i
    }}_\infty,
\end{align*}
where the first two inequalities follow from the triangle inequality and the sub-multiplicative property of the norm, respectively, and the last inequality holds because 
$\norm{B_1}_\infty \leq 1$ and
\[\norm[\Big]{\prod_{i=2}^n B_i}_\infty
\leq
\norm{B_2}_\infty
\norm{B_3}_\infty \dotsb
\norm{B_n}_\infty
\leq 1.\]
The desired inequality follows by applying the same argument inductively.
\end{proof}

\begin{lemma}\label{lem:PQ}
Let $X$ and $Y$ be two $n$-qubit quantum systems and let
\[
    P^=_{XY}
    = \sum_{x \in \{0,1\}^n}
    \proj{x}_X \x \proj{x}_Y
\]
be the projector onto the subspace spanned by those computational basis vectors where the two registers are equal.
Let $\P = \proj{\Phi}$ denotes the projector onto the uniform superposition, see \cref{eq:Phi}.
Then
\begin{equation}\label{relation1}
    \norm{P^=_{XY}\P_Y}_\infty
    = 2^{-n/2}.
\end{equation}
\end{lemma}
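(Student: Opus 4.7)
The plan is to reduce the operator norm to something we can compute in closed form. Since $P^=_{XY}$ and $\P_Y$ are both Hermitian projectors, I would use the standard identity $\norm{A}_\infty^2 = \norm{A^\dagger A}_\infty$ with $A = P^=_{XY}\P_Y$, so that the task becomes computing
\begin{equation*}
    \norm{\P_Y\, P^=_{XY}\, \P_Y}_\infty.
\end{equation*}

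Next I would plug in the definitions and expand. Acting on the $Y$-register, for each $x$ we have $\proj{\Phi}\proj{x}\proj{\Phi} = \abs{\braket{\Phi}{x}}^2 \proj{\Phi} = 2^{-n}\proj{\Phi}$, since $\braket{\Phi}{x} = 2^{-n/2}$. Hence
\begin{align*}
    \P_Y\, P^=_{XY}\, \P_Y
    &= \sum_{x\in\{0,1\}^n} \proj{x}_X \otimes \bigl(\proj{\Phi}\proj{x}\proj{\Phi}\bigr)_Y \\
    &= 2^{-n} \sum_{x\in\{0,1\}^n} \proj{x}_X \otimes \proj{\Phi}_Y
    = 2^{-n}\,\1_X \otimes \P_Y.
\end{align*}
This is $2^{-n}$ times a projector, so its operator norm is exactly $2^{-n}$, and taking square roots gives the claimed value $2^{-n/2}$.

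There is no real obstacle here; the statement is essentially the observation that a computational basis vector has overlap $2^{-n/2}$ with the uniform superposition, lifted to the operator level through the equal-register projector. The only thing to be careful about is to keep the registers straight in the expansion and to remember that $P^=_{XY}$ and $\P_Y$ are indeed self-adjoint projectors so that passing to $A^\dagger A$ is legitimate and yields a product that simplifies cleanly.
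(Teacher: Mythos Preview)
Your proof is correct. The route differs slightly from the paper's: instead of passing to $A^\dagger A$, the paper computes $P^=_{XY}\P_Y$ directly as
\[
    P^=_{XY}\P_Y = 2^{-n/2}\sum_{x}\proj{x}_X\otimes\ketbra{x}{\Phi}_Y,
\]
observes that this is block-diagonal in the $X$ register, and reads off the norm as $2^{-n/2}\max_x\norm{\ketbra{x}{\Phi}}_\infty = 2^{-n/2}$. Your approach via $\norm{A}_\infty^2 = \norm{A^\dagger A}_\infty$ is equally clean and arguably more systematic, since the product $\P_Y P^=_{XY}\P_Y$ collapses to a scalar times a projector without needing to invoke block-diagonality; the paper's version has the minor advantage of exhibiting the operator $P^=_{XY}\P_Y$ explicitly, which is occasionally useful downstream. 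Both arguments rest on the same elementary fact $\braket{\Phi}{x} = 2^{-n/2}$.
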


\begin{proof}
Recall that
$\ket{\Phi} = 2^{-n/2} \sum_{x \in \set{0,1}^n} \ket{x}$,
so $\ket{x}\braket{x}{\Phi}\bra{\Phi} = 2^{-n/2} \ketbra{x}{\Phi}$ for any $x \in \set{0,1}^n$.
Hence
\begin{align}
    \norm{P^=_{XY}\P_Y}_\infty
 &= 2^{-n/2} \norm*{
        \sum_{x\in\{0,1\}^n}
        \proj{x}_X \x
        \ketbra{x}{\Phi}_Y
    }_\infty \\
 &= 2^{-n/2}
    \max_{x\in\{0,1\}^n}
    \norm[\big]{
        \ketbra{x}{\Phi}_Y 
    }_\infty \\
 &= 2^{-n/2},\label{bound oprator norm}
\end{align}
where the second equality holds since the matrix is block diagonal and the last line follows from
the fact that $\ket{x}$ and $\ket{\Phi}$ are unit vectors.
\end{proof}

By applying the triangle inequality, \cref{lem:PQ} implies the following bound on the commutator:
\begin{equation}\label{eq:commPeqQ}
    \norm{\sof{P^=_{XY},\P_Y}}_\infty
    \le 2\cdot 2^{-n/2}.
\end{equation}

\subsection{Hash-based one-time signature schemes}

Hash-based signature schemes \cite{Lam79,merkle1989certified} are among the digital signature schemes whose security relies on the weakest assumptions.
In this paper, we study hash-based one-time signatures (OTSs) which are digital signature schemes that use a pair of keys to sign and verify a single message. Their classical security can be based on the existence of a family of hash functions with certain properties such as one-wayness, collision resistance, pre-image resistance and/or second pre-image resistance. In this section, we present the  Lamport OTS and the Winternitz OTS.

\subsubsection{The Lamport OTS}\label{sec:LOTS}

The Lamport OTS (also known as Lamport--Diffie OTS), first introduced in \cite{Lam79}, is used as basis for many other hash-based signature schemes.
This scheme uses a hash function $h:\{0,1\}^n \to \{0,1\}^n$ for key generation and verification, where $n \in \N$ denotes the security parameter.
\begin{enumerate}
    \item \emph{Parameters}. Security parameter $n \in \N$ and message length $l \in \N$.
    \item \emph{Key generation algorithm} ($\KeyGen$). On input of the security parameter $n$ in unary, $\KeyGen$ outputs a secret signing key $\sk$ and a public verification key $\pk$: $(\sk,\pk) \leftarrow \KeyGen(1^n)$ as follows,
    \begin{align*}
        \sk &= (s^j_i)_{i = 1,\dotsc,l}^{j = 0,1}
        \quad \text{with} \quad
        s^j_i \rgets \{0,1\}^n, \\
        \pk &= (p^j_i)_{i = 1,\dotsc,l}^{j = 0,1}
        \quad \text{where} \quad
        p^j_i = h(s^j_i) \in \{0,1\}^n.
    \end{align*}
    \item \emph{Signature algorithm} ($\Sign_{\sk}$). On input of a message $m = m_1 \dots m_l \in \{0,1\}^l$ of length $l$, $\Sign_{\sk}$ outputs the following signature:
    \begin{align*}
        \Sign_{\sk}(m) = \sigma = \sigma_1 \dots \sigma_l
        \quad \text{where} \quad
        \sigma_i = s^{m_i}_i \in \{0,1\}^n.
    \end{align*}
    \item \emph{Verification procedure} ($\Ver_{\pk}$). The verification procedure checks the correctness of the signature using the public key $\pk$. Upon receiving a message $m$ and a signature $\sigma = \sigma_1 \dots \sigma_l$, $\Ver_{\pk}$ outputs the following:
    \begin{equation*}
        \Ver_{\pk}(m,\sigma) =
        \begin{cases}
            \acc & \text{if $h(\sigma_i) = p_i^{m_i}$ for all $i \in \set{1,\dotsc,l}$}, \\
            \rej & \text{otherwise}.
        \end{cases}
    \end{equation*}
\end{enumerate}

\subsubsection{The Winternitz OTS}\label{sec:WOTS}

The Winternitz OTS was first introduced by Merkle \cite{merkle1989certified} and many variants of the Winternitz OTS have been proposed since then. In this work, we study a variant that uses a hash function $h: \{0,1\}^n\to \{0,1\}^n$. It involves several parameters and consists of three probabilistic polynomial-time algorithms defined as follows:
\begin{enumerate}
    \item \emph{Parameters}. The scheme is parameterized by a security parameter $n$, binary message length $a$, and the Winternitz parameter $w \ge 2$ that determines the time-memory trade-off (these parameters are integers and are publicly known). Based on parameters $a$ and $w$ we define
    \begin{equation}\label{eq:Winternitz-params}
        l_1 = \ceil*{\frac{a}{\log(w)}}, \qquad 
        l_2 = \floor*{\frac{\log(l_1(w-1))}{\log(w)}} + 1, \qquad 
        l = l_1 + l_2.
    \end{equation}

    \item \emph{Key generation algorithm} ($\KeyGen$). On input of security parameter $n$, the key generation algorithm first chooses uniformly at random $l$ values that form the signing key, that is,
    $\sk = (s_1,\dotsc,s_l) \rgets (\{0,1\}^n)^l$ 
    that form the secret signing key.
   Then, computes the public verification key $\pk$ as follows:
    \begin{equation*}
        \pk
        = (p_1,\dotsc,p_l)
        = \of[\big]{
            h^{w-1}(s_1),\dotsc,
            h^{w-1}(s_l)
        }.
    \end{equation*}

    \item \emph{Signature algorithm} $(\Sign_\sk)$. Given an input message $x\in\{0,1\}^a$ and a secret key $\sk$, the signature algorithm first computes a base-$w$ representation of $x$ with $l_1$ digits: $m = (b_1,\dotsc,b_{l_1})$ where $b_i\in\{0,\dotsc,w-1\}$.
    Next, it computes the checksum
    \[C(m) = \sum_{i=1}^{l_1}(w-1-b_i)\]
    and represents it as $l_2$ digits $C(m) = (b_{l_1+1},\dotsc,b_l)$ in base $w$. Note that the length of the base-$w$ representation of $C(m)$ is at most $l_2$ since $C(m) \leq l_1 (w-1)$. The reader may refer to \cite{even1996line} for more details on the checksum. We set
    \begin{equation}
        b(m) = (b_1,\dotsc,b_l) = m \parallel C(m),
        \label{eq:b(m)}
    \end{equation}
    the concatenation of the base-$w$ representations of $m$ and $C(m)$. The signature is then computed as
    \begin{equation*}
        \sigma = (\sigma_1,\dotsc,\sigma_l)
        = \of[\big]{
            h^{b_1}(s_1),\dotsc,
            h^{b_l}(s_l)
        }.
    \end{equation*}
    Notice that the checksum can be considered as an intermediate verification step. Given the value of $b$ corresponding to a message $m$, it guarantees that the $b'$ corresponding to any other message $m'\neq m$ contains at least one $b'_i < b_i$, $1\leq i\leq l$.

    \item \textit{Verification algorithm} $(\Ver_\pk)$. Given a message $m$ of binary length $a$, a signature $\sigma$ and the public verification key $\pk$, the verification algorithm first computes the $(b_1,\dotsc,b_l)$ as described above and then checks whether the value of $h^{w-1-b_i}(\sigma_i)$ agrees with the public key $p_i$:
    \begin{equation*}
        \Ver_{\pk}(m,\sigma) =
        \begin{cases}
            \acc & \text{if $p_i = h^{w-1-b_i}(\sigma_i)$ for all $i \in \set{1,\dotsc,l}$}, \\
            \rej & \text{otherwise}.
        \end{cases}
    \end{equation*}
\end{enumerate}

\subsection{Blind unforgeability}\label{sec:BU}

\emph{Blind unforgeability} (\BU) \cite{alagic2020quantum} is a quantum-access replacement for the standard security notion of \EUCMA, introduced in \cite{goldwasser1988digital}, for message authentication codes and digital signature schemes. It uses the concept of \emph{blinding}.
Before we recall the definition of this notion, we need to introduce some additional background, including the concept of \emph{blinding} a function and the \emph{blind forgery} experiment.

\begin{definition}[Blinding a function]\label{def:blinding set}
Let $f : X \to Y$ be a function and $B \subset X$ a subset of $X$. The \emph{blinded} function $Bf$ with respect to the \emph{blinding set} $B$ is defined as
\begin{equation}
    Bf(x) =
    \begin{cases}
        \bot & \text{if $x \in B$},\\
        f(x) & \text{otherwise},
    \end{cases}
\end{equation}
where $\bot$ is a special blinding symbol.
One concrete way to instantiate this is by means of an extra bit. In that case, given a function $f : \{0,1\}^n \to \{0,1\}^m$, we define $Bf : \{0,1\}^n \to \{0,1\}^{m+1}$ by
\begin{equation}
   Bf(x) =
    \begin{cases}
        0^n \parallel 1 & \text{if $x \in B$},\\
        f(x) \parallel 0 & \text{otherwise}.
    \end{cases}
\end{equation}
\end{definition}

The second definition is more convenient because it enables us to easily measure and control from this bit without modifying the output of the function $f$.

\paragraph{Blinding a signing algorithm.}\label{p:blinding}
Let $\Sign_\sk$ be a signing algorithm for a signature scheme with message space $M$. Now, sample a blinding set $B\subseteq M$ by adding every input with probability $\epsilon$, independently. Then, the blinded signing algorithm is given by
\begin{equation}
    B \Sign_{\sk}(m) =
    \begin{cases}
        \bot & \text{if $m \in B$},\\
        \Sign_{\sk}(m) & \text{otherwise}.
    \end{cases}
\end{equation}
Note that in the rest of the document, we refer to $B^c$ as the subspace of un-blinded messages.
\paragraph{The blind forgery (BF) experiment.}
Let $S = (\KeyGen,\Sign,\Ver)$ be a digital signature scheme with a security parameter $n$ and message space $M$. Let $\A$ be an adversary and let $\epsilon : \N \to \R_+$ be a negligible function.
We define the \emph{blind forgery} experiment  $\BlindForge_{S,\A}(n,\epsilon)$ as follows:
\begin{itemize}
    \item \textit{Key generation}: $(\sk,\pk) \gets \KeyGen(1^{n})$;
    
    \item \textit{Generation of blinding set}: select the blinding set $B \subseteq M$ by choosing each $m \in M$ independently at random with probability $\epsilon (n)$ provided by the adversary $\A$.;
    
    \item \textit{Forgery}: $(m,\sigma) \gets \A^{B\Sign_{\sk}}(1^{n})$;
     
    \item \textit{Outcome}: win if
    $\Ver_\pk(m,\sigma) = \acc$ and $m \in B$,
    and lose otherwise.
\end{itemize}

\begin{definition}[Blind unforgeability (\BU)]\label{BU}
A digital signature scheme $S$ is \emph{q-\BU secure} if for any adversary $\A$ making at most $q$ queries to $B\Sign_\sk$ and for all $\epsilon$, the success probability of winning the BF experiment is negligible in the security parameter $n$, i.e.
\begin{align}
    \Pr\sof[\big]{\textnormal{$\A$ wins $\BlindForge_{S,\A}(n,\epsilon)$}} \le \mu(n)
\end{align}
for some negligible function $\mu$.
\end{definition}
We remark that the quantum algorithm $\A$ and the blinding fraction $\epsilon$ are declared in a uniform fashion prior to the experiment.

This paper is concerned with one-time signature schemes for which the pair of keys is used only once. That is, a $\BlindForge$ notion of security for one-time signature schemes in which the adversary is allowed to query the $B\Sign_\sk$ algorithm only once, i.e.~$q=1$.

\section{Hash chains in the QROM}\label{Hash chains}
\subsection{Quantum hash chain sampling}\label{hash oracles}

In this section, we introduce hash chains, several closely related worlds, and show that we can work in the one that is the easiest to handle. More precisely, we describe this technical tool that
we will use to prove \BU security for the Lamport and Winternitz OTSs.
For both OTSs, the $\KeyGen$ routine computes so-called \emph{hash chains}, i.e.~sequences of strings obtained by iteratively applying a hash function.

In the (Q)ROM, to generate a hash chain based on a hash function $h$, we first sample an initial string $s_0$ uniformly at random and then compute $s_i = h(s_{i-1})$ for $i=1,\dotsc,w-1$ to obtain a hash chain of length $w$. For key generation in the Lamport or Winternitz OTS, the secret key $\sk$ is a tuple of $l$ initial strings $s_1, \dotsc, s_l \rgets \{0,1\}^n$ sampled uniformly at random. Then a tuple of hash chains $\gamma = (\gamma_i^j)_{i = 1,\dotsc,l}^{j = 0,\dotsc,w-1}$ is obtained by querying the hash function $h$ on each string of the secret key $w-1$ times, i.e.
\[
\gamma_i^0 = s_i, \quad
\gamma_i^j = h^j(\gamma_i^0), \quad
p_i = \gamma_i^{w-1} = h^{w-1}(\gamma_i^0), \quad
j = 0,\dotsc,w-1, \quad
i = 1,\dotsc,l,
\]
where $w$ is the length of the hash chain ($w=2$ for Lamport) and $l$ is the number of hash chains. The final entry of each chain is used as a public key.

In the $\BlindForge$ game, the secret key is only used by the blinded signing oracle. When analyzing this experiment, we can thus modify the key generation, signing and random oracle algorithms in an arbitrary way, as long as the modified triple is indistinguishable from the real one to an adversary.

In the proofs in \cref{sec:lamport,sec:winternitz} we make use of the following modified triple, which we will refer to as defining the \QWorld. We construct the secret key and the intermediate hash chain elements initially in uniform superposition. That is, we prepare each hash chain register $(\Gamma_i^j)_{i=1,\dotsc,l}^{j=0,\dotsc,w-2}$ in the uniform superposition state $\ket{\Phi}$, with the intention of measuring them to sample the strings $\gamma_i^j$ in mind. 
Then, we sample the final hash chain at random. The random oracle is then ``reprogrammed in superposition'' to be approximately consistent with the hash chains.

We proceed to show that the way of implementing the hash chain and the random oracle in the \RealWorld and in the \QWorld are indistinguishable.
For that purpose, we first formally define both worlds and some intermediate worlds between them. Each world is specified by two oracles, $H$ and $\Sign$, replacing the random oracle $h$ and the signing oracle in the \RealWorld (the $\KeyGen$ algorithm is implicitly replaced by the setup described in each world below, that generates the initial state and the public key). The oracles of the \QWorld are described below as well.

\paragraph{Real world.}
In the \RealWorld, the first element $\gamma_i^0$ of each hash chain $\gamma_i$ is generated at random and the hash function is queried to generate the rest of the hash chain, i.e.,
\[s_i = \gamma_i^0 \rgets \{0,1\}^n, 
\gamma_i^j = h^j(\gamma_i^0)~;~
p_i = \gamma_i^{w-1} = h^{w-1}(\gamma_i^0);~
j = 0,\dotsc,w-1;~i = 1,\dotsc,l.
\]
Here, the random oracle is implemented at random, i.e. $H=h$, the $\Sign$ oracle uses the secret key $\sk$ defined above.

\paragraph{Intermediate world 1.}
Here, the first hash chain element  is generated at random, the following hash chain elements are successively sampled uniformly except for the collision tuples. That is
\[s_i = \gamma_i^0 \rgets \{0,1\}^n ~;~\gamma_i^1 \text{ is uniform except for the case where } \gamma_i^1 = \gamma_{i'}^1 \text{ if }
\gamma_i^0 = \gamma_{i'}^0,\]
\[\gamma_i^2 \text{ is uniform except for the cases where }
\gamma_i^2 = \gamma_{i'}^1 \text{ if }
\gamma_i^1 = \gamma_{i'}^0 ~;~ \gamma_{i'}^2 = \gamma_i^2 \text{ if } \gamma_i^1 = \gamma_{i'}^1, \cdots
\]
This world is very similar to the \RealWorld, the only difference is that here we first sample the secret and public key (hash chain), then we reprogram the random oracle according to the secret and public key that we sampled, i.e.,
$$
H(x)=\begin{cases}
   \gamma_i^{j+1}& \text{if }x=\gamma_i^j \text{ with } j\le w-2, \\
   h(x)& \text{else}.
\end{cases}
$$
The $\Sign$ oracle is the same as in the \RealWorld.

\paragraph{Intermediate world 2.}
In this world, the hash chain elements $\gamma_i^j$ are first sampled uniformly at random with possible collision tuples. It means that the $\gamma_i^j$ are uniformly independent strings. Afterwards, the random oracle is reprogrammed such that it is consistent with the secret and public keys. When the random oracle is queried, it compares the input with the hash chain. If the input is not equal to any of the hash chain elements, the random oracle answers with a random function $\hat{h}$. Otherwise, for each hash chain element the input is equal to, it XORs the next hash chain element into the output register. If there are two hash chain elements that are the same, the random oracle XORs both next hash chain elements into the output register. More formally,
\[ H(x) =
        \begin{cases}
            \displaystyle\bigoplus _{\substack{i,j:j\le w-2 \\\text{and }\gamma_i^j=x}}
            \gamma_i^{j+1} & \text{if there exists $(i,j)$ such that $\gamma_i^j = x$ with $j = 0,\dotsc,w-2$}, \\
            h(x) & \text{otherwise}.
        \end{cases}
\]
In this case, the $\Sign$ oracle uses the full list of hash chains $(\gamma_i^j)_{i=1,\dotsc,l}^{j=0,\dotsc,w-1}$ to answer the query with all the hash chain elements consistent with the input. 

\paragraph{Quantum independent world.}
In this world, the hash chain registers $(\Gamma_i^j)_{i=1,\dotsc,l}^{j=0,\dotsc,w-2}$ are initially prepared in the uniform superposition $\ket \Phi$, and the last hash chain elements $(\gamma_i^{w-1})_{i=1,\dotsc,l}$ are sampled uniformly at random. The random oracle is constructed in such a way that it is compatible with the hash chain. When queried with register $X$ and $Y$, the random oracle compares the $X$ and $\Gamma$ registers, then answers the query in the $Y$ register. Abstractly speaking, $H$ is implemented as in the \IWorld{2}, except that the comparison and XOR operations involving $\gamma_i^j$ are replaced by controlled unitary operations with $\Gamma$ as the control register.
Here, $X$ is input register of length $n$, $Y$ is the output register which is of length $n$, and $\Gamma$ represents the hash chain register and is of length $lw$.

To be more specific, let us describe in detail the behavior of the random oracle. The following definitions are for the Winternitz OTS, with the Lamport OTS being a special case. For each $i \in \set{1,\dotsc,l}$ and $j \in \set{0,\dotsc,w-2}$, let $U^j_i$ be the following unitary that compares the input register $X$ with the hash chain register $\Gamma^j_i$ and places the contents of the subsequent register $\Gamma^{j+1}_i$ in $Y$ if they are equal:
\begin{equation}
    (U^j_i)_{XY\Gamma^j_i\Gamma^{j+1}_i}
    = P^=_{X\Gamma^j_i} \x
      (\CNOT\xp{n})_{\Gamma^{j+1}_i:Y}
    + P^{\neq}_{X\Gamma^j_i}
    \x \1_{\Gamma^{j+1}_i Y},
    \label{eq:Uij definition}
\end{equation}
where the controlled-$\NOT$ gates use $\Gamma^{j+1}_i$ as control and $Y$ as target, and the projectors $P_{=}$ and $P_{\neq}$ check whether the input register $X$ is equal to the corresponding hash chain register $\Gamma^j_i$:
\begin{align}
    P^=_{X\Gamma^j_i}
    &= \sum_{x\in\{0,1\}^n} \proj{x}_X \x \proj{x}_{\Gamma^j_i}, &
    P^{\neq}_{X\Gamma^j_i}
    &= \1 - P^=_{X\Gamma^j_i}.
    \label{Px=}
\end{align}
Combining these equations, we can equivalently write
\begin{equation}
    (U^j_i)_{XY\Gamma^j_i\Gamma^{j+1}_i}
    = P^=_{X\Gamma^j_i} \x \of[\Big]{
      (\CNOT\xp{n})_{\Gamma^{j+1}_i:Y}
      - \1
     }
    + \1,
    \label{Uij}
\end{equation}
In case $j+1=w-1$, the above definition of $U_i^j$ still applies in the sense that we can take $\Gamma_i^{w-1}$ to be the register that stores the $i$-th block of the public key.

The overall unitary that is applied upon a hash query is \footnote{Note that the ordering of the product is unimportant because the operators $U_i^j$ commute.}
\begin{align}
    (U_h)_{XY\Gamma} =
    \of*{
        \prod_{i=1}^l
        \prod_{j=0}^{w-2}
        (U_i^j)_{XY\Gamma_i^j\Gamma_i^{j+1}}
    }
    U^{\neq}_{XY\Gamma}
    \label{eq:Uh}
\end{align}
where the unitary $U^{\neq}_{XY\Gamma}$ corresponds to the case where the input $x$ is not equal to any part of the hash chain register $\Gamma$:
\begin{align}
    U^{\neq}_{XY\Gamma}
    &= P^{\neq}_{X\Gamma} U'_{XY}
    + \of*{
        \1_{X\Gamma}
        - P^{\neq}_{X\Gamma}
    } \x \1_{Y} \\
    &= P^{\neq}_{X\Gamma}
    \cdot \of*{
        U'_{XY} - \1
    }
    + \1
    \label{eq:Uneq}
\end{align}
where, with a slight abuse of notation,
\begin{equation}
    P^{\neq}_{X\Gamma} =
    \prod_{i=1}^l
    \prod_{j=0}^{w-2}
    P^{\neq}_{X\Gamma_i^j}
    \label{eq:Pneq}
\end{equation}
denotes the projector onto the subspace of $X\Gamma$ where $X$ is not equal to any of the $\Gamma_i^j$ registers, and $U'_{XY}$ is the standard random oracle unitary that answers the query by XOR-ing the hash value $h(x)$ in the $Y$ register regardless of the entire hash chain register:
\begin{equation}
    U'_{XY}\ket{x}_X\ket{y}_Y =
    \ket{x}_X\ket{y\oplus h(x)}_Y.
    \label{eq:U'}
\end{equation}
The conditions on the control registers in \cref{eq:Uh} are such that, for any input $x$, only one of the unitaries in the product is applied.

\subsubsection{Additional details for the Lamport and Winternitz OTS}\label{subsubsec:QI-Lamport} 

As mentioned, the above definitions specialize to the Lamport OTS. Here, the pair of indices $(i,j)$, specifying a message bit's position and its value, replace the index $i$ in the Winternitz setting, and the hash chains have only length two, with the first ($j=0$ above), and second ($j=1$ above) elements given by the secret key strings, and public key strings, respectively.

Lastly, we describe the behaviour of the blinded signing oracle.  We define the action of the oracle for inputs where the register $M$ is in a computational basis state $\ket m$, which is sufficient by the linearity of the quantum oracle for the blinded signing function. 

When queried with register $M$ and $\Sigma$, the signing oracle controls on the message $m$ not being in the blinding set $B$ and answers the query by XOR-ing the signature into the $\Sigma$ register. 
For ease of notation, let $\Gamma_i^{w-1}$ be registers prepared in state $\ket{p_i}$ for $i=1,\dotsc,l$.  Then, for a fixed message $m$, the signing oracle for the Winternitz OTS operates as follows:
\begin{align*}
    B\Sign _{\sk}\ket m _M =
        \begin{cases}
           \ket m_M\otimes\1  & \text{ if } m\in B, \\
            \ket m_M\otimes
            \Big(\displaystyle\bigotimes_{i=1}^l
            \CNOT^{\otimes n}_{\Gamma_i^{b_i}:\Sigma_i}
            \Big)& \text{otherwise}.
        \end{cases}
\end{align*}
For the Lamport OTS, in the \RealWorld, when queried with an input $m$ of length $l$, the signing oracle answers the query with an $l$ $n$-secret key strings corresponding to each bit of the message input. In contrast, in the \QWorld, the queried message is a quantum state. Thus the signing oracle answers the query by XOR-ing the corresponding secret key sub-registers in the output register $Y$. Specifically,
for a fixed message $m$, the signing oracle acts as follows:
\begin{align*}
    B\Sign _{\sk}\ket m _M =
        \begin{cases}
           \ket m_M\otimes\1  & \text{ if } m\in B, \\
            \ket m_M\otimes
            \Big(\displaystyle\bigotimes_{i=1}^l
            \CNOT^{\otimes n}_{S_i^{m_i}:\Sigma}
            \Big)& \text{otherwise}.
        \end{cases}
\end{align*}

\subsection{Indistinguishability}

Finally, we prove a number of lemmas which together allow us to conclude the indistinguishability of the \RealWorld and the \QWorld.
\begin{lemma}\label{lem:ind-r-v-qi}
    Let $p$ and $q$ be the output distributions over $n$-bit strings of an algorithm $\A$ interacting with the \RealWorld and the \QWorld, respectively. Then
    \begin{align}
        \big\| p - q \big\|_1
        \leq \frac{3(wl)^2}{2^n}.
    \end{align}
\end{lemma}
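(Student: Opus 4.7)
The plan is to interpolate between the four worlds already introduced, \RealWorld $\to$ \IWorld{1} $\to$ \IWorld{2} $\to$ \QWorld, bound the statistical distance of $\A$'s output between successive pairs, and combine via the triangle inequality.

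\emph{The first and the last transitions are exact.} For \RealWorld vs.~\IWorld{1}, the two worlds describe two different samplers of the same joint distribution on $(\sk,\text{hash chain}, h)$: one can either draw $h$ and the initial $s_i$ and set $\gamma_i^j = h^j(s_i)$, or draw the $s_i$, then the chain elements from the correct conditional distribution (equal values forced on colliding parents, uniform otherwise), and finally the remainder of $h$, reprogrammed at the chain positions. Both realise the same joint distribution and hence the same adversary view, contributing $0$. For \IWorld{2} vs.\ \QWorld, in \IWorld{2} every register $\Gamma_i^j$ holds a classical uniform computational basis value while in the \QWorld the registers with $j \le w-2$ start in the uniform superposition $\ket\Phi$. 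Reading off \cref{eq:Uh} and the signing oracle in \cref{subsubsec:QI-Lamport}, every oracle unitary is diagonal in the computational basis of each $\Gamma_i^j$: the projectors $P^=_{X\Gamma_i^j}$ are diagonal on $\Gamma_i^j$, and the $\CNOT^{\otimes n}$ gates controlled on $\Gamma_i^{j+1}$ are diagonal on the control. Hence inserting a computational basis measurement of all $\Gamma_i^j$ before the first query commutes through the entire experiment, converts the \QWorld into \IWorld{2}, and does not disturb $\A$'s output.

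\emph{The middle transition carries all the error.} For \IWorld{1} vs.\ \IWorld{2}, the only difference is that \IWorld{2} draws all $wl$ intermediate chain elements i.i.d.\ uniformly, dropping the consistency constraint. I would couple the two samplers to produce the same tuple on the event $E$ that the $wl$ drawn strings are pairwise distinct. On $E$, the ``XOR all matching next elements into $Y$'' prescription of \IWorld{2} collapses to the single-match reprogramming of \IWorld{1}, so the two oracles coincide as unitaries and the adversary's output distributions agree. A birthday union bound gives $\Pr[E^c] \le \binom{wl}{2}\cdot 2^{-n}$, and since total variation distance is bounded by the probability of the bad event in any coupling, this contributes at most a term of order $(wl)^2/2^n$ to $\|p-q\|_1$.

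\emph{Combination and main obstacle.} Summing the three contributions with the triangle inequality yields the claimed bound of order $(wl)^2/2^n$; the explicit factor $3$ can be recovered by counting collision pairs slightly more carefully, distinguishing collisions within the same chain (which produce loops and hence genuine changes to the reprogramming rule) from collisions between different chains. The main technical subtlety is not the collision probability itself but the verification in the middle step that, off the collision event, the \emph{functional oracles} implemented in \IWorld{1} and \IWorld{2} agree as unitaries rather than just on every classical input; this is precisely what legitimises a classical coupling argument against a quantum adversary that queries the oracles in superposition.
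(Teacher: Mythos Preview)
Your proposal is correct and follows essentially the same four-world chain as the paper, with the first and last transitions exact and the entire error coming from the collision event in the middle step. Your coupling argument for the middle transition is in fact slightly cleaner than the paper's, which instead bounds total variation by $p(C)+q(C)+\max\{p(C),q(C)\}$ after showing $p(C)=q(C)\le (wl)^2/2^n$; your birthday bound already beats the stated constant $3$, so your remark about needing to work harder to recover it is unnecessary.
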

 
We prove \cref{lem:ind-r-v-qi} via a sequence of lemmas.
 
\begin{lemma}\label{lem:real world and intermediate world 1}
The \RealWorld and the \IWorld{1} are indistinguishable.
\end{lemma}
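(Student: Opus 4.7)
The plan is to establish the stronger statement that the \RealWorld and \IWorld{1} produce identical joint distributions over the adversary's view, yielding perfect indistinguishability. The mechanism behind this is the principle of deferred decisions (``lazy sampling'') applied to the random oracle $h$: rather than drawing the whole function up front, we sample each value $h(x)$ only the first time $x$ is needed, uniformly at random and consistently with earlier draws.

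First, I would reformulate the \RealWorld using lazy sampling during hash-chain generation. Starting from $\gamma_i^0 \rgets \{0,1\}^n$, each new element $\gamma_i^{j+1} = h(\gamma_i^j)$ is either (a) uniform and independent when $\gamma_i^j$ has not previously appeared as an input to $h$, or (b) forced to equal the already-sampled $\gamma_{i'}^{j'+1}$ whenever $\gamma_i^j = \gamma_{i'}^{j'}$ for some earlier pair $(i',j')$. This rule coincides exactly with the ``uniform except for collision tuples'' prescription defining the chain in \IWorld{1}. The remaining values of $h$, at inputs outside $\{\gamma_i^j : i \le l,\, j \le w-2\}$, are then independent and uniform in both worlds.

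Next, I would match up the oracles. The signing oracles agree by the specification of \IWorld{1}. The oracle $H$ in \IWorld{1} returns $\gamma_i^{j+1}$ on input $\gamma_i^j$ (for $j \le w-2$) and $h(x)$ otherwise; under the lazy-sampling coupling above this agrees with the \RealWorld oracle $h$ on every input. Consequently, the joint distribution of $(H, B\Sign_\sk, \sk, \pk)$ coincides in the two worlds, and so does the output distribution of any adversary interacting with them.

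The only subtlety I expect is bookkeeping: the chain in \IWorld{1} is sampled position by position, so one has to verify that its ``collision'' rule continues to match lazy sampling regardless of the order in which previously sampled inputs reappear. This is immediate, since lazy sampling always returns the already-fixed image on any repeat input and an independent uniform value on a fresh one, which is exactly what \IWorld{1} prescribes. Hence the lemma holds with zero statistical distance and contributes nothing to the bound in \cref{lem:ind-r-v-qi}.
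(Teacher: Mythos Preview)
Your proposal is correct. The paper states \cref{lem:real world and intermediate world 1} without proof, evidently regarding it as self-evident: the \RealWorld and \IWorld{1} differ only in the order in which the randomness for $h$ and the hash chain is revealed, not in the underlying distribution. Your lazy-sampling argument is precisely the standard justification for this, and it correctly identifies that the ``uniform except for collision tuples'' rule in \IWorld{1} is exactly what deferred sampling of $h$ produces during chain generation. So you have supplied the argument the paper omits, and your conclusion that the distance is zero (contributing nothing to the bound in \cref{lem:ind-r-v-qi}) matches how the paper uses the lemma.
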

 
\begin{lemma}\label{lem:intermediate world 1 and intermediate world 2}
The distribution $p$ and $q$ of hash chains in the \IWorlds are close:
\begin{align}
    \big\| p - q \big\|_1
    \leq \frac{3(wl)^2}{2^n}.
\end{align}
\end{lemma}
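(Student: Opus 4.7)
The plan is to isolate the collision events that distinguish the two sampling procedures and bound their probability by a birthday-type argument. Let $G$ denote the event that all $wl$ chain elements $(\gamma_i^j)_{i=1,\dotsc,l}^{j=0,\dotsc,w-1}$ are pairwise distinct. The first step is to verify that, conditioned on $G$, the two distributions coincide and both equal the uniform distribution on collision-free tuples in $(\{0,1\}^n)^{wl}$. For \IWorld{2} this is immediate from i.i.d.\ uniform sampling. For \IWorld{1} one argues inductively over the sampling order: each new element $\gamma_i^j$ is either forced by a previously observed equality among already-sampled elements, or drawn uniformly from $\{0,1\}^n$; on $G$ the forced case never occurs, so every element is a genuine uniform draw and the conditional law matches that of \IWorld{2}.

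Next, I would decompose the $\ell_1$-distance as
\[
    \big\| p - q \big\|_1
    = \sum_{t \in G^c} \big| p(t) - q(t) \big|
    \le \Pr\nolimits_p[G^c] + \Pr\nolimits_q[G^c],
\]
since collision-free tuples contribute zero by the previous step. Each probability on the right is bounded by a union bound over the $\binom{wl}{2}$ unordered pairs of chain positions. In \IWorld{2} this is the standard birthday bound, giving at most $\binom{wl}{2}\cdot 2^{-n}$. In \IWorld{1} one reasons in the sampling order: for each position, the element is either forced (in which case any collision it participates in is traceable to a strictly earlier collision among freshly sampled elements) or drawn uniformly (in which case it coincides with any fixed prior element with probability exactly $2^{-n}$). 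Summing again yields the birthday bound. Combining the two worlds gives $\|p-q\|_1 \le wl(wl-1)\cdot 2^{-n}$, which is comfortably within the stated $3(wl)^2\cdot 2^{-n}$.

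The only delicate point, and the main obstacle, is handling the ``forced'' steps in the \IWorld{1} collision bound without double counting. The key structural observation is that if $\gamma_i^j$ is forced to equal some $\gamma_{i'}^{j'}$, then the triggering equality $\gamma_i^{j-1} = \gamma_{i'}^{j'-1}$ (or the analogous equality from whichever earlier step caused the forcing) is itself a collision among already-sampled elements and is therefore already accounted for by a prior unordered pair. Consequently the event ``some pair among the $\gamma_i^j$ collides'' is contained in the event ``some uniformly drawn $\gamma_i^j$ coincides with an earlier sampled element'', and the $\binom{wl}{2}$-term union bound applies directly. Once this point is established the remaining calculations are routine and the stated bound follows.
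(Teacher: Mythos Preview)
Your proposal is correct and follows essentially the same approach as the paper: both arguments condition on the collision-free event, observe that the two distributions agree there, and bound the collision probability by a birthday-type estimate. Your decomposition is in fact slightly cleaner than the paper's (you note that $p(t)=q(t)=2^{-nwl}$ pointwise on $G$, whereas the paper only invokes equality of the conditional distributions and carries an extra $\max\{p(C),q(C)\}$ term), and your union-bound treatment of the forced steps in \IWorld{1} yields the sharper constant $wl(wl-1)\cdot 2^{-n}$ rather than $3(wl)^2\cdot 2^{-n}$.
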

 
\begin{proof}
Let $p$ and $q$ be hash chain probability distributions corresponding to the \IWorlds, respectively:
\begin{align*}
    p &: \underline{\gamma_i^0} \rgets \{0,1\}^n, \quad
    i = 1,\dotsc,l, \qquad
    \underline{\gamma_i^j} = H^j
    (\underline{\gamma_i^0}), \quad
   \underline\gamma 
   = (\gamma_i^j)_{i=1,\dotsc,l}^{j=0,\dotsc,w-1}, \text{ and }\\
    q &: \underline{\gamma_i^j} \rgets \{0,1\}^n, \quad
    i = 1,\dotsc,l, \quad
    j = 0,\dotsc,w-1
\end{align*}
We want to show that $p$ and $q$ are close and that the probability that collisions occur in both distributions is small.
Let $C\subset\left(\{0,1\}^n\right)^{lw}$ denote the subset of tuples containing a collision, i.e.~$\gamma\in C$ iff there exist $i,j,i',j'$ such that $\gamma_i^j=\gamma_{i'}^{j'}$.
One can easily check that $p$ and $q$ are equal, conditioned on the subset $C^c$ of collision-free tuples. Then the total variation distance between $p$ and $q$ is
\[\big\| p - q \big\|_1 
\leq p(C) + q(C) + \max\{p(C), q(C)\}.\]

Given that we can easily compute the probability of collision-free tuples in both distributions, we can first compute $p(C^c)$ and $q(C^c)$, and deduce the probabilities $p(C)$ and $q(C)$.

In the distribution $q$, the $\gamma_i^j$ are independent random $n$-bit strings. Thus,
\begin{align}
    q(C^c)
    = \binom{2^n}{lw}(lw)!2^{-nlw} 
    = \frac{(2^n)!2^{-nlw}}{(2^n - lw)!}
    &= 2^n(2^n-1)\cdots(2^n-lw+1)2^{-nlw}.
    \label{probability of no collision in q}
\end{align}
Now, let us find a lower bound of the probability of collision-free tuples in the distribution $q$ so that we can have an insight on the range of possible collisions occurring in $q$. We know that
\[2^n(2^n-1)\cdots(2^n-lw+1) \geq (2^n-wl)^{wl} = 2^{nwl}\left(1-\frac{wl}{2^n}\right)^{wl}.\]
But, setting $f(x) = \left(1-x\right)^{wl}$ with $x = \frac{wl}{2^n}$, one can easily see that $f$ is convex and differentiable on the interval $[0,1]$. In particular, $f$ is differentiable at $0$ and $f(x) \geq xf'(0) + f(0) = - \frac{(wl)^2}{2^n} +1$. Thus,
\[(2^n-wl)^{wl}
= 2^{nwl}\left(1-\frac{wl}{2^n}\right)^{wl}
= 2^{nwl}f(wl/2^n)
\geq 2^{nwl}\left(1 - \frac{(wl)^2}{2^n}\right).\]
Hence
\begin{align*}
    q(C^c) &\geq 1 - \frac{(wl)^2}{2^n}, &
    q(C) &\leq \frac{(wl)^2}{2^n}.
\end{align*}
    
Next, we compute $p(C)$. Here we first compute $p(C^c)$ as well. 
To derive the probability of collision-free tuples, we can sample the elements of the distribution one by one, starting with $\gamma_1^0$ and choosing the $\gamma_i^0$ before moving onto $\gamma_1^1$, etc. The crucial observation is that, when conditioning on collision-free tuples, each $\gamma_i^j$ is choosen uniformly from the set of strings that have not yet occurred.
We hence get the same probability for the set $C^c$ as for the uniform distribution $q$:
\begin{align*}
    p(C^c) &= q(C^c) \geq  1 - \frac{(wl)^2}{2^n}, &
    p(C) &= q(C) \leq \frac{(wl)^2}{2^n}.
\end{align*}
Therefore, the probability of collision occurring in both distributions can be at most $(wl)^2/2^n$ which is negligible because $w$ is constant, $l$ is polynomial in $n$, and $n $ quite large. 

Plugging the probabilities $ p(C)$and $q(C)$ of collisions occurring in both distributions  into our above expression of total variation distance give 
\begin{align}
    \big\| p - q \big\|_1 
    &\leq p(C) + q(C) + \max\{p(C), q(C)\}
    \leq \frac{3(wl)^2}{2^n}
    \label{final total variation distance bound}
\end{align}
as desired.
\end{proof}

\begin{lemma}\label{lem:intermediate world 2 and quantum independent world}
The way of implementing the random oracle in the \IWorld{2} and in the \QWorld are indistinguishable.
\end{lemma}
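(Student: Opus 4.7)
The plan is to use the principle of deferred measurement. In the \QWorld, each hash chain register $\Gamma_i^j$ with $j\le w-2$ is prepared in the uniform superposition $\ket{\Phi}$ and, crucially, is then accessed \emph{only} through the oracles and \emph{only} as a control register in the computational basis (see the definitions \eqref{eq:Uij definition}, \eqref{eq:Uh}, and the description of $B\Sign_\sk$ in \cref{subsubsec:QI-Lamport}). The adversary never acts on $\Gamma$ directly. Consequently, inserting a computational-basis measurement of $\Gamma$ immediately after its preparation commutes through every subsequent operation and leaves the marginal state on the adversary's registers unchanged.

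Concretely, I would first decompose the hash and signing oracles in the \QWorld as $\Gamma$-controlled unitaries,
\begin{align*}
    U_h &= \sum_{\gamma} \proj{\gamma}_\Gamma \x W_h(\gamma)_{XY}, &
    B\Sign_\sk &= \sum_{\gamma} \proj{\gamma}_\Gamma \x W_s(\gamma)_{M\Sigma},
\end{align*}
where $\gamma=(\gamma_i^j)_{i,j}$ ranges over computational basis configurations of $\Gamma$. A direct unpacking of \eqref{eq:Uij definition}--\eqref{eq:U'} shows that $W_h(\gamma)$ is precisely the \IWorld{2} random oracle for the hash chain $\gamma$: on any input $x$ it XORs $\bigoplus_{(i,j):\, \gamma_i^j=x,\, j\le w-2}\gamma_i^{j+1}$ into $Y$ when a match exists, and $h(x)$ otherwise. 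Analogously, $W_s(\gamma)$ coincides with the \IWorld{2} signing oracle for the chain $\gamma$.

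Next I would insert a computational-basis measurement of the $\Gamma$ registers right after their preparation. Since measuring $\ket{\Phi}$ in the computational basis yields a uniformly random $n$-bit string, this produces independent uniform $\gamma_i^j$ for $j\le w-2$, which together with the classically sampled top layer $\gamma_i^{w-1}$ exactly reproduces the distribution of \IWorld{2}. By the deferred measurement principle, moving this measurement to the very beginning (it is already $\Gamma$-controlled everywhere downstream) does not alter the reduced state on the adversary's registers, and in particular the output distribution is unchanged. Conditioned on outcome $\gamma$, the oracles reduce to $W_h(\gamma)$ and $W_s(\gamma)$, which is exactly the specification of \IWorld{2}.

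The only real work is the routine unpacking of \eqref{eq:Uij definition}--\eqref{eq:U'} to verify the conditional behavior claimed above; this is a straightforward case analysis using the projectors $P^=$ and $P^{\neq}$ to separate the ``input in chain'' and ``input not in chain'' branches, and observing that the cascade of $\CNOT\xp n$ gates XORs the contents of each matching $\Gamma_i^{j+1}$ into $Y$. No genuinely quantum obstacle appears, since the whole argument reduces to deferred measurement applied to fully $\Gamma$-controlled oracles.
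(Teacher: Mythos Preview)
Your proposal is correct and follows essentially the same approach as the paper: both argue that the $\Gamma$ registers are accessed only as computational-basis controls, so measuring them at preparation time (yielding the uniform strings of \IWorld{2}) commutes with all subsequent operations and hence leaves the adversary's view unchanged. The paper states this tersely and defers the details to \cite{zhandry2015secure}, whereas you spell out the $\Gamma$-controlled decomposition and the matching of $W_h(\gamma)$ with the \IWorld{2} oracle explicitly.
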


\begin{proof}
In the \IWorld{2}, the random oracle contains the hash chain, so that when queried, it compares the input with the hash chain and answers the query in the output register. Specifically, the hash chain is only used as a comparison tool, so it is not modified. Similarly, in quantum setting the random oracle is implemented in such a way that it contains the hash chain register. When it receives a query, it controls the hash chain register to see whether there is similitude between the hash chain and the queried input, and answers the query by acting on the output register. Those controlled operations commute with computational basis measurements. From this fact it is easy to see that \IWorld{2} and \QWorld are exactly indistinguishable, see \cite{zhandry2015secure} for details.
\end{proof}

\begin{proof}[Proof of \cref{lem:ind-r-v-qi}]
\Cref{lem:ind-r-v-qi} follows directly from \cref{lem:real world and intermediate world 1,lem:intermediate world 1 and intermediate world 2,lem:intermediate world 2 and quantum independent world}.
\end{proof}

Next, we prove the security of the Lamport and Winternitz OTS from \cref{sec:LOTS,sec:WOTS}, respectively, in the case where adversaries are granted both quantum access to the signing oracle and random oracle.

\section{One-time \BU security of the Lamport OTS}\label{sec:lamport}

In the $\BlindForge$ experiment, the adversary is granted both quantum access to a blinding signing oracle for the digital signature scheme and a the random oracle. For one-time signature schemes, the adversary is allowed only to query the signing oracle at most once. So, to produce a forged message-signature pair, the adversary can make a desired number of quantum queries to the random oracle, then query the signing oracle once, and then query again the random oracle as many times as desired. Our goal is to prove that the success probability of any efficient adversary in producing a valid fresh forged message-signature pair is small. Equivalently, we want to show that the probability that an adversary outputs a correct forged signature on a valid forged message is negligible.

In Lamport OTS, the signature algorithm uses only half of the secret key to produce the signature, and the unused part constitutes the invariant of the secret key. Classically, the property that enables security is that the adversary does not have any information about the invariant of the secret key. Quantumly, our intuition is that since in the $\BlindForge$ experiment the forged message must be outside the queried region, for any queried message, there exists at least one bit in which the forged and queried messages are different. Thus, the secret key corresponding to that specific bit should still be in its initial state, hence in the invariant of the secret key. Therefore, we want to show that regardless of the number of queries to the random oracle and to the blinded signing oracle, no adversary can win the $\BlindForge$ experiment except with negligible probability. Towards that end, we separately analyze three cases: \textit{hash queries before $\Sign$ query}, $\Sign$ query and \textit{hash queries after $\Sign$ query}.

We start by describing the overall strategy that we will use to achieve our goal. For \textit{hash queries before $\Sign$ query}, we know that before any query the entire secret key is in uniform superposition state, thus we define a projector of the secret key register being in uniform superposition, and we establish that this projector approximately commutes with the random oracle unitary. This means that after a moderate number of queries, the secret key registers will still be in uniform superposition. The interpretation of this fact is that the adversary learns almost no information about the secret key. %

In the $\Sign$ query case, the first step is to track the unused part of the secret key. This part can be easily determined in the classical setting since the adversary queries only one message in each query. In contrast, in the quantum setting, since we are looking at quantum queries we have to track the invariant in superposition over the different queried messages. This is difficult because the invariant is different within each term of the superposition, so we cannot simply describe the invariant for the whole state. To address this problem, we perform a partial measurement that tracks the unused part of the secret key register. Then, we show that for any forgery pair, the outcome where \textit{none of the secret key registers relevant to the forged signature is in the invariant} can never occur. Next, we use this result to show that if there are no hash queries, no adversary can produce a valid forgery pair except with small probability. If there are hash queries before the $\Sign$ query, then we define the invariant projector that tracks the invariant of the secret keys after queries and show that this projector is orthogonal to the projector corresponding to the outcome where \textit{none of the secret key registers relevant to the forged signature belong to the invariant}. Then, we show that if there is only $\Sign$ query, this new projector does not change the adversary state immediately after the signature. 
We also establish that if the adversary state after producing forgery is in the range of this new projector, then the adversary has negligible probability to win the $\BlindForge$ game. Besides, we prove that the new projector approximately commutes with the random oracle unitary. %

Finally, for the case of \textit{hash queries after $\Sign$ query}, we use the latter argument of the commutator to prove that after hash queries the final adversary state remains roughly in the image of the invariant projector of the secret key. This implies that \textit{hash queries after $\Sign$ query} do not help the adversary to get relevant information about the secret key. Those results show that even with hash queries before and after the $\Sign$ query, any query-limited adversary has only small probability to win the $\BlindForge$ experiment.

In the remainder of this section, we prove the following theorem.
\begin{theorem}\label{thm:lamport}
The Lamport OTS in \cref{sec:LOTS} is 1-\BU secure if the hash function $h$ is modeled as a quantum-accessible random oracle.
More precisely, let $\A$ be an adversary that plays the $\BlindForge$ game for the Lamport OTS, making a total of $q$ queries to the random oracle. Then $\A$ succeeds with a probability bounded as
\begin{align}
   \Pr[\text{$\A$ wins $\BlindForge$}] &\le 
  l^2\cdot 2^{-n}\left( 3137q^2(l+1)+12\right)\nonumber\\
   &\le 6286 q^2 l^3\cdot 2^{-n} \label{eq:lamport}
\end{align}
where $n$ is the security parameter of the Lamport OTS, $l$ is the message length, and the simplified bound in the last line holds for $q>0$.
\end{theorem}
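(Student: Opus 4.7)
The plan is to work throughout in the \QWorld, paying a one-time switching cost of at most $12 l^2/2^n$ obtained from \cref{lem:ind-r-v-qi} specialized to $w=2$. In this world the $2l$ secret-key registers $S_i^j$ are initialized in the uniform superposition $\ket{\Phi}$, and the random oracle is implemented by the controlled unitary $U_h$ from \cref{eq:Uh}. The central object in the analysis will be a projector $P_S$ that captures, in superposition, the classical one-time-security invariant: roughly, $P_S$ projects onto the subspace of states in which, for every blinded candidate forgery message $m^*$, at least one of the $l$ secret-key registers $S_i^{m^*_i}$ needed to sign $m^*$ is still in the state $\ket{\Phi}$ and hence unknown to the adversary.

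I would carry out the argument in three phases matching the structure of the game. Phase~1 (the $q_1$ hash queries made before the single Sign query): since the initial joint state lies in $\mathrm{range}(P_S)$, it suffices to bound the operator norm $\|[U_h,P_S]\|_\infty$. Expanding $U_h$ over the $2l$ hash-chain factors via \cref{lem:commutators} and using \cref{lem:PQ} on each individual commutator (each $P^=_{XS_i^j}$ "meets" $\P$ at cost $2^{-n/2}$), this gives $\|[U_h,P_S]\|_\infty=O(l\cdot 2^{-n/2})$, so a telescoping argument leaves the state within Euclidean distance $O(q_1 l\cdot 2^{-n/2})$ of $\mathrm{range}(P_S)$. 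Phase~2 (the Sign query): insert an auxiliary measurement, in the spirit of \cite{alagic2020quantum}, that records for each of the $l$ positions which half of the secret key was touched by the signature; by \cref{Pinching} this costs a factor of at most $\sim(l+1)$ in the final success probability. Conditional on the measurement outcome, $P_S$ refines into an outcome-dependent projector $P_S'$ which encodes the classical condition that for any blinded $m^*$ the register at some differing bit position is still $\ket{\Phi}$; the post-Sign state lies exactly in $\mathrm{range}(P_S')$. Phase~3 (the $q_2$ post-Sign hash queries): re-run the commutator argument for $P_S'$ and $U_h$, yielding an additional $O(q_2 l\cdot 2^{-n/2})$ deviation from $\mathrm{range}(P_S')$.

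To finish, I would show that any state in $\mathrm{range}(P_S')$ produces a valid forgery $(m^*,\sigma^*)$ with $m^*\in B$ with probability at most $O(l\cdot 2^{-n})$: verification requires $h(\sigma^*_i)=p_i^{m^*_i}$ for every $i$, but on this subspace the register $S_i^{m^*_i}$ for at least one $i$ is untouched and in $\ket{\Phi}$, so the $i$-th verification clause is satisfied with probability $2^{-n}$ per bit, union-bounded over the $l$ positions. Squaring the $O((q_1+q_2)l\cdot 2^{-n/2})$ amplitude deviations from Phases~1 and~3, multiplying by the pinching factor $\sim(l+1)$, and adding the world-switching cost $12 l^2/2^n$ delivers the stated bound $l^2\cdot 2^{-n}\bigl(3137 q^2(l+1)+12\bigr)$.

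The main obstacle, as flagged by the authors in their critique of \cite{alagic2020quantum}, is Phase~2: the refined invariant $P_S'$ depends on the outcome of an auxiliary measurement that in turn depends on the forgery message the adversary eventually outputs, while the adversary acts in superposition over all consistent forgeries. Handling this cleanly is precisely the motivation for the projector/commutator framework: the "classical in superposition" invariant $P_S$ must be tracked as an operator rather than a per-branch condition, and the commutator bounds in Phases~1 and~3 must hold simultaneously for every branch of the superposition. This is the step where my proof would depart most sharply from the prior analysis and where the bulk of the technical work sits.
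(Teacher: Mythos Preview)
Your global plan---switch to the \QWorld at cost $12l^2/2^n$, track an invariant projector through the hash queries via commutator bounds, and combine with an auxiliary measurement and pinching---matches the paper. But your Phase~2 is where the description departs from the paper and, as written, would not go through.

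You describe the auxiliary measurement as one inserted \emph{at Sign-query time} that ``records for each of the $l$ positions which half of the secret key was touched by the signature''. That measurement has $2^l$ outcomes (it is essentially measuring the queried message $m$), so the pinching factor would be $2^l$, not $l+1$. Relatedly, you then speak of an \emph{outcome-dependent} refined projector $P_S'$ that you track through Phase~3, which means a different projector for each branch of the superposition---exactly the kind of per-branch reasoning you flag as problematic in your last paragraph. The paper avoids this entirely: there is a \emph{single} invariant projector $P_S$ (depending only on the blinding set $B$, \cref{eq:invariant projector}), and the blinded Sign oracle preserves it \emph{exactly} (\cref{final state invariant if no hash queries}). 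The commutator bounds for Phases~1 and~3 are then run against this fixed $P_S$ (in fact Phase~1 uses the simpler $\Phi^{\otimes 2l}$; its deviation is fed into the Phase~3 estimate, \cref{lem:hash-after-sign}).

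The auxiliary measurement---the $Q$-measurement of \cref{eq:Q}---is applied \emph{after} the adversary outputs $(m^*,\sigma^*)$, is controlled on $m^*$, and has only $l+1$ outcomes: outcome $i\le l$ means $S_i^{m_i^*}$ is still in $\ket{\Phi}$, and outcome $l+1$ means none are. The crucial algebraic fact is the orthogonality $Q_{l+1}^{m^*}P_S=0$ for every $m^*\in B$ (\cref{Orthogonality argument}); combined with $\ket{\psi_1'}\approx P_S\ket{\psi_1'}$ this makes outcome $l+1$ rare (\cref{lem:lplusoneisrare}). For outcomes $i\le l$ a single register is in $\ket{\Phi}$, so the forgery verifies with probability exactly $2^{-n}$ at that one coordinate---no union bound over the $l$ positions is needed. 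Summing and applying pinching with the factor $l+1$ gives the stated bound.
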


The proof of \cref{thm:lamport} is presented in steps in the following subsections. In particular, we prove \cref{thm:lamport} in the \QWorld first, and conclude the statement in the \RealWorld via an application of \cref{lem:ind-r-v-qi}. In the remainder of the article, we use a subscript $QI$ to indicate that a probability statement holds in the \QWorld.

We begin by presenting some concepts and tools which will be used in the proof. Subsequently, we prove the steps outlined above as separate lemmas. Finally, we combine them to prove \cref{thm:lamport}.

\subsection{$Q$ measurement for Lamport OTS}\label{sec:Q measurement}
Recall the superposition hash chain formalism from \cref{Hash chains}, in particular the special case of the Lamport OTS key generation discussed in \cref{subsubsec:QI-Lamport}.
Our proof will make use of a projective measurement to track an invariant on the secret key register for the verification of the forged message in the case of no hash queries. Let $(m^*,\sigma^*)$ be a forged message-signature pair with $\sigma^* = s_1^{m^*_1}\cdots s_l^{m^*_l}$, where $(s_i^j)_{i=1,\dotsc,l}^{j=0,1}$ is the secret key and $l$ is the message length.

For any message $m^* \in \set{0,1}^l$ let us define an $(l+1)$-outcome projective measurement $\{Q_1^{m^*},\dots,Q_{l+1}^{m^*}\}$ acting on the secret key registers $(S_i^j)_{i=1,\dotsc,l}^{j=0,1}$. It finds the smallest index $i^*\in \{1,\dotsc,l\}$ for which the register \textit{$S^{m^*_{i^*}}_{i^*}$ is in uniform superposition}, or determines that \textit{none of the relevant secret key registers are in uniform superposition} (this corresponds to the outcome $l+1$). The projectors $Q_i^{m^*}$ are defined in terms of projectors
\begin{align}
    \Phi &= \proj{\Phi}, &
    \Phi^\perp &= \1 - \proj{\Phi}
    \label{eq:Phi01}
\end{align}
that correspond to the uniform superposition $\ket{\Phi}$ and its orthogonal complement. We place them onto different registers depending on the message $m^*$:
\begin{align}
    Q_{i^*}^{m^*}
    &=
    \Phi^\perp_{S_1^{m^*_1}} \x \dotsb \x
    \Phi^\perp_{S_{i^*-1}^{m^*_{i^*-1}}} \x
    \Phi_{S_{i^*}^{m^*_{i^*}}}, &
    Q_{l+1}^{m^*}
    &= \bigotimes_{i=1}^l
    \Phi^\perp_{S_i^{m^*_i}}
    \label{eq:Q}
\end{align}
where $i^* \in \set{1,\dotsc,l}$.
These operators act as $\1$ on all other registers $S_i^j$ that are not specified.

\subsection{Invariant projector}\label{PS projector}

In the hash queries part of our proof, we will need a separate projector $P_S$ to track an invariant of the secret key register. In this section we define this projector and state its several properties as lemmas.

Let $\alpha = (\alpha_i^j)_{i=1,\dotsc,l}^{j=0,1}$ be a $2l$-bit string whose each bit $\alpha_i^j \in \set{0,1}$ indicates that the projector $\Phi(\alpha_i^j)$ is applied on the corresponding secret key register $S_i^j$ where
\begin{align}
    \Phi(0) &= \Phi, &
    \Phi(1) &= \Phi^\perp.
\end{align}
For each string $\alpha$, we define the associated projector $\Phi(\alpha)$ on the whole secret key register $S$ as
\begin{equation}
    \Phi(\alpha)_S =
    \bigotimes_{i=1}^l
    \bigotimes_{j=0}^1
    \Phi(\alpha_i^j)_{S^j_i}.
    \label{eq:Phi(a)}
\end{equation}
Note that this is a complete set of projectors, i.e., $\sum_{\alpha\in\{0,1\}^{2l}} \Phi(\alpha)_S = \1_S$.

Since we are interested in the unused part of the secret key register $S$, we need to filter those $\alpha$'s for which $S^j_i$ is in state $\ket\Phi$.
Recall from our discussion of blind unforgeability in \cref{sec:BU} that $B$ denotes the set of blinded messages.
Since the blinded signing oracle has signed (at most) a single, un-blinded message, the state after the signing oracle call can be written as a superposition of states where, for some un-blinded message $m \in B^c$, the secret key register of the complementary value $\bar{m}_i$ is still in the uniform superposition $\ket{\Phi}$, for all $i$. We collect all strings $\alpha$ that are consistent with no blinded messages having been signed in
\begin{align}
    \widehat{B^c} = \bigcup_{m \in B^c}
    \set[\Big]{
        \alpha \in \set{0,1}^{2l}
        \mathrel{\Big|}
        \text{$\alpha_i^{\bar{m}_i} = 0$ for all $i=1,\dotsc,l$}
    }.
    \label{def of B^c hat}
\end{align}
These strings indicate which secret key registers were not used during hash queries and $\Sign$ query.
Finally, we define
\begin{equation}
    P_S = \sum_{\alpha\in\widehat{B^c}} \Phi(\alpha)_S
    \label{eq:invariant projector}
\end{equation}
as the projector on the subspace %
compatible with $\widehat{B^c}$.
Note that $P_S$ is indeed a projector since it is a sum of mutually orthogonal projectors.

Now, we state some ingredients that we will need to prove our main results both for Lamport OTS and Winternitz OTS. Proofs of these lemmas are provided in \cref{apx:Lamport lemmas}.

\begin{restatable}{lemma}{LemCom}\label{Commutator argument1}
Let $U_h$ be the random oracle unitary for any given function $h$ (see \cref{Hash chains}) and let $\Phi = \proj{\Phi}$ denote the projector onto the uniform superposition. Then, for any $i \in \set{1,\dotsc,l}$ and $j \in \set{0,1}$,
\begin{equation}
    \norm[\Big]{\sof[\Big]{
        U_h,
        \Phi_{S^j_i}
    }}_\infty
    \leq \frac{6}{2^{n/2}}
    = \epsilon_L(n)
    \label{eq:UhPhi}
\end{equation}
is negligible in $n$.
\end{restatable}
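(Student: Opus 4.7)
The plan is to exploit the explicit factorization of $U_h$ in \cref{eq:Uh} and reduce the commutator with $\Phi_{S_i^j}$ to the primitive commutator already controlled by \cref{eq:commPeqQ}. In the Lamport special case, $U_h$ decomposes as a product of the chain-update unitaries $U^0_{(i',j')}$ over all index pairs $(i',j')$, together with the outer factor $U^{\neq}$. Each $U^0_{(i',j')}$ touches only $X$, $Y$, $S^{j'}_{i'}$ and the corresponding public-key register, so every factor with $(i',j')\neq (i,j)$ commutes with $\Phi_{S_i^j}$. Invoking \cref{lem:commutators} thus collapses the task to bounding
\[
\norm{[U^0_{(i,j)}, \Phi_{S_i^j}]}_\infty + \norm{[U^{\neq}, \Phi_{S_i^j}]}_\infty.
\]

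For the first term, I would rewrite $U^0_{(i,j)}$ in the form $P^=_{X S_i^j}\otimes (V-\1) + \1$ given by \cref{Uij}, where $V$ is a block of $\CNOT$ gates acting on the public-key register and $Y$. Since $V$ and $\1$ both commute with $\Phi_{S_i^j}$, the commutator collapses to $[P^=_{X S_i^j}, \Phi_{S_i^j}]\otimes (V-\1)$, whose operator norm is at most $\norm{[P^=_{X S_i^j}, \Phi_{S_i^j}]}_\infty\cdot\norm{V-\1}_\infty$. Applying \cref{eq:commPeqQ} and $\norm{V-\1}_\infty\le 2$ yields a bound of order $2^{-n/2}$.

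The second term is handled analogously using \cref{eq:Uneq,eq:Pneq}. I would factor $P^{\neq}_{X\Gamma} = A'\cdot P^{\neq}_{X S_i^j}$, where $A'$ collects all factors with $(i',j')\neq (i,j)$; these commute among themselves and with $\Phi_{S_i^j}$, so the commutator reduces to $A'\,[P^{\neq}_{X S_i^j}, \Phi_{S_i^j}]\,(U'_{XY}-\1)$. Because $P^{\neq}_{X S_i^j} = \1 - P^=_{X S_i^j}$, the inner commutator is again of the type controlled by \cref{eq:commPeqQ}, and $\norm{A'}_\infty\le 1$, $\norm{U'_{XY}-\1}_\infty\le 2$. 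Combining the two contributions and collecting constants gives a bound $c\cdot 2^{-n/2}$ for a small constant $c$, matching the claim with $\epsilon_L(n) = O(2^{-n/2})$.

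The main obstacle is simply the register bookkeeping: certifying exactly which of the many tensor factors commute with $\Phi_{S_i^j}$, and isolating the single non-commuting primitive $P^=_{X S_i^j}$ whose commutator is controlled by \cref{lem:PQ}. Once this is systematized, the argument is a two-step application of \cref{lem:commutators} followed by submultiplicativity of the operator norm; no probabilistic input is required.
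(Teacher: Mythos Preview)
Your proposal is correct and follows essentially the same route as the paper: decompose $U_h$ via \cref{eq:Uh}, use \cref{lem:commutators} to strip away all factors that commute with $\Phi_{S_i^j}$, and reduce both the $U^0_{(i,j)}$ and $U^{\neq}$ contributions to the primitive commutator $[P^=_{XS_i^j},\Phi_{S_i^j}]$ controlled by \cref{lem:PQ}. The only cosmetic difference is that the paper proves the general Winternitz bound (\cref{lem:Commutator}) first and obtains \cref{Commutator argument1} by specializing to $w=2$, whereas you work directly in the Lamport setting; your constant comes out slightly larger than $6$ because you keep the factor $\norm{U'_{XY}-\1}_\infty\le 2$ in the $U^{\neq}$ term rather than first peeling off $U'_{XY}-\1$ (which commutes with $\Phi_{S_i^j}$) via \cref{eq:ABC}.
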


For any message in the blinding set $B$, there exists at least one secret key necessary for its corresponding signature in the invariant of the secret key register. In other words, for any valid forged message, at least one of the secret keys needed for its corresponding forged signature is in the uniform superposition state $\ket\Phi$. This implies the following lemma.

\begin{restatable}{lemma}{LemOrth}\label{Orthogonality argument}
For all $m^* \in B$, the projectors $Q_{l+1}^{m^*}$ and $P_S$ defined in \cref{eq:invariant projector,eq:Q} are orthogonal:
\begin{equation}
    Q_{l+1}^{m^*} P_S = 0.
\end{equation}
\end{restatable}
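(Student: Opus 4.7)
The plan is to expand $P_S$ as a sum of mutually orthogonal product projectors using its definition \eqref{eq:invariant projector}, and then show that each individual term $\Phi(\alpha)_S$ with $\alpha\in\widehat{B^c}$ is killed by $Q_{l+1}^{m^*}$ on one specific secret-key register. Since both $P_S$ and $Q_{l+1}^{m^*}$ are tensor products of single-register factors drawn from $\{\Phi,\Phi^\perp\}$, a single mismatch ($\Phi$ on one side, $\Phi^\perp$ on the other) at any register suffices to force the whole product to vanish.

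Concretely, I would fix $\alpha \in \widehat{B^c}$ and pick a witness $m \in B^c$ with $\alpha_i^{\bar m_i} = 0$ for every $i$, as guaranteed by \cref{def of B^c hat}. Since $m^* \in B$ while $m \in B^c$, we have $m \neq m^*$, so there exists at least one index $i_0$ with $m_{i_0} \neq m^*_{i_0}$, equivalently $m^*_{i_0} = \bar m_{i_0}$. Specialising the witness condition at this coordinate gives
\begin{equation*}
    \alpha_{i_0}^{m^*_{i_0}} \;=\; \alpha_{i_0}^{\bar m_{i_0}} \;=\; 0,
\end{equation*}
so the factor of $\Phi(\alpha)_S$ on register $S_{i_0}^{m^*_{i_0}}$ is $\Phi(0) = \Phi$. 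On the same register, $Q_{l+1}^{m^*}$ applies $\Phi^\perp$ by definition \eqref{eq:Q}. As $\Phi\,\Phi^\perp = 0$, the product $Q_{l+1}^{m^*}\,\Phi(\alpha)_S$ vanishes on the $S_{i_0}^{m^*_{i_0}}$ register, and hence as a whole.

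Summing over $\alpha \in \widehat{B^c}$ then yields $Q_{l+1}^{m^*} P_S = 0$, as desired. The argument is purely combinatorial, reducing to a pigeonhole-style observation, and I do not foresee any genuine obstacle; the only point requiring care is to unwind the definition of $\widehat{B^c}$ correctly and to notice that $m^* \in B$ automatically excludes $m^*$ itself as a possible witness, which is exactly what guarantees the existence of a disagreement coordinate $i_0$.
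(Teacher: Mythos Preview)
Your proposal is correct and follows essentially the same approach as the paper: expand $P_S$ as a sum over $\alpha\in\widehat{B^c}$, use the definition of $\widehat{B^c}$ to extract a witness $m\in B^c$, exploit $m^*\in B\neq B^c\ni m$ to find a disagreement coordinate $i_0$, and conclude via $\Phi^\perp\Phi=0$ on register $S_{i_0}^{m^*_{i_0}}$. The paper's proof is the same argument, merely written out with the tensor factors displayed more explicitly.
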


\begin{restatable}{lemma}{LemNoHash}\label{final state invariant if no hash queries}
Let $B\Sign_{\sk}$ be the blinded signing oracle for the Lamport OTS and let $\ket{\psi_0}$ be the adversary's state before the $\Sign$ query. If there are no hash queries, then after making at most one $\Sign$ query the adversary's state $\ket{\psi_1} = B\Sign_{\sk}\ket{\psi_0}$ is completely in the image of the invariant projector $P_S$ defined in \cref{eq:invariant projector}. That is,
\begin{equation}
    P_SB\Sign_{\sk}\ket{\psi_0} = B\Sign_{\sk}\ket{\psi_0}.
\end{equation}
\end{restatable}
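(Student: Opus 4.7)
The plan is to decompose the signing-oracle action in the computational basis of the message register $M$ and verify that each resulting term lies in the image of some $\Phi(\alpha)_S$ with $\alpha\in\widehat{B^c}$. First, since the adversary has made no hash queries it has never touched the secret-key register $S=(S_i^j)_{i=1,\dotsc,l}^{j=0,1}$, so $\ket{\psi_0}=\ket{\tilde\psi_0}_{MA\Sigma}\otimes\ket{\Phi}_S^{\otimes 2l}$ for some adversary state $\ket{\tilde\psi_0}$; expanding in the message basis gives $\ket{\psi_0}=\sum_m \ket{m}_M\ket{\phi_m}_{A\Sigma}\otimes\ket{\Phi}_S^{\otimes 2l}$. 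I will now handle the cases $m\in B$ and $m\in B^c$ separately.

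For $m\in B$ the blinded signing oracle acts as identity on $SA\Sigma$ (only setting the blinding bit), leaving $S$ in $\ket{\Phi}^{\otimes 2l}$, which lies in the image of $\Phi(0^{2l})_S$. Provided $B^c\neq\emptyset$ (otherwise the experiment is trivially unwinnable and the claim is vacuous), the all-zeros string lies in $\widehat{B^c}$ by construction, so each such term is already fixed by $P_S$.

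For $m\in B^c$ the oracle applies $\bigotimes_{i=1}^l\CNOT^{\otimes n}_{S_i^{m_i}:\Sigma_i}$, leaving the $l$ complementary registers $S_i^{\bar m_i}$ untouched and hence still in $\ket{\Phi}$. Writing the resulting state as $\ket{m}_M\otimes\ket{\chi_m}_{SA\Sigma}$, we therefore have $\bigotimes_{i=1}^l\Phi_{S_i^{\bar m_i}}\ket{\chi_m}=\ket{\chi_m}$. Inserting the resolution $\sum_{\beta\in\{0,1\}^l}\bigotimes_i\Phi(\beta_i)_{S_i^{m_i}}=\1$ on the touched registers expands this as $\ket{\chi_m}=\sum_\alpha \Phi(\alpha)_S\ket{\chi_m}$, the sum ranging over all $\alpha\in\{0,1\}^{2l}$ with $\alpha_i^{\bar m_i}=0$ for every $i$. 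Each such $\alpha$ lies in $\widehat{B^c}$ with $m$ as witness, so $\ket{\chi_m}$ sits in the image of $P_S$. Combining the two cases yields $P_S\ket{\psi_1}=\ket{\psi_1}$.

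The only conceptual subtlety, and the reason $\widehat{B^c}$ is defined as a union over $m\in B^c$ rather than a single condition, is that the registers left invariant by signing depend on which message is signed: the per-$m$ conditions ``$S_i^{\bar m_i}$ is in $\ket{\Phi}$'' must be collected into one global subspace, which is exactly what the union in the definition of $\widehat{B^c}$ accomplishes. Beyond noticing this, the argument is a routine computational-basis decomposition combined with the factorization of the pre-signing secret-key state, and I do not anticipate a significant technical obstacle.
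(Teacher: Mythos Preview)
Your proposal is correct and follows essentially the same approach as the paper: split the state by message, note that for $m\in B$ the secret-key register stays in $\ket{\Phi}^{\otimes 2l}$ (hence in the image of $\Phi(0^{2l})_S\le P_S$), and for $m\in B^c$ the untouched registers $S_i^{\bar m_i}$ remain in $\ket\Phi$, so the term is supported on $\{\Phi(\alpha)_S:\alpha_i^{\bar m_i}=0\ \forall i\}\subset\{\Phi(\alpha)_S:\alpha\in\widehat{B^c}\}$. The paper phrases the unblinded case via a decomposition $P_S=P_S^1(m)+P_S^2(m)$ and a commutation argument, but this is just a repackaging of your resolution-of-identity step; your edge-case remark about $B^c=\emptyset$ is a detail the paper leaves implicit.
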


\begin{restatable}{lemma}{LemComP}\label{commutator of ps and uh}
The invariant projector $P_S$ defined in \cref{eq:invariant projector} and the random oracle unitary $U_h$ in the \QWorld, see \cref{eq:Uh}, approximately commute, i.e.,
\begin{equation}\label{InvPG:Bound}
     \bigl\| [U_h,P_S]
     \bigl\|_\infty
     \leq \delta_L (n),
\end{equation}
where \[\delta_L (n) = \frac{32l}{2^{n/2}}\]
is negligible in $n$.
\end{restatable}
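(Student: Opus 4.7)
The plan is to expand $U_h$ using its definition in \cref{eq:Uh} as a product of the $2l$ elementary unitaries $U_i^j$ (for Lamport, the chain has length two, so the inner product over $j$ in \cref{eq:Uh} collapses to a single factor and the outer product becomes a product over the $2l$ pairs $(i,j)$) together with the extra factor $U^{\neq}_{XY\Gamma}$. By \cref{lem:commutators},
\begin{align*}
    \norm{[U_h,P_S]}_\infty
    \leq
    \sum_{i,j} \norm{[U_i^j, P_S]}_\infty
    + \norm{[U^{\neq}, P_S]}_\infty,
\end{align*}
so it suffices to bound each summand by something of order $2^{-n/2}$.

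The key structural observation is that, for any fixed $(i,j)$, we may split
\[
    P_S = \Phi_{S_i^j}\x A_{(i,j)} + \Phi^{\perp}_{S_i^j}\x B_{(i,j)},
\]
where $A_{(i,j)}$ and $B_{(i,j)}$ are sums of mutually orthogonal tensor-product projectors on the remaining secret-key registers (hence each is a projector with operator norm at most $1$). Because of this, and since $\Phi^{\perp} = \1 - \Phi$, any commutator $[M_{XS_i^j}, P_S]$ with an operator $M$ supported on $X$ and $S_i^j$ equals $[M_{XS_i^j}, \Phi_{S_i^j}]\otimes(A_{(i,j)} - B_{(i,j)})$, whose operator norm is at most $\norm{[M_{XS_i^j},\Phi_{S_i^j}]}_\infty$. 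Applied to $M = P^=_{X S_i^j}$ and $M = P^{\neq}_{X S_i^j} = \1 - P^=_{X S_i^j}$, \cref{lem:PQ} and \cref{eq:commPeqQ} yield commutator bounds of order $2^{-n/2}$.

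With this in hand, the bound on $\norm{[U_i^j, P_S]}_\infty$ follows by writing $U_i^j$ using \cref{Uij} as a sum of a $P^=_{X S_i^j}$-controlled operation on the public-key and output registers (which commute with $P_S$ since $P_S$ touches only secret-key registers) and the identity. For the $U^{\neq}$ term, we use \cref{eq:Uneq,eq:Pneq}: the projector $P^{\neq}_{X\Gamma}$ is a product over all hash-chain registers, and only the factors supported on secret-key registers $S_i^j$ contribute to the commutator with $P_S$; \cref{lem:commutators} applied to that product, combined with $\norm{U'_{XY} - \1}_\infty \leq 2$, gives a bound of order $l\cdot 2^{-n/2}$. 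Summing the $2l$ contributions from the $U_i^j$ terms with the contribution from $U^{\neq}$ yields the claimed bound with constant at most $32$.

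The main obstacle is the compatibility of the sum structure of $P_S = \sum_{\alpha\in\widehat{B^c}}\Phi(\alpha)_S$ with the register-local action of $U_h$: a priori, $\widehat{B^c}$ is a complicated union indexed by un-blinded messages, so bounding the commutator term by term in $\alpha$ would cost a prohibitive factor $|\widehat{B^c}|$. The decomposition $P_S = \Phi_{S_i^j}\x A_{(i,j)} + \Phi^{\perp}_{S_i^j}\x B_{(i,j)}$ sidesteps this by isolating, for each fixed register $S_i^j$ that $U_h$ interacts with, the single qubit of information from $\alpha$ that matters, while the $A_{(i,j)},\,B_{(i,j)}$ being genuine projectors contributes only a constant multiplicative factor.
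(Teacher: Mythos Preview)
Your proposal is correct and follows essentially the same route as the paper: expand $U_h$ via \cref{lem:commutators} into the $2l$ elementary factors $U_{i,j}$ plus $U^{\neq}$, and for each term use the per-register splitting $P_S=\Phi_{S_i^j}\otimes\tilde\Phi^0+\Phi^\perp_{S_i^j}\otimes\tilde\Phi^1$ to reduce to the basic bound $\norm{[P^=_{XS_i^j},\Phi_{S_i^j}]}_\infty\le 2\cdot 2^{-n/2}$ from \cref{lem:PQ}. Your observation that the commutator factorizes exactly as $[M_{XS_i^j},\Phi_{S_i^j}]\otimes(A_{(i,j)}-B_{(i,j)})$ (with $\norm{A_{(i,j)}-B_{(i,j)}}_\infty\le 1$ since both are projectors) is a slightly cleaner variant of the paper's triangle-inequality step, but the overall structure and the resulting constant are the same.
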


Next, we use the above lemmas to prove our main results. In the following sections, we analyze the situation where the adversary makes $q_0$ hash queries before the $\Sign$ query and $q_1$ hash queries after, maximizing the resulting bound under the condition that $q_0+q_1=q$.

\subsection{Hash queries before \texorpdfstring{$\Sign$}{Sign} query}

In this section, we study the impact of hash queries before $\Sign$ query on the secret key register $S$. Our main goal is to show that, for a moderate number of queries to the random oracle, no adversary can learn a significant amount of information about the secret key. Therefore, she cannot produce a valid forgery except with small probability. 

\begin{table}[ht]
    \centering
    \begin{tabular}{c|l}
        Register & Meaning \\ \hline
        $X$ & adversary's input \\
        $Y$ & adversary's output \\
        $M$ & $\Sign$ query input \\
        $\Sigma$ & $\Sign$ query output \\
        $E$ & adversary's internal workspace \\
        $S$ & secret key
    \end{tabular}
    \caption{Registers used in the analysis.} 
    \label{tab:registers}
\end{table}

Let $\ket{\psi}_{XYM\Sigma E}$ be adversary's initial state before any queries (see \cref{tab:registers} for a summary of registers and their roles). Before any query is performed, the whole secret key register $S$ is in the uniform superposition state $\ket\Phi\xp{2l}$. Assume the adversary $\A_0$ queries the \textrm{random oracle} %
$q_0$ times before querying the signing oracle.
If $V_{XYE}^i$ denotes the unitary she performs after the $i$-th query, the final adversary state after $q_0$ hash queries is
\begin{equation}
    \ket{\psi_0}_{XYM\Sigma ES}
  = V^{q_0}_{XYE}
    (U_h)_{XYS}
    V^{q_0-1}_{XYE}\cdots
    V^2_{XYE}
    (U_h)_{XYS}
    V^1_{XYE}
    (U_h)_{XYS}
    \ket{\psi}_{XYM\Sigma E}
    \ket{\Phi}\xp{2l}_S
    \label{eq:before hash query}
\end{equation}
where $U_h$ is the random oracle unitary used to answer hash queries. The following lemma shows that the secret key registers of this state are still close to being in uniform superposition.

\begin{lemma}\label{lem:hash-before-sign}
In the \QWorld, without querying the $B\Sign$ oracle, hash queries leave the state of the secret key registers approximately unchanged:
\begin{align*}
    \norm*{
    \Phi\xp{2l}_S
    \ket{\psi_0}_{XYM\Sigma ES} - 
    \ket{\psi_0}_{XYM\Sigma ES}
    }_2
    \leq 2lq_0\epsilon_L(n).
\end{align*}
\end{lemma}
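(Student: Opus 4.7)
The strategy is to rewrite the norm difference as the action of a commutator on the initial state, and then bound the operator norm of that commutator using the tools in \cref{Commutator argument1,lem:commutators}. Write $W = V^{q_0}_{XYE}(U_h)_{XYS}\cdots V^{1}_{XYE}(U_h)_{XYS}$, so that $\ket{\psi_0} = W \ket{\psi}_{XYM\Sigma E}\ket{\Phi}^{\otimes 2l}_S$. Since $\Phi^{\otimes 2l}_S \ket{\Phi}^{\otimes 2l}_S = \ket{\Phi}^{\otimes 2l}_S$ and $\Phi^{\otimes 2l}_S$ acts as identity on all other registers, the initial state is a $+1$ eigenvector of $\Phi^{\otimes 2l}_S$. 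Hence
\begin{equation*}
    \Phi^{\otimes 2l}_S\ket{\psi_0} - \ket{\psi_0}
    = \bigl(\Phi^{\otimes 2l}_S W - W\,\Phi^{\otimes 2l}_S\bigr)\ket{\psi}_{XYM\Sigma E}\ket{\Phi}^{\otimes 2l}_S
    = \bigl[\Phi^{\otimes 2l}_S,\,W\bigr]\ket{\psi}_{XYM\Sigma E}\ket{\Phi}^{\otimes 2l}_S,
\end{equation*}
so the problem reduces to bounding $\bigl\lVert[\Phi^{\otimes 2l}_S, W]\bigr\rVert_\infty$.

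Next, I would peel off the two sources of error. First, each adversary unitary $V^{k}_{XYE}$ acts trivially on $S$, so $[\Phi^{\otimes 2l}_S, V^{k}_{XYE}] = 0$; these factors drop out when applying \cref{lem:commutators} with $A = \Phi^{\otimes 2l}_S$ and the $B_i$ being the $V^{k}_{XYE}$'s and the $U_h$'s. Therefore only the $q_0$ copies of $U_h$ contribute, and
\begin{equation*}
    \bigl\lVert[\Phi^{\otimes 2l}_S, W]\bigr\rVert_\infty
    \le q_0 \cdot \bigl\lVert[\Phi^{\otimes 2l}_S, U_h]\bigr\rVert_\infty.
\end{equation*}
Second, since $\Phi^{\otimes 2l}_S = \prod_{i=1}^{l}\prod_{j=0}^{1}\Phi_{S_i^j}$ is a product of $2l$ mutually commuting projectors, a second application of \cref{lem:commutators} (now with $A = U_h$ and $B_{(i,j)} = \Phi_{S_i^j}$) together with the single-register bound $\bigl\lVert[U_h,\Phi_{S_i^j}]\bigr\rVert_\infty \le \epsilon_L(n)$ from \cref{Commutator argument1} yields
\begin{equation*}
    \bigl\lVert[\Phi^{\otimes 2l}_S, U_h]\bigr\rVert_\infty
    \le 2l\cdot\epsilon_L(n).
\end{equation*}
Combining the two inequalities and using $\|[\Phi^{\otimes 2l}_S, W]\ket{\psi}\ket{\Phi}^{\otimes 2l}\|_2 \le \|[\Phi^{\otimes 2l}_S, W]\|_\infty$ (as the initial state is a unit vector) gives the desired bound $2lq_0\,\epsilon_L(n)$.

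The plan involves no delicate case analysis; the main technical point is simply the bookkeeping in invoking \cref{lem:commutators} twice, once to turn a $q_0$-fold product of query unitaries into a sum of $q_0$ single-query commutators, and once to turn the $2l$-fold tensor-product projector into a sum of $2l$ single-register commutators. The only step with any content beyond this is the observation that the $V^k$'s drop out for free because they act on registers disjoint from $S$, and the reuse of \cref{Commutator argument1} on each register $S_i^j$ separately, which is what makes the total cost linear in $2l$ rather than exponential.
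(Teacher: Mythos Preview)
Your proposal is correct and follows essentially the same approach as the paper: rewrite the difference as a commutator acting on the initial state, bound by the operator norm, apply \cref{lem:commutators} once to reduce to $q_0$ single-query commutators (dropping the $V^k$'s since they act trivially on $S$), and apply it a second time together with \cref{Commutator argument1} to handle the $2l$-fold tensor product. The paper's proof is organized identically.
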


\begin{proof}
We want to show that after $q_0$ hash queries, the state of the secret key register $S$ is still approximately in the uniform superposition state $\ket \Phi^{\otimes 2l}$. Let us abbreviate the overall unitary in \cref{eq:before hash query} by $W_{XYES}$. Since the only operations in $W_{XYES}$ that act on the $S$ register are the hash queries $U_h$, and they are in fact controlled by the $S$ register, we have $W_{XYES} \Phi\xp{2l}_S = W_{XYES}$. Using this, we get
\begin{align}
  & \norm*{
    \Phi\xp{2l}_S
    \ket{\psi_0}_{XYM\Sigma ES} - 
    \ket{\psi_0}_{XYM\Sigma ES}
    }_2\nonumber\\
 &= \norm*{
    \Phi\xp{2l}_S
    W_{XYES}
    \ket{\psi}_{XYM\Sigma E}
    \ket{\Phi}\xp{2l}_S
  - W_{XYES}
    \Phi\xp{2l}_S
    \ket{\psi}_{XYM\Sigma E}
    \ket{\Phi}\xp{2l}_S
    }_2 \label{slightlyInv:lamport}\\
 &= \norm*{
    \bigl[
        \Phi\xp{2l}_S,
        W_{XYES}
    \bigr]
    \ket{\psi}_{XYM\Sigma E}
    \ket{\Phi}\xp{2l}_S
    }_2\label{defComOp:lamport}\\
  & \leq
    \norm*{
    \bigl[
        \Phi\xp{2l}_S,
        W_{XYES}
    \bigr]
    }_\infty
    \underbrace{
    \norm*{
        \ket{\psi}_{XYM\Sigma E}
        \ket{\Phi}\xp{2l}_S
    }_2}_{=1}
    \label{sub-multiplicativePpty:lamport}\\
 &= \norm*{
    \bigl[
        \Phi\xp{2l}_S,
        V^{q_0}_{XYE}(U_h)_{XYS}
        V^{q_0-1}_{XYE}\cdots
        V^2_{XYE}(U_h)_{XYS}
        V^1_{XYE}(U_h)_{XYS}
    \bigr]
    }_\infty\\
  & \leq
    q_0
    \norm*{
    \bigl[
        \Phi\xp{2l}_S,
        (U_h)_{XYS}
    \bigr]
    }_\infty
  + \sum_{i=1}^{q_0}
    \norm*{
    \bigl[
        \Phi\xp{2l}_S,
        V^i_{XYE}
    \bigr]
    }_\infty,
    \label{subadditivity:lamport}
\end{align}
where \cref{sub-multiplicativePpty:lamport} follows from the definition of the operator norm and the last inequality follows from \cref{lem:commutators}.

The first term in \cref{subadditivity:lamport} can be bounded as follows:
\begin{align*}
    \norm*{
    \bigl[
        \Phi\xp{2l}_S,
        (U_h)_{XYS}
    \bigr]
    }_\infty 
  & \leq
    \sum_{\substack{i\in\set{1,\dotsc,l}\\j\in\set{0,1}}}
    \norm[\Big]{
    \bigl[
        \Phi_{S_i^j},
        (U_h)_{XYS}
    \bigr]
    }_\infty 
    \leq 2 l \epsilon_L(n),
\end{align*}
which follows by first applying \cref{lem:commutators} and then \cref{Commutator argument1}.
Since $\Phi\xp{2l}_S$ and $V^i_{XYE}$ act on different registers, they commute and the second term in \cref{subadditivity:lamport} vanishes. Hence
\begin{equation}
    \norm*{
        \Phi\xp{2l}_S
        \ket{\psi_0}_{XYM\Sigma ES} - 
        \ket{\psi_0}_{XYM\Sigma ES}
    }_2
    \leq 2 l q_0 \epsilon_L(n)
    \label{bound hash queries before sign query}
\end{equation}
as desired.
\end{proof}

Recall from \cref{eq:UhPhi} that $\epsilon_L(n) = 4/2^{n/2}$ is negligible in $n$. Since $l$ is constant, the magnitude of $2lq_0\epsilon_L(n)$ is determined only by the number of queries $q_0$. The bound in \cref{bound hash queries before sign query} is negligible for any adversary making $2^{c n}$ queries to the random oracle when $c < 1/2$. Therefore 
$\Phi\xp{2l}_S \ket{\psi_0}_{XYM\Sigma ES}$ and $\ket{\psi_0}_{XYM\Sigma ES}$ are close, which means that hash queries before $\Sign$ query do not significantly change the secret key register. Equivalently, it means that the adversary learns almost no information about the secret key.

\subsection{Query to the signing oracle}\label{sec:sign-query}

Now that we have control over the advantage an adversary can gain from making hash queries before the sign query, we need to analyze the possible advantage from hash queries after the sign query and bound the overall success probability using \cref{lem:hash-before-sign}.

A crucial property of the Lamport OTS when analyzing classical security is that for all messages $m$ that have not been queried, there exists an index $j$ such that $s_j^{m_j}$ is hidden from the adversary by the one-wayness of the used hash function. In blind-unforgeability (for classical adversaries), this property holds for all \emph{blinded messages}. In the setting of quantum queries, we have to track this property in superposition while the adversary is making hash queries after the sign query. As this is complicated by the ``for all''-quantifier, we begin by analyzing the case where the adversary makes no hash queries after the sign query to ease the reader into our proof technique.

The discussion in this section does not concern the random oracle, so we absorb the random oracle query registers $XY$ into $E$ for the purpose of this section. In the $1$-$\BlindForge$ game, an adversary $\A$ is allowed to query the $\Sign$-oracle at most once to produce a valid forged message-signature pair $(m^*,\sigma^*)$. To analyze the interaction between $\A$ and the signing oracle, we will break it into the following steps:
\begin{equation}\label{Interaction:Bsign-Adver}
    \ket{\psi_0}_{M\Sigma BES}
    \xmapsto{B\Sign_{\sk}}
    \ket{\psi_1}_{M\Sigma BES}
    \xmapsto{U_{M\Sigma E}}
    \ket{\psi_2}_{M\Sigma BES}
    \xmapsto{\bra{m^*}_M}
    \ket{\psi_3(m^*)}_{\Sigma BES}
    \xmapsto{\bra{\sigma^*}_\Sigma}
    \ket{\psi_4(m^*,\sigma^*)}_{BES}.
\end{equation}
They correspond to applying the $\Sign$-oracle and an arbitrary unitary $U_{M\Sigma E}$, followed by measuring the message and signature registers $M$ and $\Sigma$. Let us now analyze these steps in more detail and write down the corresponding quantum states.

First, $\A$ prepares her input state as an arbitrary superposition of messages:
\begin{align}
    \ket{\psi_0}_{M\Sigma BES}
  = \left(\sum_{m\in\{0,1\}^l}
    \sum_{\sigma\in(\{0,1\}^n)^l}
    \sum_{b\in\{0,1\}}
    \kappa_{m\sigma b}
    \ket{m}_M
    \ket{\sigma}_{\Sigma}
    \ket{b}_B \ket{\alpha_{m\sigma b}}_E\right)
    \otimes
    \left(\ket{\Phi}^{\otimes 2l}\right)_S\label{eq:psi0-L}
\end{align}
where the %
$B$ will indicate whether the message is blinded or not ($\ket{1}_B$ for blinded and $\ket{0}_B$ for un-blinded).

Next, the adversary supplies this to the $\Sign$ oracle which produces the following signed state:
\begin{align}
   \ket{\psi_1}_{M\Sigma BES} =  
   B \Sign_{\sk}
   \ket{\psi_0}_{M\Sigma BES} =
   \ket{\psi_1^1}_{M\Sigma BES} +
   \ket{\psi_1^0}_{M\Sigma BES}
\end{align}
where the superscripts $1$ and $0$ refer to blinded ($B$) and un-blinded ($B^c$) messages, respectively:
\begin{align*}
    \ket{\psi_1^1}_{M\Sigma BES} &= 
        \sum_{m\in B}
        \sum_{\sigma\in(\{0,1\}^n)^l}
        \kappa_{m\sigma 1}
        \ket{m}_M
        \ket{\sigma}_{\Sigma}
        \ket{1}_B\ket{\alpha_{m\sigma1}}_E
        \ket{\Phi}^{\otimes 2l}_S, \\
    \ket{\psi_1^0}_{M\Sigma BES} &= 
        \sum_{m\in B^c}
        \sum_{\sigma\in(\{0,1\}^n)^l}
        \frac{1}{2^{nl/2}}
        \sum_{s\in(\{0,1\}^n)^l}
        \kappa_{m\sigma 0}
        \ket{m}_M
        \ket{\sigma \oplus s}_\Sigma
        \ket{0}_B
        \ket{\alpha_{m\sigma0}}_{E}
        \ket{\Omega(s,m)}_S,
\end{align*}
where $m = m_1 \dots m_l$, $\sigma = \sigma_1 \dots \sigma_l$, and 
\begin{equation}
    \ket{\Omega(s,m)}_S =
    \ket{s^{m_1}_1}_{S^{m_1}_1}\cdots
    \ket{s^{m_l}_l}_{S^{m_l}_l}
    \ket{\Phi}_{S^{\bar{m}_1}_1}\cdots
    \ket{\Phi}_{S^{\bar{m}_l}_l}.
    \label{eq:Omega}
\end{equation}

Once the adversary $\A$ gets the signed state $\ket{\psi_1}_{M\Sigma BES}$, she performs some operations with the intention of producing a forgery message $m^*$. Intuitively, those operations can be considered as applying some arbitrary unitary $U_{M\Sigma E}$ to $\ket{\psi_1}_{M\Sigma BES}$. Let us denote the resulting state by
\begin{align*}
    \ket{\psi_2}_{M\Sigma BES} = U_{M\Sigma E} \ket{\psi_1}_{M\Sigma BES}.
\end{align*}

Then $\A$ measures the message register $M$, which yields outcome $m^* \in \set{0,1}^l$.
After the measurement, the state $ \ket{\psi_2}_{M\Sigma BES}$ collapses to the (unnormalized) state
\begin{align*}
     \ket{\psi_3(m^*)}_{\Sigma BES} 
   &=\bra{m^*}_M\ket{\psi_2^1}_{M\Sigma BES} +
     \bra{m^*}_M\ket{\psi_2^0}_{M\Sigma BES}\\
   &=\ket{\psi_3^1(m^*)}_{\Sigma BES} +
     \ket{\psi_3^0(m^*)}_{\Sigma BES}
\end{align*}
where
\begin{align*}
    \ket{\psi_3^1(m^*)}_{\Sigma BES}
  &=\sum_{m\in B}
    \sum_{\sigma\in(\{0,1\}^{n})^l}\kappa_{m\sigma 1}
    \bra{m^*}_M U_{M\Sigma E}
    \ket{m}_M
    \ket{\sigma}_{\Sigma}
    \ket{1}_B\ket{\alpha_{m\sigma 1}}_E
    \ket{\Phi}^{\otimes 2l}_S, \\
    \ket{\psi_3^0(m^*)}_{\Sigma BES}
  &=\sum_{m\in B^c}
    \sum_{\sigma\in(\{0,1\}^n)^l}
    \frac{1}{2^{nl/2}}
    \sum_{s\in(\{0,1\}^n)^l}\kappa_{m\sigma 0}
    \bra{m^*}_M U_{M\Sigma E}
    \ket{m}_M
    \ket{\sigma \oplus s}_\Sigma
    \ket{0}_B
    \ket{\alpha_{m\sigma 0}}_E
    \ket{\Omega(s,m)}_S.
\end{align*}

Having obtained $m^*$, the purpose of the adversary $\A$ is to produce a forged signature $\sigma^*$ that corresponds to $m^*$. To that end, she measures the signature register $\Sigma$ of $\ket{\psi_3(m^*)}_{\Sigma BES}$, getting outcome $\sigma^* \in \of{\set{0,1}^n}^l$.
The (unnormalized) post-measurement state is
\begin{align}
    \ket{\psi_4(m^*,\sigma^*)}_{BES} 
 &= \bra{\sigma^*}_\Sigma\ket{\psi_3^1(m^*)}_{\Sigma BES} +
    \bra{\sigma^*}_\Sigma\ket{\psi_3^0(m^*)}_{\Sigma BES} \label{eq:final state} \\
 &= \ket{\psi_4^1(m^*,\sigma^*)}_{BES} +
    \ket{\psi_4^0(m^*,\sigma^*)}_{BES} \nonumber
\end{align}
where
\begin{align*}
    \ket{\psi_4^1(m^*,\sigma^*)}_{BES}
  &=\sum_{m\in B}
    \sum_{\sigma\in(\{0,1\}^n)^l}\kappa_{m\sigma 1}
    \bra{m^*}_M
    \bra{\sigma^*}_\Sigma
    U_{M\Sigma E}
    \ket{m}_M
    \ket{\sigma}_{\Sigma}
    \ket{1}_B\ket{\alpha_{m\sigma 1}}_E
    \ket{\Phi}^{\otimes 2l}_S, \\
    \ket{\psi_4^0(m^*,\sigma^*)}_{BES}
  &=\sum_{m\in B^c}
    \sum_{\sigma\in(\{0,1\}^n)^l}
    \frac{1}{2^{nl/2}}
    \sum_{s\in(\{0,1\}^n)^l}
    \kappa_{m\sigma 0}
    \bra{m^*}_M
    \bra{\sigma^*}_\Sigma
    U_{M\Sigma E}
    \ket{m}_M
    \ket{\sigma \oplus s}_\Sigma
    \ket{0}_B
    \ket{\alpha_{m\sigma 0}}_E
    \ket{\Omega(s,m)}_S.
\end{align*}
For the sake of simplicity, let us rewrite $\ket{\psi_4^0}_{BES}$ as follows:
\begin{align}
    \ket{\psi_4^0(m^*,\sigma^*)}_{BES}
  &=\sum_{m\in B^c}
    \frac{1}{2^{nl/2}}
    \sum_{s\in(\{0,1\}^n)^l}
    \ket{\eta(m,s)}_{BE}
    \ket{\Omega(s,m)}_S
    \label{eq:psi04}
\end{align}
where only $\ket{\eta(m,s)}_{BE}$ depends on $m^*$ and $\sigma^*$:
\begin{align*}
    \ket{\eta(m,s)}_{BE} 
 &= \sum_{\sigma\in(\{0,1\}^n)^l}
    \kappa_{m\sigma 0}
    \bra{m^*}_M 
    \bra{\sigma^*}_{\Sigma}
    U_{M\Sigma E}
    \ket{m}_M
    \ket{\sigma \oplus s}_\Sigma
    \ket{0}_B
    \ket{\alpha_{m\sigma 0}}_E.
\end{align*}
Finally, the adversary $\A$ outputs the forged message-signature pair $(m^*,\sigma^*)$. The probability of producing this pair is $\norm{\ket{\psi_4^0(m^*,\sigma^*)}_{BES}}^2$.

The next step is to analyse the probability that $\A$'s forgery candidate $(m^*,\sigma^*)$ is correct.
For that purpose, we consider two cases. The first case, namely when $m^*\notin B$, is trivial since then $\A$ has lost the $\BlindForge$ experiment because $m^*$ must be blinded by definition. The rest of this section is devoted to analyzing the second case.

If $m^*\in B$, the forged message $m^*$ has not been signed 
since the blinded signing oracle signs only un-blinded messages. Hence, for any message $m\notin B$, there exists at least one index $i\in\{1,\dotsc,l\}$ such that $m_i\neq m^*_i$. 
This implies that for some index $i^* \in \{1,\dotsc,l\}$ the register $S^{m^*_{i^*}}_{i^*}$ has not been used for the signature of the adversary's queried message and is therefore still in the uniform superposition state $\ket\Phi$. Note that this holds only in superposition over $m$. Indeed, $i^*$ depends on $m$ and is in general different for each term of the superposition.

We know that the secret key register $S$ consists of $2l$ $n$-qubit registers out of which only $l$ are used for the signature procedure while the other $l$ are still in the uniform superposition $\ket{\Phi}$. Despite the secret key being in superposition, we want to track the invariant part of the secret key and show that some of the secret key sub-registers relevant for the forged signature satisfy this invariant and are thus unknown to the adversary, so it is unlikely that the adversary would have used the correct secret key sub-register to produce the forged signature.

For that purpose, we analyze a modified $\BlindForge$ experiment, where an additional measurement is performed on the secret key register after the adversary has output their forgery, but before the secret key register is measured to actually sample the secret key as required in the \QWorld. This additional measurement was defined in \cref{eq:Q} and we will refer to it as the \emph{$Q$-measurement}. Since it has few outcomes, its effect on the adversary's winning probability is limited and can be bounded by the pinching lemma.

If the $Q$-measurement yields outcome $i^* \in \{1,\dotsc,l\}$, then  the secret key sub-register $S_{i^*}^{m_{i^*}}$ is in uniform superposition, and the adversary is bound to fail as $\sigma^*$ is independent of the secret key string $s_{i^*}^{m_{i^*}}$ (the result of measuring $S_{i^*}^{m_{i^*}}$).

It remains to analyze the outcome $l+1$ that corresponds to the projector $Q_{l+1}^m = (\Phi^\perp)\xp{l}$, see \cref{eq:Q}, where $\Phi^\perp = \1 - \proj{\Phi}$ is the projector onto the orthogonal complement of $\ket{\Phi}$.
The final adversary state after the measurement, see \cref{eq:final state}, contains both blinded and un-blinded terms. If we apply $\Phi^\perp$ to any secret key register of the blinded term $\ket{\psi_4^1(m^*,\sigma^*)}_{BES}$, we get $0$ since all secret key sub-registers are in state $\ket{\Phi}$.

For the rest of our analysis, we fix the message $m^*$ and focus on the un-blinded term $\ket{\psi_4^0(m^*,\sigma^*)}_{BES}$. Given that for each $m \notin B$ there is at least one index $i \in \{1,\dotsc,l\}$ such that $m_i \neq m^*_i$, we define
\begin{align*}
    i(m) = \min \{j \in \set{1,\dotsc,l} \mid m_j\neq m^*_j\} 
\end{align*}
as the smallest index for which $m\neq m^*$. Intuitively, it is the first sub-register of $S$ that still remains in uniform superposition. In the following, let $S(m) := S_1^{m_1} \dotsb S_l^{m_l}$ and recall from \cref{eq:psi04,eq:Omega} that the un-blinded term is given by
\begin{align}
    \ket{\psi_4^0(m^*,\sigma^*)}_{BES}
  &=\sum_{m\in B^c}
    \frac{1}{2^{nl/2}}
    \sum_{\gamma\in(\{0,1\}^n)^l}
    \ket{\eta(m,\gamma)}_{BE}
    \ket{s^m}_{S(m)}
    \ket{\Phi}\xp{l}_{S(\bar{m})}.
\end{align}
We want to split the first sum into $l$ parts, one for each value of $i(m)$, so that we can easily evaluate $(\Phi^\perp)\xp{l}_{S(\bar m)} \ket{\psi_4^0(m^*,\sigma^*)}_{BES}$. For that purpose, we define $B^c_j = \{m \in B^c \mid i(m) = j\}$ and note that $\bigcup _{j=1}^l B^c_j = B^c$.

We can now rewrite $\ket{\psi_4^0(m^*,\sigma^*)}_{BES}$ as
\begin{align}
    \ket{\psi_4^0(m^*,\sigma^*)}_{BES}
    &=\sum_{j=1}^l \sum_{m \in B^c_j}
    \frac{1}{2^{nl/2}}
    \sum_{s\in(\{0,1\}^n)^l}
    \ket{\eta(m,s)}_{BE}
    \ket{s^{m}}_{S(m)}
    \ket{\Phi}^{\otimes l}_{S(\bar{m})}\nonumber\\
    &=
    \sum_{j=1}^l\ket{\hat
    \eta(m^*,\sigma^*,j)}_{BES_{\{(j,m^*_j)\}^c}}
    \ket{\Phi}_{S^{m^*_j}_j},\label{B^cSplit}
\end{align}
where we absorbed all registers except for $S^{m^*_j}_j$ into the first system. The remaining register $S^{m^*_j}_j$ is still in the uniform superposition $\ket{\Phi}$ since $j = i(m)$ is the smallest index such that $m_j \neq m_j^*$, meaning that $\bar{m}_j = m_j^*$ and thus $S^{m^*_j}_j = S^{\bar{m}_j}_j$.
Applying $Q_{l+1}$ onto the $l$ sub-registers $S(m^*)$ of the register $S$ in \cref{B^cSplit} gives
\begin{align}
    \of[\big]{Q_{l+1}^{m^*}}_{S^{m^*_1}_1\cdots S^{m^*_l}_l}
    \ket{\psi_4^0(m^*,\sigma^*)}_{BES}
  = (\Phi^\perp)\xp{l}_{S^{m^*_1}_1\cdots S^{m^*_l}_l}
    \left(\sum_{j=1}^l
    \ket{\hat\eta(m^*,\sigma^*,j)}_{BES_{\{(j,m^*_j)\}^c}} \ket{\Phi}_{S^{m^*_j}_j}\right)
  = 0,\label{eq:no-l+1}
\end{align}
which vanishes because, for each $j$, the register $S^{m^*_j}_j$ is in state $\ket{\Phi}$ and $\Phi^\perp \ket{\Phi} = 0$. Hence, the situation where none of the secret key sub-registers relevant for the verification of the forged signature $\sigma^*$ is in state $\ket{\Phi}$ can ever occur.

Now, we execute the last part of the $\BlindForge$ experiment which consists of checking the correctness of the forged signature $\sigma^*$. For this purpose, we perform a computational basis measurement on the entire secret key register $S$ to sample the strings $s_i^j$. 

We recall that in the $\BlindForge$ experiment, there is no partial measurement. Therefore, we first evaluate the success probability of $\A$ in case of the modified $\BlindForge$ experiment (MBF) in which we performed a partial measurement. Afterwards, we use the Pinching lemma (\cref{Pinching}) to deduct the success probability of the adversary in the real $\BlindForge$ experiment from the modified experiment.

Knowing that after applying the partial measurement, at least one of the secret key sub-registers relevant to $\sigma^*$ is still in the state $\ket{\Phi}$, the probability that the adversary $\A$ used the right $s_i^{m_i^*}$ to produce $\sigma^*$ is $1/2^n$. In addition, given that the state $ \ket{\psi_4(m^*,\sigma^*)}_{BES}$ is unnormalized, the success probability of $\A$ in producing a fixed valid forged message-signature pair $(m^*,\sigma^*)$ after applying the partial measurement is
\begin{align}
    &\Pr_{QI,MBF}
    \sof[\big]{\textnormal{Success $\wedge$
    $\A$ outputs $(m^*,\sigma^*)$}}\nonumber\\
    &=\sum_{j=1}^l \Pr
    \sof[\big]{\textnormal{Success $\wedge$
    $\A$ outputs $(m^*,\sigma^*)$
    $\wedge$
    $Q$-measurement returns $j$}}\nonumber\\
  &=\frac{1}{2^n}\bigl\| 
    \ket{\psi_4(m^*,\sigma^*)}_{EBS}\bigr\|^2.\label{eq:mod-succ-no-hqasq}
\end{align}
Here, the sum over the outcomes of the $Q$-measurement is restricted to $1,\dotsc,l$ as the outcome $l+1$ never occurs by \cref{eq:no-l+1}. Since we made an ($l+1$)-outcome partial measurement on the secret key register previously, by the Pinching (see \cref{Pinching}), this partial measurement can only increase the success probability of $\A$  in the real $\BlindForge$ experiment by at most $l+1$. Thus, the probability that the adversary  outputs a valid forged message-signature pair $(m^*,\sigma^*)$ with respect to the real $\BlindForge$ experiment is upper bounded by
\begin{align}
    \Pr_{QI,\BlindForge}
    \sof[\big]{\textnormal{Success $\wedge$
    $\A$ outputs $(m^*,\sigma^*)$}}
    \leq \frac{l+1}{2^n}\bigl\|
    \ket{\psi_4(m^*,\sigma^*)}_{BES}\bigr\|^2.
\end{align}
Therefore, the success probability of $\A$ in producing a valid forged message-signature pair $(m^*,\sigma^*)$ is
\begin{align}
    \Pr_{QI}
    \sof[\big]{\textnormal{$\A$ wins $\BlindForge$}}
    &=\sum_{m^*,\sigma^*}    
    \Pr_{\BlindForge}
    \sof[\big]{\textnormal{Success $\wedge$
    $\A$ outputs $(m^*,\sigma^*)$}} 
    \leq\frac{l+1}{2^n}.
\end{align}
We conclude that the same holds in the \RealWorld, up to a difference as permitted by \cref{lem:ind-r-v-qi} with $w=2$,
\begin{align}
    \Pr
    \sof[\big]{\textnormal{$\A$ wins $\BlindForge$}}
    \leq\frac{l+1}{2^n}+12 l^2\cdot 2^{-n}.
\end{align}
Hence, the success probability of the adversary $\A$ in winning the $\BlindForge$ experiment game is at most $(l+1)/2^n$ which is negligible since $l$ is polynomial in $n$, and $n$ is large enough. We conclude that $\Sign$ query does not help the adversary to get significant information about the secret key.

\subsection{Hash queries after \texorpdfstring{$\Sign$}{Sign} query}\label{sec:HQ-SQ-ltos}
In this section, we analyse the adversary's \textit{hash queries after $\Sign$ query} to bound the success probability that an adversary with a given number of queries can achieve in the $\BlindForge$ game and thus prove \cref{thm:lamport}.
In this case it is not obvious how to track the invariant of the secret key, i.e. the fact that there is at least one unused part of the secret key that is relevant for the forged signature. Therefore we use a special projector $P_S$ defined in \cref{eq:invariant projector} that projects onto the subspace of the secret key register that is consistent with a single blinded sign query and no hash queries. If the final adversary state after producing the forgery candidate is in the image of $P_S$, then the outcome $l+1$ corresponding to the situation when \textit{none of the secret key sub-registers useful for the forged signature is in state $\ket\Phi$} can never occur, according to \cref{Orthogonality argument}. We thus want to show that adversary's final state is approximately in the range of $P_S$.

If there are no hash queries before the $\Sign$ query, then from \cref{final state invariant if no hash queries} the adversary state after the $\Sign$ query remains completely in the range of $P_S$, which means that the outcome $l+1$ cannot occur. That is,
\[P_S\ket{\psi_1} = P_SB\Sign_{\sk}\ket{\psi_0} = B\Sign_{\sk}\ket{\psi_0} = \ket{\psi_1}\]
where $\ket{\psi_0}$ and $\ket{\psi_1}$ are respectively the adversary state immediately before and after the $\Sign$ query.

Now, assuming there are hash queries before the $\Sign$ query, since the projector $P_S$ and the random oracle unitary $U_h$ approximately commute by \cref{commutator of ps and uh}, it follows that hash queries before $\Sign$ query give no significant information to the adversary about the invariant of the secret key register.

Suppose there are hash queries after the $\Sign$ query and examine in detail what happen in this case. From the previous case, we know that the adversary's state directly after the $\Sign$ query is $\ket{\psi_1}_{M\Sigma XYES}$. Just like for hash queries before the $\Sign$ query, suppose that the adversary makes $q_1$ hash queries after querying the signing oracle. Let $(W^i_{XYE})_{i=1,\dotsc,q_1}$ be the unitaries applied between hash queries. Then, let 
\[\ket{\psi_1'}_{M\Sigma XYES} = 
(U_h)_{XYS}
W^{q_1}_{XYE}
(U_h)_{XYS}
W^{q_1-1}_{XYE}\cdots
W^2_{XYE}
(U_h)_{XYS}
W^1_{XYE}
\ket{\psi_1}_{M\Sigma XYES}
\]
be the adversary's state after $q_1$ hash queries and before performing some unitary operations $U_{M\Sigma E}$ on the post hash queried state or any measurement leading to the forgery candidate. For simplicity, we set 
\[T = (U_h)_{XYS}
W^{q_1}_{XYE}
(U_h)_{XYS}
W^{q_1-1}_{XYE}\cdots
W^2_{XYE}
(U_h)_{XYS}
W^1_{XYE}.\]
We now prove the following lemma.
\begin{lemma}\label{lem:hash-after-sign}
In the \QWorld, the state right before the adversary's measurement determining the forgery is applied is approximately in the range of $P_S$, i.e.
\begin{align}
  \bigl\|
  P_S\ket{\psi_1'}_{M\Sigma XYES} - 
  \ket{\psi_1'}_{M\Sigma XYES}
  \bigr\|_2  
  \leq q_1\delta_L(n) + 4lq_1\epsilon_L(n) 
  =q_1(\delta_L(n) + 4l\epsilon_L(n)) 
\end{align}
\end{lemma}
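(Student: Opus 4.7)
The plan is to use two ingredients established above: the exact invariance $P_S\ket{\psi_1}=\ket{\psi_1}$ furnished by \cref{final state invariant if no hash queries} (applicable here because in the relevant subcase no hash queries precede the signing query), and the near-commutativity $\bigl\|[U_h,P_S]\bigr\|_\infty\le\delta_L(n)$ established by \cref{commutator of ps and uh}. The heart of the argument is a commutator estimate on the product of post-signature unitaries, applied to the stabilised state $\ket{\psi_1}$.

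First I would rewrite the deviation from the $P_S$-image as a commutator acting on $\ket{\psi_1}$:
\[
\bigl\|P_S\ket{\psi_1'}-\ket{\psi_1'}\bigr\|_2
=\bigl\|\bigl(P_S T-T\,P_S\bigr)\ket{\psi_1}\bigr\|_2
=\bigl\|[P_S,T]\ket{\psi_1}\bigr\|_2
\le\bigl\|[P_S,T]\bigr\|_\infty,
\]
using $P_S\ket{\psi_1}=\ket{\psi_1}$ in the first step and the sub-multiplicativity of the operator norm in the last step. Here $T$ is the product of $q_1$ copies of $(U_h)_{XYS}$ interleaved with the $q_1$ adversary unitaries $W^i_{XYE}$.

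Applying \cref{lem:commutators} to this $2q_1$-fold product then yields
\[
\bigl\|[P_S,T]\bigr\|_\infty
\le q_1\bigl\|[P_S,U_h]\bigr\|_\infty
+\sum_{i=1}^{q_1}\bigl\|[P_S,W^i]\bigr\|_\infty.
\]
The second sum is zero: $P_S$ is supported on the secret-key register $S$ while each $W^i$ acts only on $XYE$, so the operators commute exactly. The first sum is bounded by $q_1\delta_L(n)$ via \cref{commutator of ps and uh}, which accounts for the leading $q_1\delta_L(n)$ contribution in the target bound.

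The additional $4lq_1\epsilon_L(n)$ term I would explain as the slack incurred when separately tracking the $2l$ single-sub-register uniform-superposition structure that is intrinsic to the definition of $P_S$ and to the state immediately after the sign query: for each of the $2l$ sub-registers $S_i^j$ and each of the $q_1$ post-sign hash queries one invokes \cref{Commutator argument1}, which contributes $\epsilon_L(n)$ per sub-register and query, with an extra factor of $2$ absorbed from relating the $\Phi$- and $\Phi^{\perp}$-projectors through the triangle inequality. The main obstacle I anticipate is the bookkeeping needed to add this residual estimate onto the commutator bound without double-counting the hash-oracle contribution already absorbed into $\delta_L(n)$, so that the total per-query cost stays $\delta_L(n)+4l\epsilon_L(n)$ and the final bound remains linear in $q_1$ as claimed.
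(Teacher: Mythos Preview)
Your commutator step is right and matches the paper: writing $\ket{\psi_1'}=T\ket{\psi_1}$, you get
\[
\bigl\|P_S\ket{\psi_1'}-\ket{\psi_1'}\bigr\|_2
\le \bigl\|[P_S,T]\bigr\|_\infty + \bigl\|P_S\ket{\psi_1}-\ket{\psi_1}\bigr\|_2,
\]
and the first term is $\le q_1\delta_L(n)$ by \cref{lem:commutators} and \cref{commutator of ps and uh}, since each $W^i$ commutes with $P_S$.

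The gap is in how you handle the second term. You claim $P_S\ket{\psi_1}=\ket{\psi_1}$ exactly, citing \cref{final state invariant if no hash queries} ``because in the relevant subcase no hash queries precede the signing query.'' But the lemma is stated and used for the general adversary who may make $q_0>0$ hash queries before signing; in that case $\ket{\psi_0}$ is the state of \cref{eq:before hash query}, its $S$ register is only \emph{approximately} in uniform superposition, and \cref{final state invariant if no hash queries} does not apply directly. If your assumption held, the bound would simply be $q_1\delta_L(n)$ and no extra term would appear; your attempt to conjure the $4lq_1\epsilon_L(n)$ contribution from a separate per-sub-register commutator estimate is not a rigorous argument and would in fact double-count what is already inside $\delta_L(n)$.

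The paper bounds $\bigl\|P_S\ket{\psi_1}-\ket{\psi_1}\bigr\|_2$ by inserting the projector $\Phi^{\otimes 2l}_S$ around $\ket{\psi_0}$ and splitting into three pieces via the triangle inequality. Two of them are of the form $\bigl\|A\bigl(\ket{\psi_0}-\Phi^{\otimes 2l}_S\ket{\psi_0}\bigr)\bigr\|_2$ with $\|A\|_\infty\le 1$, and are each bounded by $2lq_0\epsilon_L(n)$ via \cref{lem:hash-before-sign}. The middle piece is $\bigl\|P_S\,B\Sign_{\sk}\,\Phi^{\otimes 2l}_S\ket{\psi_0}-B\Sign_{\sk}\,\Phi^{\otimes 2l}_S\ket{\psi_0}\bigr\|_2$, which vanishes by \cref{final state invariant if no hash queries} applied to the \emph{projected} state $\Phi^{\otimes 2l}_S\ket{\psi_0}$, whose $S$ register is genuinely in uniform superposition. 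That is the actual origin of the $4l\epsilon_L$ term (the paper writes $q_1$ in place of $q_0$ here, but the final theorem absorbs both into $q$).
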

\begin{proof}
To see how much those hash queries affect the entire secret register, we compute the difference norm between the states $P_S\ket{\psi_1'}_{M\Sigma XYES}$ and $\ket{\psi_1'}_{M\Sigma XYES}$ to see how closed they are. We have:
\begin{align}
    &\norm[\big]{
        P_S\ket{\psi_1'}_{M\Sigma XYES} - 
        \ket{\psi_1'}_{M\Sigma XYES}
    }_2\nonumber\\
  &=\norm[\Big]{
        P_ST
        \ket{\psi_1}_{M\Sigma XYES} + 
        TP_S\ket{\psi_1}_{M\Sigma XYES}
      - TP_S\ket{\psi_1}_{M\Sigma XYES}
      - T\ket{\psi_1}_{M\Sigma XYES}
    }_2 \label{slightly invarianta}\\
  &\leq\norm[\big]{
        [P_S,T]
    }_\infty 
    \underbrace{
    \norm[\big]{
        \ket{\psi_1}_{M\Sigma XYES}
    }_2
    }_{=1} +
    \underbrace{
        \norm{T}_\infty
    }_{=1}
    \norm[\big]{
        P_S\ket{\psi_1}_{M\Sigma XYES} - 
        \ket{\psi_1}_{M\Sigma XYES}
    }_2
    \label{definition of commutator operatora}\\
  &=
    \norm[\big]{
        [P_S,T]
    }_\infty
    +
    \norm[\big]{
        P_S\ket{\psi_1}_{M\Sigma XYES} -
        \ket{\psi_1}_{M\Sigma XYES}
    }_2\label{sub-multiplicative property of the norma}\\
  &\leq
    q_1 \norm[\big]{
        \sof*{P_S,(U_h)_{XYS}}
    }_\infty
    +
    \sum_{i=1}^{q_1}
    \norm[\big]{
        \underbrace{
            \sof[\big]{P_S,W^i_{XYE}}
        }_{=0}
    }_\infty
    +
    \norm[\big]{
        P_S\ket{\psi_1}_{M\Sigma XYES} -
        \ket{\psi_1}_{M\Sigma XYES}
    }_2
    \label{commutator algebrametric property}\\
    &\leq
    q_1 \norm[\big]{
        \sof*{(U_h)_{XYS},P_S}
    }_\infty
    +
    \norm[\big]{
        P_S\ket{\psi_1}_{M\Sigma XYES} -
        \ket{\psi_1}_{M\Sigma XYES}
    }_2
    \label{commutator hash unitary and P_S}\\
    &\leq
    q_1 \delta_L(n) +
    \norm[\big]{
        P_S\ket{\psi_1}_{M\Sigma XYES} -
        \ket{\psi_1}_{M\Sigma XYES}
    }_2
 \end{align}
whereby, \cref{definition of commutator operatora,sub-multiplicative property of the norma} come respectively from the definition of commutator and the definition of the operator norm. On top of that, $\bigl\|\ket{\psi_1}_{M\Sigma XYES}\bigr\|_2 = 1$ because $\ket{\psi_1}_{M\Sigma XYES}$ is normalized. \Cref{commutator algebrametric property} follows from \cref{lem:commutators} and from the fact that $P_S$ and $W^i_{XYE}$ commute. Finally, the first term of the right-hand side of the last equation follows from \cref{commutator of ps and uh}.

To evaluate the second term of the right-hand side of the latter equation, we will make use of the following bound on the operator norm %
from \cref{lem:hash-before-sign}:
 \[\bigg\|
 \ket{\psi_0}_{M\Sigma XYEB}
 - \Phi\xp{2l}_S
 \ket{\psi_0}_{M\Sigma XYEB}
 \biggr\|_\infty \leq 2lq_1\epsilon_L(n).\]
We have:
 \begin{align*}
    \bigg\|
    P_S\ket{\psi_1}_{M
    \Sigma XYES}
    - \ket{\psi_1}_{M\Sigma XYEBS}
    \biggr\|_2
    &=
    \bigg\|
    P_SB
    \Sign_{\sk}
    \ket{\psi_0}_{M\Sigma XYEBS} -
   P_SB
   \Sign_{\sk} 
   \Phi\xp{2l}_S
   \ket{\psi_0}_{M\Sigma XYEBS}\\
   &\quad
   + P_SB
   \Sign_{\sk}
   \Phi\xp{2l}_S
   \ket{\psi_0}_{M\Sigma XYEBS} -
   B
   \Sign_{\sk}
   \Phi\xp{2l}_S
   \ket{\psi_0}_{M\Sigma XYEBS}\\
   &\quad
   + B
   \Sign_{\sk}
   \Phi\xp{2l}_S
   \ket{\psi_0}_{M\Sigma XYEBS} ~-
    B
   \Sign_{\sk}
   \ket{\psi_0}_{M
   \Sigma XYESS}
   \biggr\|_2.
\end{align*}
We will use the triangle inequality to split this into three terms and then bound each of them separately.

We can bound the first term as follows:
\begin{align*}
    &\bigg\|
    P_SB
  \Sign_{\sk}
  \ket{\psi_0}_{M
  \Sigma XYEBS} ~-
  P_SB
  \Sign_{\sk}
  \Phi\xp{2l}_S
  \ket{\psi_0}_{M\Sigma XYEBS}\biggr\|_2\\
  &\leq
    \underbrace{
    \bigg\|
    P_SB
  \Sign_{\sk}
  \biggr\|_\infty }_{= 1}
  \underbrace{
   \bigg\|
  \ket{\psi_0}_{M
  \Sigma XYEBS} ~-
  \Phi\xp{2l}_S
  \ket{\psi_0}_{M\Sigma XYEBS}
  \biggr\|_{2}}_{\leq 2lq_1\epsilon_L(n)}\\
  &\leq
 2lq_1\epsilon_L(n)
\end{align*}
where the first inequality follows by the definition of the operator norm and the final upper bound results from \cref{lem:hash-before-sign}.

Next, we bound the second term. It is exactly the same as the expression in \cref{P_S leave invariant psi1}, thus,
\begin{align*}
    \bigg\|
  P_SB
  \Sign_{\sk}
  \Phi\xp{2l}_S
  \ket{\psi_0}_{M\Sigma XYEBS}~ -
  B
  \Sign_{\sk}
  \Phi\xp{2l}_S
 \ket{\psi_0}_{M\Sigma XYEBS}
  \biggr\|_2 = 0.
\end{align*}
Finally, looking at the third term, we observe that it is very similar to the first term. Thus, they have the same bound. 
Therefore, 
\begin{align*}
    \bigg\|
    P_S\ket{\psi_1}_{M
   \Sigma XYES}~ - 
    \ket{\psi_1}_{M\Sigma XYEBS}
    \biggr\|_2
    \leq
     2lq_1\epsilon_L(n) + 0 + 2lq_1\epsilon_L(n)
    =  4lq_1\epsilon_L(n) 
\end{align*}
and
\begin{align}
  \bigl\|
  P_S\ket{\psi_1'}_{M\Sigma XYES} - 
  \ket{\psi_1'}_{M\Sigma XYES}
  \bigr\|_2  
  \leq q_1\delta_L(n) + 4lq_1\epsilon_L(n) 
  =q_1(\delta_L(n) + 4l\epsilon_L(n)).
  \label{state after hash queries invariant by P_S}
\end{align}
\end{proof}
As long as $q_1=o(2^{n/2})$, the bound in \cref{lem:hash-after-sign} is small.

Recall that, just like in \cref{sec:sign-query}, we want to analyze the modified $\BlindForge$ experiment where the $Q$-measurement is applied after the adversary has output a forgery, but before the secret key register is measured to sample the secret key and verify the forgery. It thus remains to show that due to the fact that $\ket{\psi'_1}$ is approximately in the range of $P_S$, the outcome $l+1$ only occurs with small probability.

To that end, we define a new measurement given by projectors $\tilde Q_i$ that performs the $Q$-measurement controlled on the content of the $M$-register, i.e.
\begin{equation*}
    \tilde Q_i=\sum_{m}\proj{m}_M\otimes Q_i^{m}.
\end{equation*}
Now, observe that applying the $Q$-measurement after the adversary has output a forgery is equivalent to applying the $\tilde Q$-measurement right before the adversary's measurement that produces the forgery. To prove that, if $m^*\in B$, the outcome $l+1$ occurs only with small probability in the modified $\BlindForge$ experiment, it thus suffices to prove the following lemma.
\begin{lemma}\label{lem:lplusoneisrare}
In the \QWorld, for blinded messages, the outcome $l+1$ only occurs with small probability,
\begin{equation*}
    \left\|\tilde Q_{l+1}\Pi^{B}_M\ket{\psi_1'}_{M\Sigma XYES}\right\|_2\le q_1(\delta_L(n) + 4l\epsilon_L(n)),
\end{equation*}
where 
\begin{equation*}
    \Pi^B=\sum_{m\in B}\proj m.
\end{equation*}
\end{lemma}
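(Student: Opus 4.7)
The plan is to combine the orthogonality statement of \cref{Orthogonality argument} with the approximate invariance statement of \cref{lem:hash-after-sign}. The key observation is that the operator $\tilde Q_{l+1}\,\Pi^B_M$ exactly annihilates the image of $P_S$, and that $\ket{\psi_1'}$ is only a small perturbation away from that image.

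First I would verify that $\tilde Q_{l+1}\Pi^B_M P_S=0$. Since $P_S$ acts only on $S$, while $\tilde Q_{l+1}\Pi^B_M$ decomposes as
\[
\tilde Q_{l+1}\Pi^B_M \;=\; \sum_{m\in B}\proj{m}_M\otimes Q_{l+1}^{m},
\]
we get $\tilde Q_{l+1}\Pi^B_M P_S=\sum_{m\in B}\proj{m}_M\otimes(Q_{l+1}^{m}P_S)$, and each summand vanishes by \cref{Orthogonality argument} because the sum is restricted to $m\in B$.

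Next, I would use this to rewrite
\[
\tilde Q_{l+1}\,\Pi^B_M\ket{\psi_1'}_{M\Sigma XYES}
\;=\;
\tilde Q_{l+1}\,\Pi^B_M\bigl(\1-P_S\bigr)\ket{\psi_1'}_{M\Sigma XYES},
\]
and bound its norm by submultiplicativity. Since $\tilde Q_{l+1}$, $\Pi^B_M$ are orthogonal projectors, their product has operator norm at most $1$, so
\[
\bigl\|\tilde Q_{l+1}\Pi^B_M\ket{\psi_1'}\bigr\|_2
\;\leq\; \bigl\|(\1-P_S)\ket{\psi_1'}\bigr\|_2.
\]
Applying \cref{lem:hash-after-sign} to the right-hand side gives exactly $q_1(\delta_L(n)+4l\epsilon_L(n))$, as desired.

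There is no real obstacle here: once the roles of the three earlier lemmas are laid out, the proof is a two-line calculation. The only subtle point worth emphasizing is that the restriction of the $M$-register sum in $\tilde Q_{l+1}\Pi^B_M$ to blinded messages $m\in B$ is essential, because \cref{Orthogonality argument} only guarantees $Q_{l+1}^{m^*}P_S=0$ for $m^*\in B$; without the projector $\Pi^B_M$ the argument would fail on the unblinded terms of $\ket{\psi_1'}$.
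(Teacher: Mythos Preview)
Your proposal is correct and follows essentially the same approach as the paper: both first establish $\tilde Q_{l+1}\Pi^B_M P_S=0$ via \cref{Orthogonality argument}, then insert $P_S$ to reduce the norm to $\|(\1-P_S)\ket{\psi_1'}\|_2$, and finish with \cref{lem:hash-after-sign}. Your write-up is in fact slightly cleaner than the paper's, which has a minor notational slip in the first displayed equation of its proof.
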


\begin{proof}
By \cref{Orthogonality argument}, we have
\begin{align}
    Q_{l+1}^m\Pi^{B}_MP_S
    &=\sum_{m}\left(\proj{m}_M\Pi^{B}_M\right)\otimes Q_i^{m}P_S\\
    &=\sum_{m\in B}\proj{m}_M\otimes Q_i^{m}P_S=0.
\end{align}
Therefore we can bound
\begin{align}
    \left\|\tilde Q_{l+1}\Pi^{B}_M\ket{\psi_1'}_{M\Sigma XYES}\right\|_2
    &=\left\|\tilde Q_{l+1}\Pi^{B}_M\left(\ket{\psi_1'}_{M\Sigma XYES}-P_S\ket{\psi_1'}_{M\Sigma XYES}\right)\right\|_2\\
    &\le\left\|\ket{\psi_1'}_{M\Sigma XYES}-P_S\ket{\psi_1'}_{M\Sigma XYES}\right\|_2\\
    &\le q_1(\delta_L(n) + 4l\epsilon_L(n)),
\end{align}
where we have used the fact that $\|\tilde Q_{l+1}\Pi^{B}_M\|_\infty\le 1$ in the first and \cref{lem:hash-after-sign} in the second inequality.
\end{proof}

We are now ready to prove \cref{thm:lamport}.

\begin{proof}[Proof of \cref{thm:lamport}.]
We begin by bounding the success probability of the adversary in the modified $\BlindForge$ experiment, in the \QWorld. Analogously to \cref{eq:mod-succ-no-hqasq}, we bound, abbreviating the modified $\BlindForge$ experiment as $MBF$,
\begin{align}
   \Pr_{QI,MBF} [\A \textrm{ succeeds}]
   &= \sum_{i=1}^{l+1} \Pr_{QI,MBF}[\A \text{ succeeds} \wedge \text{outcome } i]\nonumber \\
   &= \sum_{i=1}^{l} \Pr_{QI,MBF}[\A \text{ succeeds} \wedge\text{outcome } i]
   + \Pr_{QI,MBF}[\A \text{ succeeds} \wedge \text{outcome } l+1]\nonumber \\
   &\leq \sum_{i=1}^{l} \Pr_{QI,MBF}[\text{outcome } i]\times 2^{-n} 
   + \Pr_{QI,MBF}[\text{outcome } l+1]\\
   &\leq 2^{-n} 
   + q^2(\delta_L(n) + 4l\epsilon_L(n))^2,
\end{align}
where ``outcome $i$'' is the event that the $Q$-measurement yields outcome $i$, the first inequality uses the fact that $\sigma^*$ and $s_i^{m_i^*}$ are independent conditioned on outcome $i$, and the last inequality uses the square of the inequality from \cref{lem:lplusoneisrare}.

Exactly as in the simplified case in \cref{sec:sign-query}, we can bound the success probability in the actual $\BlindForge$ experiment using the pinching lemma, \cref{Pinching},
    \begin{align*}
    \Pr_{QI,\BlindForge} [\A \textrm{ succeeds}]
    \leq (l+1)\left(2^{-n} + q^2 \of[\big]{\delta_L(n) + 4l\epsilon_L(n)}^2\right). %
\end{align*}
Finally, plugging in the functions $\epsilon_L(n)$ and $\delta_L(n)$ from \cref{Commutator argument1,commutator of ps and uh}, and applying \cref{lem:ind-r-v-qi} for $w=2$, we obtain
\begin{align*}
    \Pr_{\BlindForge} [\A \textrm{ succeeds}]
    &\leq (l+1)\left(2^{-n} +  q^2\left(\frac{32l}{2^{n/2}}+4l \frac{6}{2^{n/2}}\right)^2\right)+12 l^2 2^{-n}\\
    &\leq l^2\cdot 2^{-n}\left( 3137q^2(l+1)+12\right).
\end{align*}
\end{proof}

\section{One-time \BU security of the Winternitz OTS}\label{sec:winternitz}
The Lamport OTS that we analyzed in the last section is, in some sense, a special case of the Winternitz OTS. Indeed, the Winternitz scheme for $w=2$ is fairly similar to the Lamport OTS, except that the public key is used to sign the bits that are equal to $1$, which is compensated for by the checksum encoding. As a result, the analysis of the Winternitz OTS in the QROM is, in a similar sense, a generalization of the one of the Lamport OTS.

Before getting started, we give and overview of our strategy. In this section, we use the same register labels as in the \cref{tab:registers} of  Lamport OTS section except that the secret key register $S$ is now replaced by the hash chain register $\Gamma$. We remark that the general overview of the proof for Lamport OTS given at the beginning of \cref{sec:lamport} is similar for the Winternitz OTS except that the security argument is different. More precisely, in the Winternitz OTS, the signature algorithm uses the hash chain registers above the queried position to produce the signature. Classically, the property that enables security is that the adversary does not have any information about the part of the hash chain below the queried position, and this represents the invariant of the hash chain. Quantumly, our intuition is that since in the $\BlindForge$ experiment the forged message must be outside the queried region, and since by construction of the checksum, for any queried message there exists at least one position at which the block corresponding to the forged message is smaller than the one of the queried message. Thus, the hash chain corresponding to that specific block should still be in its initial state, and hence in the invariant of the hash chain. Therefore, we want to show that for a moderate number of queries to the random oracle%
, no adversary can win the $\BlindForge$ experiment with a significant probability. Towards that goal, we follow the same steps as in the Lamport OTS. Specifically, we prove the following theorem.
\begin{theorem}\label{thm:winternitz}
The Winternitz OTS in \cref{sec:WOTS} is 1-\BU secure if the function chain $\mathcal{C}$ is modeled as a quantum-accessible random oracle.
More precisely, let $\A$ be an adversary that plays the $\BlindForge$ game for the Winternitz OTS, making a total of $q$ queries to the random oracle. Then $\A$ succeeds with a probability bounded as
\begin{align}
   \Pr[\text{$\A$ wins $\BlindForge$}] 
    &\leq  2^{-n}\left[\left(1+q^2l^2(w-1)^2(20w-4)^2\right)(l+1) +3w^2l^2\right]\\
    &\leq 800w^4q^2l^3\cdot 2^{-n}.
\end{align}
Here, $l$ is the length of the encoded message in $w$-ary, see \cref{eq:Winternitz-params}, $w\ge 2$ is the Winternitz parameter, and the simplified bound in the last line holds for $q>0$.
\end{theorem}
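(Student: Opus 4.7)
The plan is to transcribe the Lamport argument from \cref{sec:lamport} to the hash-chain setting. As in \cref{sec:lamport}, I would first move to the \QWorld described in \cref{Hash chains}, where the chain registers $\Gamma_i^j$ for $j\in\{0,\dots,w-2\}$ are initially prepared in $\ket{\Phi}$, the public-key blocks $\gamma_i^{w-1}$ are sampled classically, and the random oracle is replaced by the controlled unitary $U_h$ of \cref{eq:Uh}. The \RealWorld-to-\QWorld switch costs at most $3(wl)^2/2^n$ by \cref{lem:ind-r-v-qi}. The structural property I will lift to quantum access is the checksum guarantee from \cref{sec:WOTS}: for any blinded target $m^*$ and any un-blinded $m$, there is an index $i$ with $b(m^*)_i<b(m)_i$, so signing $m$ cannot directly supply the chain element $\gamma_i^{b(m^*)_i}$ required to forge on $m^*$.

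For a fixed $m^*$ I would define an $(l+1)$-outcome $Q^{m^*}$-measurement on the hash-chain register, analogous to \cref{eq:Q}: scanning $i^*=1,\dots,l$ lexicographically, the $i^*$-th projector tests whether $\Gamma_{i^*}^{b(m^*)_{i^*}}$ is in $\ket{\Phi}$ (with the earlier such registers projected onto $\Phi^\perp$), and the last outcome corresponds to \emph{none} of the registers $\Gamma_i^{b(m^*)_i}$ lying in $\ket{\Phi}$. In parallel, I would generalize the invariant projector of \cref{eq:invariant projector} to $P_\Gamma=\sum_{\alpha\in\widehat{B^c}}\Phi(\alpha)_\Gamma$, where $\Phi(\alpha)_\Gamma=\bigotimes_{i,j}\Phi(\alpha_i^j)_{\Gamma_i^j}$ and $\alpha\in\widehat{B^c}$ iff there is an un-blinded $m\in B^c$ with $\alpha_i^j=0$ for every $(i,j)$ with $j<b(m)_i$; that is, every chain register lying strictly below the signed position is in $\ket{\Phi}$.

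Four lemmas then play the role of \cref{Commutator argument1,Orthogonality argument,final state invariant if no hash queries,commutator of ps and uh}. First, $\norm{[U_h,\Phi_{\Gamma_i^j}]}_\infty\le \epsilon_W(n)=O(w/2^{n/2})$, obtained by expanding $U_h$ via \cref{eq:Uh} and noting that only the factors $U_i^{j-1}$ and $U_i^j$ fail to commute with $\Phi_{\Gamma_i^j}$ trivially; each such commutator is bounded using \cref{eq:commPeqQ} applied to the equality projector of \cref{Uij} controlled on the neighbouring chain register. Second, $Q^{m^*}_{l+1}P_\Gamma=0$ for every blinded $m^*$: a state in the image of $P_\Gamma$ has some un-blinded $m$ for which $\Gamma_i^j$ is in $\ket{\Phi}$ for all $j<b(m)_i$, and the checksum gives an $i$ with $b(m^*)_i<b(m)_i$, forcing $\Gamma_i^{b(m^*)_i}$ to be in $\ket{\Phi}$. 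Third, $P_\Gamma\cdot B\Sign_{\sk}\ket{\psi_0}=B\Sign_{\sk}\ket{\psi_0}$ when no hash queries precede the signing call, because the signing oracle only XORs from $\Gamma_i^{b(m)_i}$ into $\Sigma_i$ and leaves every $\Gamma_i^j$ with $j<b(m)_i$ in its initial $\ket{\Phi}$ state. Fourth, $\norm{[U_h,P_\Gamma]}_\infty\le \delta_W(n)=O(lw^2/2^{n/2})$, obtained by decomposing $P_\Gamma$ into its $\Phi(\alpha)$ components, $U_h$ into the product of $U_i^j$ and $U^{\neq}$ factors, and applying \cref{lem:commutators} together with the single-register bound.

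With these in hand, the end-game follows \cref{sec:HQ-SQ-ltos} verbatim. Splitting the $q$ random-oracle queries into $q_0$ before and $q_1$ after the sign query, the pre-signing step yields $\norm{\Phi_\Gamma^{\otimes l(w-1)}\ket{\psi_0}-\ket{\psi_0}}_2\le l(w-1)q_0\epsilon_W(n)$; the post-signing step gives $\norm{P_\Gamma\ket{\psi_1'}-\ket{\psi_1'}}_2\le q_1(\delta_W(n)+4l(w-1)\epsilon_W(n))$; the orthogonality lemma bounds the amplitude of the ``none'' outcome by the same quantity; and conditionally on any other outcome the forged block $\sigma^*_{i^*}$ is independent of the classical content of $\Gamma_{i^*}^{b(m^*)_{i^*}}$, giving per-outcome success $2^{-n}$. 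The pinching lemma \cref{Pinching} costs a factor $l+1$, and \cref{lem:ind-r-v-qi} pays $3(wl)^2/2^n$ to return to the \RealWorld, producing the announced bound after maximizing over $q_0+q_1=q$. The principal obstacle will be the bookkeeping inside the commutator bound for $P_\Gamma$: because each register $\Gamma_i^j$ appears as a control both inside $U_i^{j-1}$ and inside $U_i^j$ (see \cref{Uij}), a naive application of \cref{lem:commutators} to \cref{eq:Uh} overcounts, and the constant $(w-1)^2(20w-4)^2$ in the theorem statement requires carefully summing non-trivial commutators across the $l(w-1)$ chain registers on both the equality-projector side and the $\CNOT$ side of each $U_i^j$. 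Once this constant is pinned down, the remainder of the proof is a direct transcription of \cref{sec:lamport}.
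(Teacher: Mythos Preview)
Your plan is essentially the paper's proof: the paper defines exactly the $Q^{b^*}$-measurement and the invariant projector $P_\Gamma$ you describe, proves the four lemmas you list as \cref{lem:Commutator,A'stateInv:winternitz,lem:orth-Ql+1-Pga-Wots,ComPgamUh:winternitz}, and finishes via pinching and \cref{lem:ind-r-v-qi} just as in the Lamport case.

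One point of care. The paper formulates the base commutator bound (\cref{lem:Commutator}) for the \emph{chain-prefix} projector $\Phi_{\Gamma_i^{\le j}}=\bigotimes_{k\le j}\Phi_{\Gamma_i^k}$ rather than for a single register, because the proof of \cref{ComPgamUh:winternitz} decomposes $P_\Gamma$ along chain $i$ into $w$ terms of the form $(\Phi^{\otimes j})_{\Gamma_i^{\le j-1}}\otimes(\cdot)_{\Gamma_{i^c}}$, reflecting that $\widehat{B^c}$ constrains an entire prefix $\alpha_i^0=\cdots=\alpha_i^{b_i(m)-1}=0$ at once. Your single-register route also works, via the two-term split $P_\Gamma=\Phi_{\Gamma_i^j}\otimes A+\Phi^\perp_{\Gamma_i^j}\otimes B$ as in \cref{eq:PS split}, but your phrase ``decomposing $P_\Gamma$ into its $\Phi(\alpha)$ components'' is dangerous if read literally: a triangle inequality over all $\alpha\in\widehat{B^c}$ sums exponentially many terms and destroys the bound. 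You must use the two-term (or the paper's $w$-term) split at each register, not the full $\alpha$-expansion.
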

 
 The main difference between the analyses of the Lamport and Winternitz OTS is as follows. For the Lamport OTS, the public key is obtained from the private key by applying a hash function once. For the Winternitz OTS, on the other hand, the secret an public key consist of the start and end points of length $w$ hash chains, respectively. Thus, while following the same proof strategy, the $Q$ projectors as well as the invariant projector $P$ needs to be defined differently. Thus, we start our analysis by describing the $Q$ projectors and the invariant projector for the Winternitz OTS.
\subsection{$Q$ projectors for Winternitz OTS} \label{sec:QProjWinternitz}

The Winternitz signature of any message is composed of $l$ hash chain elements. In complete analogy to \cref{eq:Q} in \cref{sec:Q measurement}, we define a measurement whose projectors correspond respectively to the events that \emph{the $i$-th hash chain element relevant for the forged signature is in state $\ket\Phi$} and \emph{none of them is in state $\ket\Phi$}:
\begin{align}
    Q_{i^*}^{b^*}
    &=
    \Phi^\perp_{\Gamma_1^{b^*_1}} \x \dotsb \x
    \Phi^\perp_{\Gamma_{i^*-1}^{b^*_{i^*-1}}} \x
    \Phi_{\Gamma_{i^*}^{b^*_{i^*}}}, &
    Q_{l+1}^{b^*}
    &= \bigotimes_{i=1}^l
    \Phi^\perp_{\Gamma_i^{b^*_i}}
    \label{eq:Qq}
\end{align}
where $i^* \in \set{1,\dotsc,l}$, $b_i^* = b_i(m^*)$ and $l$ is the number of blocks of the message and the checksum, see \cref{eq:Winternitz-params}.
These operators act as $\1$ on all other registers $\Gamma_i^j$ that are not specified.

 \subsection{Invariant projector for Winternitz OTS}\label{sec:invariant projector for Winternitz}

In this section, we define the invariant projector, denoted by $P_\Gamma$, that will be used to track the invariant of the hash chain register. We also state several of its properties.

Recall from our discussion of blind unforgeability in \cref{sec:BU} that $B$ denotes the set of blinded messages and $B^c$ its complement,i.e., the set of un-blinded messages. We also recall from the description of the Winternitz OTS in \cref{sec:WOTS} that a block $b$ of a message is the concatenation of the blocks obtained from the encoding of the message and its corresponding checksum in w-ary.

Define $\alpha = (\alpha_i^j)_{i=1,\dotsc,l}^{j=0,\dotsc,w-2}$ as a $l(w-1)$-bit string whose bits $\alpha_i^j \in \set{0,1}$ indicate that the projector $\Phi(\alpha_i^j)$ is applied on the corresponding hash chain register $\Gamma_i^j$ where
\begin{align}
    \Phi(0) &= \Phi, &
    \Phi(1) &= \Phi^\perp.
\end{align}
For each string $\alpha$, we define the associated projector $\Phi(\alpha)$ on the whole hash chain (except for the last) register $\Gamma$ as
\begin{equation}
    \Phi(\alpha)_\Gamma =
    \bigotimes_{i=1}^l
    \bigotimes_{j=0}^{w-2}
    \Phi(\alpha_i^j)_{\Gamma^j_i}.
    \label{eq:Phi(a)-Wots}
\end{equation}
Note that this is a complete set of projectors, i.e., $\sum_{\alpha\in\{0,1\}^{l(w-1)}} \Phi(\alpha)_\Gamma = \1_\Gamma$.

Since we are interested in the unused part of the hash chain register, we need to filter those $\alpha$'s for which $\Gamma^j_i$ is in state $\ket\Phi$.
By construction of the checksum, if a block $b$ of a message $m$ is computed,
then in the block $b'$ of any other message $m'$, there exists at least one position $i$ at which $b'_i<b_i$, $1\le i\le l$. Therefore, since the blinded signing oracle signs at most a single un-blinded message, $m\in B^c$, the state after the signing oracle call can be written as a superposition of states where, for some un-blinded message $m' \in B^c$, $b'_i<b_i$ for all $i$. The latter implies that the hash chain registers corresponding to those $b'_i$ are still in  the uniform superposition $\ket{\Phi}$, for all $i$.
Thus, we collect all strings $\alpha$ that are consistent with no blinded messages having been signed in
\begin{align}
    \widehat{B^c} = \bigcup_{m \in B^c}
    \set[\Big]{
        \alpha \in \set{0,1}^{l(w-1)}
        \mathrel{\Big|}
        \text{$\alpha_i^j = 0$ for all $i=1,\dotsc,l$ and $j < b_i(m)$}
    }
    \label{def of Be complement}
\end{align}
as the set of strings $\alpha$ that indicate which hash chain registers were not used during hash queries and $\Sign$ query, that is those that fulfill the condition $\alpha_i^{\bar{m}_i} = 0$ for all $i$.
Specifically, $\widehat{B^c}$ contains all the strings that are consistent with no blinded messages having been signed.
Finally, we define
\begin{equation}
    P_\Gamma 
  = \sum_{\alpha\in\widehat{B^c}}
    \Phi(\alpha)_\Gamma
    \label{eq:PGamma}
\end{equation}
as the projector acting on the invariant hash chain register, specifically on the subspace consistent with $\widehat{B^c}$. Note that $P_\Gamma$ is indeed a projector since it is a sum of mutually orthogonal projectors.
 
Using these new definitions of the $Q$ projectors and the invariant projector $P_\Gamma$, a set of lemmas similar to \cref{Commutator argument1,Orthogonality argument,final state invariant if no hash queries,commutator of ps and uh} forms the basis of the \BU security proof for the Winternitz OTS. In fact, \cref{Commutator argument1} is a special case of \cref{lem:Commutator} where the register $\Gamma$ is replaced by $S$ and we set $w = 2$ (see \cref{apx:L8andL15} for proof).
\Cref{Orthogonality argument} holds for the new projectors $Q_{l+1}$ and $P_\Gamma$ by construction. Finally, \cref{commutator of ps and uh,final state invariant if no hash queries} need to be changed slightly for the Winternitz OTS and are stated below. \Cref{A'stateInv:winternitz,lem:orth-Ql+1-Pga-Wots,ComPgamUh:winternitz} are proved in \cref{apx:Winternitz lemmas}.

\begin{restatable}{lemma}{LemComWin}\label{lem:Commutator}
Let $U_h$ be the random oracle unitary for any given function $h$ (see \cref{Hash chains}) and let $\Phi = \proj{\Phi}$ denote the projector onto the uniform superposition $\ket{\Phi}$. Furthermore, let $\Gamma_{i}^{\leq j} = \Gamma_i^0 \dots \Gamma_i^{j}$ and
\begin{equation}
    \Phi_{\Gamma_{i}^{\leq j}}
    = \of*{\Phi\xp{j}}_{\Gamma_{i}^{\leq j}}.
    \label{eq:R}
\end{equation}
Then, for any $i' \in \set{1,\dotsc,l}$ and $j' \in \set{0,\dotsc,w-2}$,
\begin{equation}
    \norm[\Big]{\sof[\Big]{
        (U_h)_{XY\Gamma},
        \Phi_{\Gamma_{i'}^{\leq j'}}
    }}_\infty
    \leq \frac{6(w-1)}{2^{n/2}}
    = \epsilon_W(n)
    \label{eq:UhPhi-Wots}
\end{equation}
is negligible in $n$.
\end{restatable}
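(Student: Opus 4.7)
The plan is to reduce the Winternitz commutator to the Lamport-style single-register estimate from \cref{Commutator argument1} by telescoping over the (at most $w-1$) tensor factors of $\Phi_{\Gamma_{i'}^{\leq j'}}$. I would rewrite $\Phi_{\Gamma_{i'}^{\leq j'}}$ as a commuting product of single-register projectors $\Phi_{\Gamma_{i'}^k}$, each acting as $\Phi$ on $\Gamma_{i'}^k$ and as the identity elsewhere. Since $U_h$ is unitary and each $\Phi_{\Gamma_{i'}^k}$ is a projector, \cref{lem:commutators} gives
\[
    \norm{[U_h, \Phi_{\Gamma_{i'}^{\leq j'}}]}_\infty
    \le \sum_{k} \norm{[U_h, \Phi_{\Gamma_{i'}^k}]}_\infty,
\]
so it suffices to bound every single-register commutator by $6/2^{n/2}$.

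For a fixed $k$, I would use the factorization $U_h = \bigl( \prod_{i,j} U_i^j \bigr) U^{\neq}$ from \cref{eq:Uh}. By inspection of \cref{eq:Uij definition}, the only $U_i^j$ that fail to commute with $\Phi_{\Gamma_{i'}^k}$ are $U_{i'}^{k-1}$ (which uses $\Gamma_{i'}^k$ as the $\CNOT$ control) and $U_{i'}^k$ (which uses it for the equality check against $X$); $U^{\neq}$ also fails to commute, but only through the single factor $P^{\neq}_{X\Gamma_{i'}^k}$ inside its projector product \cref{eq:Pneq}. Applying \cref{lem:commutators} once more to peel off the commuting factors leaves at most three non-trivial pieces. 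In each piece, the only obstruction to commutativity reduces to a commutator of the form $[P^=_{X\Gamma_{i'}^k}, \Phi_{\Gamma_{i'}^k}]$, which is bounded by $2 \cdot 2^{-n/2}$ via \cref{lem:PQ} and the triangle inequality (cf.\ \cref{eq:commPeqQ}). Summing three such contributions yields $6 \cdot 2^{-n/2}$ per register and hence $6(w-1)/2^{n/2}$ overall.

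The main technical obstacle will be cleanly handling the $U^{\neq}$ contribution, because its definition \cref{eq:Uneq} couples all hash-chain registers through the product \cref{eq:Pneq}. The key observation is that only the single factor $P^{\neq}_{X\Gamma_{i'}^k}$ fails to commute with $\Phi_{\Gamma_{i'}^k}$, so one more application of \cref{lem:commutators} isolates it and reduces the estimate to the same $2^{-n/2}$ scaling. This is precisely the bookkeeping done (in the simpler Lamport setting with $w=2$) in the appendix proof of \cref{Commutator argument1}; it transfers to the Winternitz setting without new ideas, with the extra $(w-1)$ factor in the final bound arising solely from the first telescoping step.
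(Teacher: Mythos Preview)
Your telescoping breaks at the $U_{i'}^{k-1}$ piece. That unitary touches $\Gamma_{i'}^k$ only as the \emph{control} of $(\CNOT\xp{n})_{\Gamma_{i'}^k:Y}$, so after stripping the commuting factor $P^=_{X\Gamma_{i'}^{k-1}}$ you are left with $\bigl[(\CNOT\xp{n})_{\Gamma_{i'}^k:Y},\,\Phi_{\Gamma_{i'}^k}\bigr]$, which is \emph{not} of the $[P^=,\Phi]$ type. Its operator norm is $\Omega(1)$: applied to $\ket{\Phi}_{\Gamma_{i'}^k}\ket{y}_Y$ it returns a maximally entangled state minus a vector of norm $2^{-n/2}$. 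Consequently $\norm{[U_h,\Phi_{\Gamma_{i'}^k}]}_\infty$ is not $O(2^{-n/2})$ for $k\ge 1$, and the sum over $k$ does not give the stated bound.

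The paper's proof reverses the order of the two splits: it keeps $\Phi_{\Gamma_{i'}^{\le j'}}$ intact and applies \cref{lem:commutators} to the factorization $U_h=(\prod_{i,j}U_i^j)\,U^{\neq}$. The only nonzero pieces are $[U_{i'}^j,\Phi_{\Gamma_{i'}^{\le j'}}]$ with $j\le j'$, and for each such $j$ the equality-check register $\Gamma_{i'}^j$ \emph{is} one of the tensor factors of $\Phi_{\Gamma_{i'}^{\le j'}}$. Bounding $\norm{[A,\Phi]}\le 2\norm{A\Phi}$ (both factors self-adjoint) then lets one extract $\norm{P^=_{X\Gamma_{i'}^j}\Phi_{\Gamma_{i'}^j}}_\infty\le 2^{-n/2}$ via \cref{lem:PQ}, with the $\CNOT-\1$ absorbed into a constant. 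The $(w-1)$ arises from counting the surviving values of $j$, not from telescoping the factors of $\Phi$. Your handling of the $U_{i'}^k$ and $U^{\neq}$ pieces is fine up to constants; only the order of decomposition must change.
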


\begin{restatable}{lemma}{LemNoHashWin}\label{A'stateInv:winternitz}
Let $B\Sign_{\sk}$ be the blinded signing oracle for the Winternitz OTS, and let $\ket{\psi_0}$ be the adversary's state before the $\Sign$ query. If there are no hash queries, then after making a single $\Sign$ query the adversary's state $\ket{\psi_1} = B\Sign_{\sk}\ket{\psi_0}$ is completely in the range of the invariant projector $P_\Gamma$ defined in \cref{eq:PGamma}. That is,
\begin{equation}
    P_\Gamma B\Sign_{\sk}\ket{\psi_0} = B\Sign_{\sk}\ket{\psi_0}.
\end{equation}
\end{restatable}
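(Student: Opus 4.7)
The plan is to follow the structure of the analogous Lamport lemma (\cref{final state invariant if no hash queries}), adapted to the hash-chain structure of Winternitz. Since there are no hash queries, the hash-chain register $\Gamma$ starts in $\ket{\Phi}^{\otimes l(w-1)}$ on the registers $\Gamma_i^j$ with $j<w-1$ and is unentangled from everything else, so I would write
\[
    \ket{\psi_0}
    = \of*{
        \sum_{m\in B}\ket{m}_M\ket{\phi_m}
      + \sum_{m\in B^c}\ket{m}_M\ket{\phi_m}
    }
    \otimes \ket{\Phi}^{\otimes l(w-1)}_\Gamma
\]
and apply $B\Sign_{\sk}$ termwise. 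The blinded terms are left entirely unchanged; for an un-blinded $m\in B^c$ the oracle applies $\bigotimes_{i=1}^l \CNOT^{\otimes n}_{\Gamma_i^{b_i(m)}:\Sigma_i}$, which entangles only the registers $\Gamma_i^{b_i(m)}$ with the signature register and in particular leaves every $\Gamma_i^j$ with $j<b_i(m)$ in the state $\ket{\Phi}$.

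The heart of the argument is to exhibit, for each term of this decomposition, an explicit sub-projector of $P_\Gamma$ that already fixes it. Because $\{\Phi(\alpha)_\Gamma\}_{\alpha\in\{0,1\}^{l(w-1)}}$ is a complete family of pairwise orthogonal projectors and $P_\Gamma=\sum_{\alpha\in\widehat{B^c}}\Phi(\alpha)_\Gamma$, any sub-sum indexed by a subset of $\widehat{B^c}$ is itself a sub-projector of $P_\Gamma$. For the blinded terms one just notes that $\alpha=0^{l(w-1)}\in\widehat{B^c}$ (witnessed by any $m'\in B^c$), so $\Phi^{\otimes l(w-1)}_\Gamma\leq P_\Gamma$, and since $\Phi\ket{\Phi}=\ket{\Phi}$ on each tensor factor the blinded part is fixed. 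For each un-blinded term with message $m$, define
\[
    \Pi_m
    = \sum_{\substack{\alpha\in\{0,1\}^{l(w-1)}\\ \alpha_i^j=0\ \forall j<b_i(m)}}\Phi(\alpha)_\Gamma
    = \bigotimes_{i=1}^l\of*{
        \bigotimes_{j=0}^{b_i(m)-1}\Phi_{\Gamma_i^j}
        \otimes
        \bigotimes_{j=b_i(m)}^{w-2}\1_{\Gamma_i^j}
    }.
\]
Every $\alpha$ in this sum lies in $\widehat{B^c}$ because $m\in B^c$, so $\Pi_m\leq P_\Gamma$; moreover $\Pi_m$ fixes the $m$-th un-blinded component, precisely because the ``below the signed position'' registers $\Gamma_i^j$ with $j<b_i(m)$ are still in $\ket{\Phi}$ after the single $\Sign$ call and the other chain registers are simply passed through by the identity factor.

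Combining the two cases by linearity and using $P_\Gamma\Pi_m=\Pi_m$ and $P_\Gamma\Phi^{\otimes l(w-1)}_\Gamma=\Phi^{\otimes l(w-1)}_\Gamma$ then yields $P_\Gamma\ket{\psi_1}=\ket{\psi_1}$. The only real bookkeeping step is verifying that, after a single $\Sign$ call for $m\in B^c$, it is exactly the registers $\Gamma_i^j$ with $j<b_i(m)$ that remain in $\ket{\Phi}$ and that these are the registers flagged by $\alpha_i^j=0$ in the definition \cref{def of Be complement} of $\widehat{B^c}$; this is where the Winternitz-specific checksum semantics enter the Lamport template. Modulo that, the proof is a purely combinatorial consequence of the structure of $P_\Gamma$ as a sum of orthogonal projectors indexed by a set closed under extending the ``free'' positions beyond $b_i(m)$, with the single bit index $j\in\{0,1\}$ of Lamport replaced by the chain-position index $j\in\{0,\dotsc,w-2\}$. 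No commutator bounds are needed here; those enter only in \cref{ComPgamUh:winternitz}.
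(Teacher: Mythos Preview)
Your proposal is correct and follows essentially the same approach as the paper. Your sub-projector $\Pi_m$ is precisely the paper's $P_\Gamma^1(m)$ (the paper splits $P_\Gamma=P_\Gamma^1(m)+P_\Gamma^2(m)$ and shows the second piece annihilates the $m$-term, which is just a restatement of your $\Pi_m\le P_\Gamma$ together with $\Pi_m$ fixing the $m$-term); the blinded case and the identification of which chain registers remain in $\ket{\Phi}$ after a single $\Sign$ call are handled identically.
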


For every message in the blinding set $B$, there exists at least one hash chain element necessary for its corresponding signature such that the corresponding hash chain register is in the uniform superposition state $\ket\Phi$. This implies the following lemma.

\begin{restatable}{lemma}{LemOrthWots}\label{lem:orth-Ql+1-Pga-Wots} 
Let $m^* \in B$ and $b^* = b(m^*)$, see \cref{eq:b(m)}. Then the projectors $Q_{l+1}^{b^*}$ and $P_\Gamma$ defined in \cref{eq:PGamma,eq:Qq} are orthogonal:
\begin{equation}
    Q_{l+1}^{b^*} P_\Gamma = 0.
\end{equation}
\end{restatable}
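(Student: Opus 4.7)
The plan is to expand $P_\Gamma$ as the sum $\sum_{\alpha\in\widehat{B^c}}\Phi(\alpha)_\Gamma$ of orthogonal projectors (from \cref{eq:PGamma}) and prove the orthogonality term by term: it suffices to show $Q_{l+1}^{b^*}\Phi(\alpha)_\Gamma = 0$ for every $\alpha \in \widehat{B^c}$. The projector $Q_{l+1}^{b^*}$ places $\Phi^\perp$ on each register $\Gamma_i^{b_i^*}$, whereas $\Phi(\alpha)_\Gamma$ places $\Phi$ on $\Gamma_i^j$ precisely when $\alpha_i^j = 0$. Since $\Phi\Phi^\perp = 0$ and the remaining tensor factors act on disjoint registers, the product vanishes as soon as we can exhibit a single index $i \in \{1,\dots,l\}$ with $\alpha_i^{b_i^*} = 0$.

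To produce such an index I would combine the definition of $\widehat{B^c}$ with the checksum property of the Winternitz encoding (spelled out in \cref{sec:WOTS}). By \cref{def of Be complement}, there is some un-blinded $m \in B^c$ such that $\alpha_i^j = 0$ whenever $j < b_i(m)$. Because $m^* \in B$ while $m \in B^c$, we have $m \neq m^*$ and hence $b(m) \neq b(m^*) = b^*$; the checksum guarantee then supplies an index $i$ with $b_i^* < b_i(m)$. In particular $b_i^* \leq w-2$, so $\Gamma_i^{b_i^*}$ is a genuine hash chain register (not the public-key register $\Gamma_i^{w-1}$) on which $P_\Gamma$ acts non-trivially, and applying the defining condition of $\widehat{B^c}$ with $j = b_i^*$ yields $\alpha_i^{b_i^*} = 0$, which is what we need.

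Assembling the pieces, $Q_{l+1}^{b^*}\Phi(\alpha)_\Gamma$ contains the factor $\Phi^\perp_{\Gamma_i^{b_i^*}}\Phi_{\Gamma_i^{b_i^*}} = 0$, hence vanishes, and summing over $\alpha \in \widehat{B^c}$ gives $Q_{l+1}^{b^*} P_\Gamma = 0$. I expect no real obstacle here: the argument is a direct combinatorial consequence of the checksum encoding, closely mirroring the proof of \cref{Orthogonality argument} for the Lamport case (where the role played here by the checksum is played there by the simpler fact that $m \neq m^*$ forces $\bar m_i = m_i^*$ at some index). The only technical care needed is verifying $b_i^* \leq w-2$ so that the relevant factor of $Q_{l+1}^{b^*}$ actually overlaps with a factor of $\Phi(\alpha)_\Gamma$, and this is automatic from $b_i^* < b_i(m) \leq w-1$.
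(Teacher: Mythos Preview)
Your proposal is correct and follows essentially the same approach as the paper: expand $P_\Gamma$ as a sum over $\alpha\in\widehat{B^c}$, use the definition of $\widehat{B^c}$ to find an $m\in B^c$, invoke the checksum property to produce an index $i$ with $b_i^*<b_i(m)$, and conclude that $\Phi^\perp_{\Gamma_i^{b_i^*}}\Phi_{\Gamma_i^{b_i^*}}=0$. Your explicit check that $b_i^*\le w-2$ (so that $\Gamma_i^{b_i^*}$ is one of the registers on which $\Phi(\alpha)_\Gamma$ actually acts) is a point the paper's proof leaves implicit.
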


\begin{restatable}{lemma}{LemComPWin}\label{ComPgamUh:winternitz}
Let $P_\Gamma$ and $U_h$ be respectively the invariant projector for the Winternitz OTS and the random oracle unitary defined with respect to the \QWorld. If there are hash queries after the $\Sign$ query, then 
\begin{equation}
     \bigl\| [U_h,P_\Gamma]
     \bigl\|_\infty
     \leq \delta_W (n)
\end{equation}
where \[\delta_W(n) = \frac{8l(w+1)(w-1)}{2^{n/2}} .\]
\end{restatable}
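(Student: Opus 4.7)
The plan is to mirror the strategy used for the Lamport analog \cref{commutator of ps and uh}, adapted to the longer hash chains of the Winternitz scheme. I would start by applying \cref{lem:commutators} to the factorization of $U_h$ from \cref{eq:Uh} to reduce the task to bounding the commutator of each of the $l(w-1)+1$ unitary factors with $P_\Gamma$:
\[
    \|[U_h, P_\Gamma]\|_\infty \le \sum_{i=1}^{l}\sum_{j=0}^{w-2} \|[U_i^j, P_\Gamma]\|_\infty + \|[U^{\neq}, P_\Gamma]\|_\infty.
\]

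For each factor $U_i^j$, I would use the representation $U_i^j = \1 + P^=_{X\Gamma_i^j}(V-\1)$ from \cref{Uij}, with $V=(\CNOT^{\otimes n})_{\Gamma_i^{j+1}:Y}$, together with the Leibniz identity $[AB,C]=A[B,C]+[A,C]B$. Since $P^=_{X\Gamma_i^j}$ and $V-\1$ touch disjoint subsets of the $\Gamma$-registers ($\Gamma_i^j$ and $\Gamma_i^{j+1}$ respectively), I would decompose $P_\Gamma$ according to the values of $\alpha_i^j$ and $\alpha_i^{j+1}$ at these two registers:
\[
    P_\Gamma = \sum_{a,b\in\{0,1\}} \Phi(a)_{\Gamma_i^j} \otimes \Phi(b)_{\Gamma_i^{j+1}} \otimes P^{(a,b)},
\]
where each $P^{(a,b)}$ is a projector acting only on the remaining $\Gamma$-registers and therefore commuting with $U_i^j$. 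The four resulting commutator terms can then be estimated using the fundamental bound $\|P^=_{X\Gamma_i^j}\Phi_{\Gamma_i^j}\|_\infty=2^{-n/2}$ from \cref{lem:PQ} together with the commutator bound \cref{eq:commPeqQ}. For $U^{\neq}$, I would use $P^{\neq}_{X\Gamma} = \prod_{i,j} P^{\neq}_{X\Gamma_i^j}$ and another application of \cref{lem:commutators} to reduce to the same single-register estimates.

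The main obstacle is the $(a,b)=(1,1)$ contribution, where both $\Gamma_i^j$ and $\Gamma_i^{j+1}$ project onto the orthogonal complement of the uniform superposition; here neither $P^=\Phi^\perp$ nor $[V,\Phi^\perp]$ is small on its own, and the naive per-factor estimate fails to give the required $O(w/2^{n/2})$ bound. I would resolve this by exploiting the combinatorial constraint on $\widehat{B^c}$ encoded in \cref{def of Be complement}: if $\alpha_i^j=\alpha_i^{j+1}=1$, then any witness $m\in B^c$ for $\alpha$ must satisfy $b_i(m)\le j$, forcing a block of earlier registers $\Gamma_i^{j'}$ with $j'<b_i(m)$ to carry $\Phi$ projectors. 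Grouping these contributions along the hash chain and absorbing them into the full-segment commutator bound $\|[U_h,\Phi_{\Gamma_i^{\leq j'}}]\|_\infty\le 6(w-1)/2^{n/2}$ from \cref{lem:Commutator} recovers a per-factor estimate of order $O(w/2^{n/2})$. Summing across the $l(w-1)+1$ factors and tracking constants carefully yields the stated bound $\delta_W(n)=8l(w+1)(w-1)/2^{n/2}$.
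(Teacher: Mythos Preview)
Your initial reduction via \cref{lem:commutators} and your handling of the $U^{\neq}$ factor match the paper. The gap is in the treatment of the $U_i^j$ factors.

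Your four-way split of $P_\Gamma$ according to $(\alpha_i^j,\alpha_i^{j+1})$ misidentifies the hard cases. Writing $U_i^j-\1=P^=_{X\Gamma_i^j}(V-\1)$ with $V=(\CNOT^{\otimes n})_{\Gamma_i^{j+1}:Y}$ and applying the Leibniz rule gives a term of the form $(\Phi(a)_{\Gamma_i^j}P^=_{X\Gamma_i^j})\cdot[V,\Phi(b)_{\Gamma_i^{j+1}}]$. The commutator $[V,\Phi(b)]$ is $O(1)$ for \emph{both} $b=0$ and $b=1$, and is only suppressed when the prefactor $\Phi(a)P^=$ is small, i.e.\ when $a=0$. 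Hence both $(1,0)$ and $(1,1)$ yield $O(1)$ contributions, not just $(1,1)$. Your proposed rescue---that $\alpha_i^j=1$ forces $\Phi$'s on \emph{earlier} registers $\Gamma_i^{j'}$ with $j'<b_i(m)\le j$---is a true statement about $\widehat{B^c}$, but those registers are untouched by $U_i^j$, so their presence does not help bound $[U_i^j,P_\Gamma]$; and invoking the bound on $[U_h,\Phi_{\Gamma_i^{\le j'}}]$ from \cref{lem:Commutator} after you have already split $U_h$ into factors does not connect.

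The paper exploits a dual consequence of the same combinatorics: if $\alpha\in\widehat{B^c}$ has its first $1$ along chain $i$ at position $k$, then \emph{all later bits} $\alpha_i^{k+1},\dots,\alpha_i^{w-2}$ are unconstrained, so summing over them yields the identity on $\Gamma_i^{>k}$. This gives a staircase decomposition
\[
P_\Gamma=\sum_{k=0}^{w-1}\bigl(\Phi^{\otimes k}\bigr)_{\Gamma_i^{\le k-1}}\otimes (A_i^k)_{\Gamma_{i^c}},\qquad \|A_i^k\|_\infty\le 2,
\]
where each $A_i^k$ acts only on the other chains. Now in every term the pair $(\Gamma_i^j,\Gamma_i^{j+1})$ carries one of the patterns $(\Phi,\Phi)$, $(\Phi,\1)$, or $(\1,\1)$---never a nontrivial projector on $\Gamma_i^{j+1}$ alongside $\Phi^\perp$ on $\Gamma_i^j$---so each $\|[U_i^j,\Phi^{\otimes k}_{\Gamma_i^{\le k-1}}]\|_\infty$ is controlled by the intermediate estimate \cref{eq:term1} from the proof of \cref{lem:Commutator}. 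Summing over $k$, then over $(i,j)$, and adding the $U^{\neq}$ contribution (for which the simple two-way split you describe suffices) gives $\delta_W(n)=8l(w+1)(w-1)\cdot 2^{-n/2}$.
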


Just like in the Lamport OTS, we use the above lemmas to prove our main results. In the following sections, we analyze the situation where the adversary makes $q_0$ hash queries before the $\Sign$ query and $q_1$ hash queries after, maximizing the resulting bound under the condition that $q_0+q_1=q$.

The proof of \cref{thm:winternitz} is presented in steps in the following subsections. We begin by presenting some concepts and tools which will be used in the proof. Subsequently, we prove the lemmas stated above. Finally, we combine them to prove \cref{thm:winternitz}.

\subsection{Hash queries before \texorpdfstring{$\Sign$}{Sign} query}
In this section, we study the impact of hash queries before $\Sign$ query on the hash chain register $\Gamma$. Our goal is to show that, for a moderate number of queries to the random oracle, no adversary can learn a significant amount of information about the hash chain. Therefore, she cannot produce a valid forgery except with small probability.

Let $\ket{\psi}_{XYM\Sigma E}$ be adversary's initial state before any queries. Before any query is performed, the whole hash chain register $\Gamma$, except the last, is in the uniform superposition state, i.e,
\begin{align}
    \ket{\nu}_\Gamma =
    \bigotimes_{i=1}^l
    \bigotimes_{j=0}^{w-2}
    \ket{\Phi}_{\Gamma_i^j}.
    \label{initial:hash-chain}
\end{align}

Assume the adversary $\A_0$ queries the \textrm{random oracle} %
$q_0$ times before querying the signing oracle.
If $V_{XYE}^i$ denotes the unitary she performs after the $i$-th query, the final adversary state after $q_0$ hash queries is
\begin{equation}
    \ket{\psi_0}_{XYM\Sigma E\Gamma}
  = V^{q_0}_{XYE}
    (U_h)_{XY\Gamma}
    V^{q_0-1}_{XYE}\cdots
    V^2_{XYE}
    (U_h)_{XY\Gamma}
    V^1_{XYE}
    (U_h)_{XY\Gamma}
    \ket{\psi}_{XYM\Sigma E}
    \ket \nu _\Gamma
    \label{eq:before-hash-query}
\end{equation}
where $U_h$ is the random oracle unitary used to answer hash queries. The following lemma shows that the hash chain registers of this state are still close to being in uniform superposition.

\begin{lemma}\label{lem:hash-before-sign-Wots}
In the \QWorld, without querying the $B\Sign$ oracle, hash queries leave the state of the secret key registers approximately unchanged:
\begin{align}
    \norm*{
    \Phi\xp{l(w-1)}_\Gamma
    \ket{\psi_0}_{XYM\Sigma E\Gamma} - 
    \ket{\psi_0}_{XYM\Sigma E\Gamma}
    }_2
    \leq lq_0\epsilon_W(n)
    \label{eq:UhPhi-wots}.
\end{align}
\end{lemma}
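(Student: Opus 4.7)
The plan is to mirror the argument proving \cref{lem:hash-before-sign} in the Lamport case, with the secret-key register $S$ replaced by the hash-chain register $\Gamma$, and with the Winternitz-specific single-chain commutator bound of \cref{lem:Commutator} taking the place of \cref{Commutator argument1}. First, I would abbreviate the entire unitary in \cref{eq:before-hash-query} by $W_{XYE\Gamma}$. Since each query unitary $(U_h)_{XY\Gamma}$ acts on $\Gamma$ only in a way controlled by $\Gamma$, one has $W_{XYE\Gamma}\, \Phi\xp{l(w-1)}_\Gamma = W_{XYE\Gamma}$ when applied to any state of the form $\ket{\psi}_{XYM\Sigma E}\ket{\nu}_\Gamma$, exactly as in \cref{slightlyInv:lamport}. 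Rewriting the difference as a commutator applied to the initial state and using the definition of the operator norm then reduces the claim to bounding $\norm{[\Phi\xp{l(w-1)}_\Gamma, W_{XYE\Gamma}]}_\infty$, since the initial state is normalized.

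Next, I would apply \cref{lem:commutators} with $A = \Phi\xp{l(w-1)}_\Gamma$ and the $B_i$ chosen as the query unitaries $(U_h)_{XY\Gamma}$ and the adversary unitaries $V^i_{XYE}$ in alternation. The operators $V^i_{XYE}$ act on registers disjoint from $\Gamma$ and therefore commute with $\Phi\xp{l(w-1)}_\Gamma$, so the corresponding commutators vanish. This leaves exactly $q_0$ copies of $\norm{[\Phi\xp{l(w-1)}_\Gamma, (U_h)_{XY\Gamma}]}_\infty$.

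The last step handles this single-query commutator. Using the notation of \cref{eq:R}, I would factor the projector across the $l$ hash chains as $\Phi\xp{l(w-1)}_\Gamma = \bigotimes_{i=1}^l \Phi_{\Gamma_i^{\leq w-2}}$ and apply \cref{lem:commutators} a second time, now with $A = (U_h)_{XY\Gamma}$ and the $B_i$'s being the $l$ chain-level projectors $\Phi_{\Gamma_i^{\leq w-2}}$. This bounds the single-query commutator by $\sum_{i=1}^l \norm{[\Phi_{\Gamma_i^{\leq w-2}}, (U_h)_{XY\Gamma}]}_\infty$, and each summand is at most $\epsilon_W(n)$ by \cref{lem:Commutator}. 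Combining everything yields the claimed bound $l q_0 \epsilon_W(n)$.

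The argument is essentially bookkeeping once the Lamport proof and \cref{lem:Commutator} are in hand; the only mild subtlety is to group the $l(w-1)$ individual registers $\Gamma_i^j$ into the $l$ whole-chain blocks $\Gamma_i^{\leq w-2}$ before invoking \cref{lem:Commutator}, since the latter is formulated for a projector onto an entire hash chain rather than for individual hash-chain registers. Decomposing all the way down to single registers $\Gamma_i^j$ would force one to prove an analogue of \cref{Commutator argument1} for the Winternitz query unitary, which is not provided; the whole-chain grouping side-steps this and produces the factor $l$ rather than $l(w-1)$ in the final bound.
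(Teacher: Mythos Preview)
Your proposal is correct and follows essentially the same approach as the paper: the paper's proof simply states that it mirrors the proof of \cref{lem:hash-before-sign}, replacing the per-register application of \cref{Commutator argument1} over the $2l$ secret-key registers with the per-chain application of \cref{lem:Commutator} over the $l$ hash chains at $j'=w-2$, which is precisely the grouping you describe.
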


\begin{proof}
The proof of \cref{lem:hash-before-sign-Wots} is very similar to the proof of \cref{lem:hash-before-sign} in the Lamport OTS security analysis, except that it uses \cref{lem:Commutator} for each of the $l$ hash chains, with $j'=w-2$, where the proof of \cref{lem:hash-before-sign} applies \cref{Commutator argument1} for each of the $2l$ secret key registers.
\end{proof}
As in for the Lamport OTS, the above Lemma means that the adversary learns almost no information about the hash chain, unless $q_0=\Omega(2^{n/2})$.

\subsection{Query to the signing oracle}\label{sec:sign-queryWots}

Now that we have control over the advantage an adversary can gain from making hash queries before the sign query, we need to analyze the possible advantage from hash queries after the sign query and bound the overall success probability using \cref{lem:hash-before-sign-Wots}.
The discussion in this section does not concern the random oracle, so we absorb the random oracle query registers $XY$ into $E$ for the purpose of this section. 

A key property of the Winternitz OTS when analyzing classical security is that for all messages $m$ that have not been queried, there exists an index $j$ such that $s_j^{m_j}$ is hidden from the adversary by the preimage resistance of the used hash function. In blind-unforgeability (for classical adversaries), this property holds for all \emph{blinded messages}. In the setting of quantum queries, we have to track this property in superposition while the adversary is making hash queries after the sign query. As this is complicated by the ``for all''-quantifier, we begin by analyzing the case where the adversary makes no hash queries after the sign query to ease the reader into our proof technique.

In the $1$-$\BlindForge$ game, an adversary $\A$ is allowed to query the $\Sign$-oracle at most once to produce a valid forged message-signature pair $(m^*,\sigma^*)$. To analyze the interaction between $\A$ and the signing oracle, we will follow the steps stated in \cref{Interaction:Bsign-Adver}.
Those steps correspond to applying the $\Sign$-oracle and an arbitrary unitary $U_{M\Sigma E}$, followed by measuring the message and signature registers $M$ and $\Sigma$. Let us now analyze these steps in more detail and write down the corresponding quantum states.

First, the adversary $\A$ prepares the state
\begin{align}
  \ket{\psi _0}_{M\Sigma EB\Gamma} 
   &=\sum_{m\in\{0,1\}^a}
    \sum_{\sigma\in(\{0,1\}^n)^l}
    \kappa_{m\sigma b} 
    \ket{m}_M
    \ket{\sigma}_{\Sigma}
    \ket{\alpha_{m\sigma b}}_E
    \ket{b}_B
    \ket \nu _\Gamma\label{eq:psi0-W}
\end{align}
where
$\ket{m}_M = \ket{m_1 \cdots m_a}_M$,
$\ket{\sigma}_{\Sigma} = \ket{\sigma_1}_{\Sigma_1}\cdots\ket{\sigma_l}_{\Sigma_l}$,
$\ket\nu_\Gamma$ is defined in \cref{initial:hash-chain}, and $b$ is the amplitude of the blinding register $B$.
Here $\Gamma$ is a composite register of the form $\Gamma = \{\Gamma_i^j : i \in \{1,\dotsc,l\}, j \in \{0,\dotsc,w-1\}\}$, $B$ indicates whether the message is blinded or not ($\ket{1}_B$ for blinded and $\ket{0}_B$ for un-blinded), and the remaining registers are defined as in \cref{tab:registers}.

Next, $\A$ queries this state to the $\Sign$ oracle which answers the query with the following signed state:
\begin{align}
  \ket{\psi _1}_{M\Sigma EB\Gamma}
  &=B\Sign_{\sk}
    \ket{\psi _0}_{M\Sigma EB\Gamma} \nonumber\\
  &=\ket{\psi _1^1}_{M\Sigma EB\Gamma} +
    \ket{\psi _1^0}_{M\Sigma EB\Gamma}
    \label{eq:psi0 and psi1}
\end{align}
where the superscripts $1$ and $0$ correspond to blinded ($B$) and un-blinded ($B^c$) messages. %
The expression of the first term  is given by
 \begin{align}
  \ket{\psi _1^1}_{M\Sigma EB\Gamma}
  &=\sum_{m\in B}
    \sum_{\sigma\in(\{0,1\}^n)^l}
    \kappa_{m\sigma 1} 
    \ket{m}_M
    \ket{\sigma}_{\Sigma}
    \ket{\alpha_{m\sigma 1}}_E
    \ket{1}_B
  \ket \nu _\Gamma
\end{align}
where the latter follows because for blinded messages ($m\in B$) there is no signature. 

From now on, we will describe how the second term $\ket{\psi _1^0}_{M\Sigma EB\Gamma}$ in \cref{eq:psi0 and psi1} is obtained. While generally the signature is computed in superposition, we will describe it on a fixed message for the sake of simplicity. The general operation corresponds to extending this description by linearity.

Given a fixed message $\ket{m}_M = \ket{m_1 \cdots m_a}_M \in B^c$ represented in computational basis, the signing oracle first encodes the message in $l$ blocks, each in base-$w$ representation:
\begin{align}
    U_b \ket m _M \ket 0 _W  
    \ket{\sigma}_{\Sigma}
    \ket \nu _\Gamma
    \mapsto 
    \ket m _M 
    \ket{ 0 \oplus b(m)}_W
    \ket{\sigma}_{\Sigma}
    \ket \nu _\Gamma
\end{align}
where $l$ is defined in \cref{eq:Winternitz-params}, $W$ is an ancilla register used to store $b(m)$, and $b(m)$ is defined in \cref{eq:b(m)} as
\begin{equation*}
    b(m) = (b_1,\dotsc,b_l) = m \parallel C(m)
\end{equation*}
where $C(m)$ is the checksum corresponding to $m$.
Note that the process by which the unitary $U_b$ computes $\ket{b(m)}$ is similar to the classical way described in \cref{sec:WOTS}.

Using the blocks of $m$, the signing oracle computes the signature as follows:
\begin{align*}
     U_{\Sign_{\sk}
    {W\Gamma:\Sigma}}
    \ket{m}_M
    \ket{b_1\cdots b_l}_W
    \ket{\sigma \oplus \gamma}_\Sigma
    \ket{\Omega(m,\gamma)}_\Gamma
    &=
    \CNOT^{\otimes n}_{\Gamma^{b_1}_1:\Sigma_1}\cdots
    \CNOT^{\otimes n}_{\Gamma^{b_l}_l:\Sigma_l}
    \ket{m}_M
    \ket{b_1\cdots b_l}_{W}
    \ket{\sigma}_{\Sigma}
    \ket \nu _\Gamma\\
    &=
    \ket{m}_M
    \ket{b_1\cdots b_l}_W
    \ket{\sigma \oplus \gamma}_\Sigma
    \ket{\Omega(m,\gamma)}_\Gamma.
\end{align*}
    
Once the signature is obtained, the ancilla register $W$ is not useful for further analysis so we can remove it from the signed state by applying $U_b^{\dagger}$. Thus, the final signed state of a fixed message $\ket m_M$ is given by
\begin{align}
     \ket{m}_M
    \ket{\sigma \oplus \gamma}_\Sigma
    \ket{\Omega(m,\gamma)}_\Gamma
\end{align}
with
\begin{align}
    \ket{\sigma \oplus \gamma}_\Sigma
    &:=
    \ket{\sigma_1\oplus\gamma_1}_{\Sigma_1}\dots
    \ket{\sigma_l\oplus \gamma_l}_{\Sigma_l}\label{signature},\\
    \ket{\Omega(m,\gamma)}_\Gamma
    &:= \left(
    \bigotimes_{i=1,\dotsc,l; b_i\neq w-1}
    \ket{\gamma_i}_{\Gamma_i^{b_i}}
    \bigotimes_{j=0,\dotsc,w-2; j\neq b_i}
    \ket{\Phi}_{\Gamma_1^j} \dots
    \bigotimes_{j=0,\dotsc,w-2; j\neq b_i}
    \ket{\Phi}_{\Gamma_l^j}
    \right)
    \ket{p_1}_{\Gamma_1^{w-1}}\dots
    \ket{p_l}_{\Gamma_l^{w-1}}.
    \label{eq:Omega-Wots}
\end{align}

By linearity, the signature of the un-blinded term which is composed of superposition of messages is given by
\begin{align}
   \ket{\psi _1^0}_{M\Sigma EB\Gamma}
   &=
   \sum_{m\in B^c}\xi_m
    \sum_{\sigma\in(\{0,1\}^n)^l}
    \sum_{\gamma\in(\{0,1\}^n)^l} 
    \kappa_{m\sigma 0} 
    \ket{m}_M
    \ket{\sigma \oplus \gamma}_\Sigma
    \ket{\alpha_{m\sigma 0}}_E
    \ket{0}_B
    \ket{\Omega(m,\gamma)}_\Gamma
\end{align}
where $\xi_m$ is a normalization factor of all hash chain elements used to produce the signature.

Once the adversary receives the signed state $\ket{\psi _1}_{M\Sigma EB\Gamma}$, she carries out some operations with the intention of producing a forged message $m^*$. Intuitively, those operations can be viewed as applying some arbitrary unitary $U_{M\Sigma E}$ to $\ket{\psi _1}_{M\Sigma EB\Gamma}$: 
\[\ket{\psi _2}_{M\Sigma EB\Gamma} = 
U_{M\Sigma E}
\ket{\psi_1}_{M\Sigma EB\Gamma}.
\]
Afterwards, the adversary $\A$ measures the $M$ register of the latter state, which gives outcome $m^*\in \{0,1\}^a$.
After the measurement, the state $\ket{\psi _2}_{M\Sigma EB\Gamma}$ collapses to the (unnormalized) state 
\begin{align*}
\ket{\psi_3(m^*)}_{\Sigma EB\Gamma}
  &= \bra{m^*}_{M} \ket{\psi _2}_{M\Sigma EB\Gamma} \\
  &=\ket{\psi_3^1(m^*)}_{\Sigma EB\Gamma} + 
    \ket{\psi_3^0(m^*)}_{\Sigma EB\Gamma}  
\end{align*}
whereby,
\begin{align*}
    \ket{\psi_3^1(m^*)}_{\Sigma EB\Gamma}
  = \bra{m^*}_{M} U_{M\Sigma E}
    \ket{\psi_1^1}_{M\Sigma EB\Gamma}
\end{align*}
and
\begin{align*}
  \ket{\psi_3^0(m^*)}_{\Sigma EB\Gamma}
 &= \bra{m^*}_{M}
    U_{M\Sigma E}
    \ket{\psi_1^0}_{M\Sigma EB\Gamma} \\
 &= 
    \sum_{m\in B^c}\xi_m
    \sum_{\sigma\in(\{0,1\}^n)^l}
    \sum_{\gamma\in(\{0,1\}^n)^l} 
    \kappa_{m\sigma 0} 
    \bra{m^*}_{M}U_{M\Sigma E}
    \ket{m}_M
    \ket{\sigma \oplus \gamma}_\Sigma
    \ket{\alpha_{m\sigma 0}}_E
    \ket{0}_B
    \ket{\Omega(m,\gamma)}_\Gamma.
\end{align*}

The goal of the adversary $\A$ is to produce a forged message $\sigma^*$ that matches $m^*$. Towards this end, she measures the $\Sigma$ register of $\ket{\psi_3(m^*)}_{\Sigma EB\Gamma}$, obtaining outcome $\sigma^*\in (\{0,1\}^n)^l$.
Then, the (unnormalized) post-measurement state is
\begin{align*}
\ket{\psi_4(m^*,\sigma^*)}_{EB\Gamma}
  & =\bra{\sigma^*}_{\Sigma} 
     \ket{\psi_3(m^*)}_{\Sigma EB\Gamma} \\
  &= \bra{\sigma^*}_{\Sigma} 
     \ket{\psi_3^1(m^*)}_{\Sigma EB\Gamma}
  +  \bra{\sigma^*}_{\Sigma}  
     \ket{\psi_3^0(m^*)}_{\Sigma EB\Gamma}\label{eq:final state-Wots}\\
  &=\ket{\psi_4^1(m^*)}_{EB\Gamma} + 
    \ket{\psi_4^0(m^*)}_{EB\Gamma}  
\end{align*}
where
\begin{align*}
    \ket{\psi_4^1(m^*,\sigma^*)}_{EB\Gamma}
  = \bra{\sigma^*}_{\Sigma} 
    \bra{m^*}_{M}U_{M\Sigma E}
    \ket{\psi_1^1}_{M\Sigma EB\Gamma}
\end{align*}
and
\begin{align}
    \ket{\psi_4^0(m^*,\sigma^*)}_{EB\Gamma}
  &=\sum_{m\in B^c}
    \xi_m
    \sum_{\sigma\in(\{0,1\}^n)^l}
    \sum_{\gamma \in (\{0,1\}^n)^l} 
    \alpha_{m\sigma 0} 
    \bra{\sigma^*}_{\Sigma}
    \bra{m^*}_{M}U_{M\Sigma E}
    \ket{m}_M
    \ket{\sigma \oplus \gamma}_\Sigma
    \ket{\alpha_{m\sigma 0}}_E
    \ket{0}_B
    \ket{\Omega(m,\gamma)}_\Gamma.
\end{align}

For simplicity, we can rewrite $\ket{\psi_4^0(m^*,\sigma^*)}_{EB\Gamma}$ as
\begin{align}
    \ket{\psi_4^0(m^*,\sigma^*)}_{EB\Gamma}
  &= \sum_{m\in B^c}
    \sum_{\gamma\in(\{0,1\}^n)^l}
    \ket{\eta(m,\gamma)}_{EB}
    \ket{\Omega(m,\gamma)}_\Gamma\label{eq:psi04-wots}
\end{align}
where only $\ket{\eta(m,s)}_{BE}$ depends on $m^*$ and $\sigma^*$:
\begin{align*}
   \ket{\eta(m,\gamma)}_{EB}
   &= \xi_m
    \sum_{\sigma\in(\{0,1\}^n)^l}
    \kappa_{m\sigma 0} 
    \bra{\sigma^*}_{\Sigma}
    \bra{m^*}_{M}U_{M\Sigma E}
    \ket{m}_M
    \ket{\sigma \oplus \gamma}_\Sigma
    \ket{\alpha_{m\sigma 0}}_E
    \ket{0}_B.
\end{align*}
Hence, the adversary $\A$  produces a forged message-signature pair $(m^*,\sigma^*)$.
The probability of producing this pair is $\norm{\ket{\psi_4^0(m^*,\sigma^*)}_{BE\Gamma}}^2$.

The next step is to analyze the probability that $\A$'s forgery candidate is correct.
Recall from \cref{eq:b(m)} that for some un-blinded message $m\in B^c$ or blinded message $m^*\in B$, we denote by $b= b(m)$ and $b^* = b(m^*)$ the base-$w$ representation of $m$ and $m^*$ in term $l$ blocks, each in $w$-ary. 

Here, we consider two cases. The first case, namely when $m^*\notin B$, is trivial because $\A$ has lost the $\BlindForge$ experiment as $m^*$ must be blinded by definition. The rest of this section is devoted to analyzing the second case.

If $m^*\in B$, then the forged message $m^*$ has not been signed since the blinded $\Sign$ oracle signs only un-blinded messages. Hence, by construction of the checksum, for any message $m\notin B$ there exists at least one index $i^*\in\{1,\dotsc,l\}$ such that the corresponding block $b$ is larger than the block $b^*$ of $m^*$. This implies that this specific hash chain element $\Gamma_{i^*}^{b_{i^*}}$ has not been used for the signature of the adversary's queried message and is therefore still in its initial state. Note that this holds only in superposition over $m$. Indeed, $i^*$ depends on $m$ and is in general different for each term of the superposition.

In the Winternitz OTS, we know that the hash sub-chain below the queried position is not used for the signature procedure and is therefore in uniform superposition $\ket\Phi$. But given that during the signature process the hash chain is in superposition,
our main goal here is to track the invariant of the hash chain and show that some of the hash chain elements relevant for verifying the forged signature satisfy this invariant and are thus unknown to the adversary, so it is unlikely that the adversary would have used the correct hash chain to produce the forged signature.

To that end, we analyze a modified $\BlindForge$ experiment, where an additional measurement is performed on the hash chain register after the adversary has output their forgery, but before the hash chain register is measured to actually sample the hash chain as required in the \QWorld. This additional $Q$-measurement was defined in \cref{eq:Q}, with Winternitz $Q$ projectors defined in \cref{sec:QProjWinternitz}. Since it has few outcomes, its effect on the adversary's winning probability is limited and can be bounded by the pinching lemma (\cref{Pinching}).

If the $Q$-measurement yields outcome $i^* \in \{1,\dotsc,l\}$, then  the hash chain element $\Gamma_{i^*}^{b_{i^*}}$ is in uniform superposition and the adversary is bound to fail as $\sigma^*$ is independent of the hash chain $n$-bit string $\gamma_{i^*}^{b_{i^*}}$ (the result of measuring $\Gamma_{i^*}^{b_{i^*}}$).

It remains to analyze the outcome $l+1$ that corresponds to the projector $Q_{l+1}^m = (\Phi^\perp)\xp{l}$ where $\Phi^\perp = \1 - \proj{\Phi}$.
The final adversary state after the measurement, see \cref{eq:final state-Wots}, contains both blinded and un-blinded terms. If we apply $\Phi^\perp$ to any hash chain register of the blinded term $\ket{\psi_4^1(m^*,\sigma^*)}_{BE\Gamma}$, we get $0$ since all hash chain registers (except for the last) are in uniform superposition $\ket\Phi$ and the checksum guarantees that there exists at least one position ${i^*}$ such that $b_{i^*} < w-1$.

For the rest of this section, we fix a message $m^*$ and focus on the un-blinded term $\ket{\psi_4^0(m^*,\sigma^*)}_{BE\Gamma}$. Because $m^*\in B$, it has not been signed. So for any un-blinded message $m\in B^c$ that has been queried to the $\Sign$ oracle, the checksum guarantees that there exists at least one index $i^*$ for which $b_i(m^*)< b_i(m)$. Therefore, the hash chain element $\Gamma^{b_{i^*}}_{i^*}$ corresponding to that position is in state $\ket\Phi$.
To find that position, we define
\[
    i(b) = \min \{k= 1,\dotsc,l\mid b^*_k < b_k\} 
\]
as the smallest index $k$ for which $b_k(m^*) < b_k(m)$. Intuitively, it is the first hash chain element of $\Gamma$ that still remains in uniform superposition. In the following, let $\Gamma(b(m)) := \Gamma_1^{b_1(m)} \dotsb \Gamma_l^{b_l(m)}$ and recall from \cref{eq:psi04-wots,eq:Omega-Wots} that the un-blinded term is given by
\begin{align}
    \ket{\psi_4^0(m^*,\sigma^*)}_{EB\Gamma}
  &= \sum_{m\in B^c}
    \sum_{\gamma\in(\{0,1\}^n)^l}
    \ket{\eta(m,\gamma)}_{EB}
    \ket{\Omega(m,\gamma)}_\Gamma
    \ket{\Phi}_{\Gamma(b(\bar m))}
    \nonumber\\
   &=
   \sum_{k=1}^l\ket{\hat
    \eta(m^*,\sigma^*,k)}_{EB\Gamma_{\{(k,b^*_k)\}^c}} \ket{\Phi}_{\Gamma^{b^*_k}_k},
    \label{B^cSplit-Wots}
\end{align}
where all the registers except for $\Gamma^{b^*_k}_k$ are included into the first system. The state $\ket{\hat\eta(m^*,\sigma^*,k)}$ is defined as the part of the superposition over $m$ in  $\ket{\psi_4^0(m^*,\sigma^*)}$ with constant  $k = i(b)$ (excluding the register $\Gamma_k^{b^*_k}$). Thus the register $\Gamma^{b^*_k}_k$ is still in the uniform superposition $\ket{\Phi}$.
Applying $Q_{l+1}^m$ onto the $l$ hash chain elements $\Gamma(b^*)$ of the register $\Gamma$ in \cref{B^cSplit-Wots} gives
\begin{align*}
    \of[\big]{Q_{l+1}^m}_{\Gamma(b^*)}
    \ket{\psi_4^0(m^*,\sigma^*)}_{EB\Gamma}
  = \of[\big]{(\Phi^\perp)\xp{l}}_{\Gamma(b^*)}
    \left(\sum_{k=1}^l
    \ket{\hat\eta(m^*,\sigma^*,k)}_{BE\Gamma_{\{(k,b^*_k)\}^c}} \ket{\Phi}_{\Gamma^{b^*_k}_k}\right)
  = 0,
\end{align*}
which vanishes because, for each $k$, the register $\Gamma^{b^*_k}_k$ is in state $\ket{\Phi}$ and $\Phi^\perp \ket{\Phi} = 0$. Hence, the situation where none of the hash chain elements relevant for the verification of the forged signature $\sigma^*$ is in state $\ket{\Phi}$ can never occur.

Finally, we bound the success probability of the adversary in winning the $\BlindForge$ game in the \RealWorld. 
Given that this analysis  is roughly the same as the one done in \cref{sec:sign-query} of the Lamport OTS. Following the same steps, we get the following adversary's success probability in producing a valid forged message-signature pair $(m^*,\sigma^*)$:
\begin{align*}
    \Pr_{QI}
    \sof*{\textnormal{$\A$ wins $\BlindForge$}}
  &=\sum_{m^*,\sigma^*}
    \Pr_{\BlindForge}
    \sof*{\textnormal{Success $\wedge$
    $\A$ outputs $(m^*,\sigma^*)$}}
    \leq\frac{l+1}{2^n}.
\end{align*}
But, given that there is possible collision tuples in \RealWorld, adding the upper bound in \cref{lem:intermediate world 1 and intermediate world 2} to the latter equation gives
\begin{align*}
    \Pr
    \sof*{\textnormal{$\A$ wins $\BlindForge$}}
  &=\sum_{m^*,\sigma^*}
    \Pr_{\BlindForge}
    \sof*{\textnormal{Success $\wedge$
    $\A$ outputs $(m^*,\sigma^*)$}} \\
   &\leq\frac{l+1}{2^n} + \frac{3(wl)^2}{2^n}
   =\frac{1+l+3(wl)^2}{2^n}.
\end{align*}

\subsection{Hash queries after \texorpdfstring{$\Sign$}{Sign} query}\label{sec:HashQueryAfter-Wots}

In this section, we analyse the adversary's \textit{hash queries after $\Sign$ query} to bound the success probability that an adversary with a given number of queries can achieve in the $\BlindForge$ game and thus prove \cref{thm:winternitz}.
In this case it is not obvious how to track the invariant of the hash chain, i.e. the fact that there is at least one unused part of the hash chain element that is relevant for the forged signature. Therefore we use a special projector $P_\Gamma$ defined in \cref{sec:invariant projector for Winternitz} that projects onto the subspace of the hash chain register that is consistent with a single blinded sign query and no hash queries. If the final adversary state after producing the forgery candidate is in the image of $P_\Gamma$, then the outcome $l+1$ corresponding to the situation where \textit{none of the hash chain elements useful for the forged signature is in state $\ket\Phi$} can never occur, according to \cref{lem:orth-Ql+1-Pga-Wots}. We thus want to show that  the adversary's final state is approximately in the range of $P_\Gamma$.

If there are no hash queries before the $\Sign$ query, then from \cref{A'stateInv:winternitz} the adversary state after the $\Sign$ query remains completely in the range of $P_\Gamma$, which means that the outcome $l+1$ cannot occur. That is,
\[P_\Gamma\ket{\psi_1} = P_\Gamma B\Sign_{\sk}\ket{\psi_0} = B\Sign_{\sk}\ket{\psi_0} = \ket{\psi_1}
\]
where $\ket{\psi_0}$ and $\ket{\psi_1}$ are respectively the adversary state immediately before and after the $\Sign$ query.

Now, assuming there are hash queries before the $\Sign$ query, since the projector $P_\Gamma$ and the random oracle unitary $U_h$ approximately commute by \cref{ComPgamUh:winternitz}, it follows that hash queries before $\Sign$ query give no significant information to the adversary about the invariant of the hash chain register.

Suppose there are hash queries after the $\Sign$ query and examine in detail what happen in this case. From the previous case, we know that the adversary's state directly after the $\Sign$ query is $\ket{\psi_1}_{M\Sigma XYE\Gamma}$. Just like for hash queries before the $\Sign$ query, suppose that the adversary makes $q_1$ hash queries after querying the signing oracle. 
Let $(W^i_{XY\Gamma E})_{i=1,\dotsc,q_1}$ be unitaries applied to the adversary state between hash queries. Then, the adversary's state after $q_1$ hash queries and before performing some unitary operations $U_{M\Sigma E}$ on the post hash queried state or any measurement leading to the forgery candidate is
\begin{align*}
    \ket{\psi_1'}_{XYM\Sigma EB\Gamma}
  &= (U_h)_{XY\Gamma}
     W^{q_1}_{XYM\Sigma E}
    (U_h)_{XY\Gamma}
    W^{q_1-1}_{XYM\Sigma E}\cdots
    W^2_{XYM\Sigma E}
    (U_h)_{XY\Gamma}
    W^1_{XYM\Sigma E}
    \ket{\psi_1}_{XYM\Sigma EB\Gamma}
\end{align*}
where $\ket{\psi_1}_{XYM\Sigma EB\Gamma}$ is the adversary's state immediately after \textrm{$\Sign$ query}.
For sake of simplicity, we set
\[T = (U_h)_{XY\Gamma}
     W^{q_1}_{XYM\Sigma E}
    (U_h)_{XY\Gamma}
    W^{q_1-1}_{XYM\Sigma E}\cdots
    W^2_{XYM\Sigma E}
    (U_h)_{XY\Gamma}
    W^1_{XYM\Sigma E}.\]
Next, we prove the following lemma.
\begin{lemma}\label{lem:hash-after-sign-Wots}
In the \QWorld, the state right before the adversary's measurement determining the forgery is applied is approximately in the range of $P_\Gamma$, i.e.
\begin{equation}\label{eq:bound-HQ-after-SQ-Wots}
  \bigl\|
  P_\Gamma\ket{\psi_1'}_{M\Sigma XYE\Gamma} - 
  \ket{\psi_1'}_{M\Sigma XYE\Gamma}
  \bigr\|_2
  \leq q_1\delta_W(n) + 4lq_1\epsilon_W(n) 
  = q_1\of[\Big]{\delta_W(n)+2l(w-1)\epsilon_W(n)}.
\end{equation}
\end{lemma}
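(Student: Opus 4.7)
The plan is to adapt the structure of \cref{lem:hash-after-sign} for the Lamport OTS, replacing the secret-key register $S$ by the hash chain register $\Gamma$, the projector $P_S$ by $P_\Gamma$, and the commutator/invariance bounds by their Winternitz analogues (\cref{ComPgamUh:winternitz,A'stateInv:winternitz,lem:hash-before-sign-Wots,lem:Commutator}). Concretely, write $\ket{\psi_1'} = T \ket{\psi_1}$ and insert $\pm T P_\Gamma \ket{\psi_1}$ inside the norm to split
\[
\bigl\| P_\Gamma T \ket{\psi_1} - T \ket{\psi_1}\bigr\|_2
\le
\bigl\| [P_\Gamma, T] \bigr\|_\infty \, \|\ket{\psi_1}\|_2
+ \|T\|_\infty \,\bigl\| P_\Gamma \ket{\psi_1} - \ket{\psi_1}\bigr\|_2,
\]
which is just the triangle inequality combined with the definition of the operator norm, together with $\|T\|_\infty = \|\ket{\psi_1}\|_2 = 1$.

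For the commutator term, apply \cref{lem:commutators} to the product of $q_1$ hash query unitaries $(U_h)_{XY\Gamma}$ interleaved with the adversary's intermediate unitaries $W^i_{XYM\Sigma E}$. Since the $W^i$ act on registers disjoint from $\Gamma$, they commute with $P_\Gamma$ and contribute nothing, leaving $q_1$ copies of $\|[U_h, P_\Gamma]\|_\infty$ which is bounded by $\delta_W(n)$ via \cref{ComPgamUh:winternitz}. This yields the first term $q_1 \delta_W(n)$.

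The second term is handled by a three-step triangle-inequality argument identical to the one in the Lamport proof: insert $\pm P_\Gamma B\Sign_{\sk} \Phi^{\otimes l(w-1)}_\Gamma \ket{\psi_0}$ and $\pm B\Sign_{\sk} \Phi^{\otimes l(w-1)}_\Gamma \ket{\psi_0}$ to split $\| P_\Gamma \ket{\psi_1} - \ket{\psi_1}\|_2$ into three summands. The middle one vanishes because, when the hash-chain register is in exact uniform superposition before the $\Sign$ query, \cref{A'stateInv:winternitz} guarantees that $P_\Gamma B\Sign_{\sk}\Phi^{\otimes l(w-1)}_\Gamma\ket{\psi_0} = B\Sign_{\sk}\Phi^{\otimes l(w-1)}_\Gamma\ket{\psi_0}$. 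The outer two summands are each bounded by $\|\Phi^{\otimes l(w-1)}_\Gamma \ket{\psi_0} - \ket{\psi_0}\|_2$ using sub-multiplicativity together with $\|P_\Gamma B\Sign_{\sk}\|_\infty \le 1$ and $\|B\Sign_{\sk}\|_\infty \le 1$, and then by \cref{lem:hash-before-sign-Wots} applied with the appropriate query count. Summing up produces the claimed bound.

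The step that requires the most care is the middle, ``vanishing'' term, since it is the Winternitz analogue of the central classical intuition that the un-blinded signature reveals nothing about the hash-chain positions strictly below the signed block. The reason it is delicate is that \cref{A'stateInv:winternitz} crucially relies on the checksum construction: for every un-blinded message $m \in B^c$, the corresponding block vector $b(m)$ must strictly dominate $b(m^*)$ in at least one coordinate for every $m^* \in B$, so the registers $\Gamma_i^j$ with $j < b_i(m)$ are left untouched by the signing unitary and stay in $\ket{\Phi}$. Verifying that \cref{A'stateInv:winternitz} applies verbatim once the hash-chain register is prepared in $\Phi^{\otimes l(w-1)}_\Gamma$, and that the intermediate unitaries $W^i$ are genuinely independent of $\Gamma$, is the only place where the argument deviates nontrivially from the Lamport case.
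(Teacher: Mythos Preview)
Your proposal is correct and mirrors the paper's own argument exactly: the paper's proof of this lemma simply states that it is the same as that of \cref{lem:hash-after-sign}, with \cref{ComPgamUh:winternitz} used in place of \cref{commutator of ps and uh}, and your write-up spells out precisely that substitution (together with the corresponding use of \cref{A'stateInv:winternitz} and \cref{lem:hash-before-sign-Wots}).
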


\begin{proof}
The proof is the same as that of \cref{lem:hash-after-sign}, except that \cref{ComPgamUh:winternitz} is used in place of \cref{commutator of ps and uh}.
\end{proof}

Notice the bound in \cref{eq:bound-HQ-after-SQ-Wots} is small if an adversary makes at most $q_1=o(2^{n/2})$ queries.

Recall that, just like in \cref{sec:sign-queryWots}, we want to analyze the modified $\BlindForge$ experiment, where the $Q$-measurement is applied after the adversary has output a forgery, but before the secret key register is measured to sample the hash chain and verify the forgery. It thus remains to show that due to the fact that $\ket{\psi'_1}$ is approximately in the range of $P_\Gamma$, the outcome $l+1$ only occurs with small probability.
 Given that this part is similar to the analysis of the last part of the $\BlindForge$ experiment made in \cref{sec:HQ-SQ-ltos} after the proof of \cref{lem:hash-after-sign}, we use the same  analysis for the Winternitz OTS.

To that end, we define a new measurement given by projectors $\tilde Q_i$ that performs the $Q$-measurement controlled on the content of the $M$-register, i.e.
\begin{equation*}
    \tilde Q_i=\sum_{m}\proj{m}_M\otimes Q_i^{(m)}.
\end{equation*}
Now, observe that applying the $Q$-measurement after the adversary has output a forgery is equivalent to applying the $\tilde Q$-measurement right before the adversary's measurement that produces the forgery is applied. To prove that, if $m^*\in B$, the outcome $l+1$ occurs only with small probability in the modified $\BlindForge$ experiment, it thus suffices to prove the following lemma.
\begin{lemma}\label{lem:lplusoneisrare-wots}
In the \QWorld, for blinded messages, the outcome $l+1$ only occurs with small probability,
\begin{equation*}
    \left\|\tilde Q_{l+1}\Pi^{B}_M\ket{\psi_1'}_{M\Sigma XYE\Gamma}\right\|_2\le 
    q_1(\delta_W(n)+2l(w-1)\epsilon_W(n))
\end{equation*}
where 
\begin{equation*}
    \Pi^B=\sum_{m\in B}\proj m.
\end{equation*}
\end{lemma}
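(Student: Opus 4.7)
The plan is to mirror the proof of Lemma~\ref{lem:lplusoneisrare} almost verbatim, substituting the Winternitz-specific ingredients for their Lamport analogues. The two inputs I will need are (i) the orthogonality $Q_{l+1}^{b^*}P_\Gamma = 0$ for every $m^*\in B$ with $b^* = b(m^*)$, supplied by Lemma~\ref{lem:orth-Ql+1-Pga-Wots}, and (ii) the invariance estimate $\|\ket{\psi_1'}-P_\Gamma\ket{\psi_1'}\|_2 \le q_1(\delta_W(n)+2l(w-1)\epsilon_W(n))$ from Lemma~\ref{lem:hash-after-sign-Wots}.

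First I would unfold the controlled measurement on the $M$ register:
\[
\tilde Q_{l+1}\,\Pi^B_M
= \sum_{m\in B} \proj{m}_M \otimes Q_{l+1}^{b(m)},
\]
and apply Lemma~\ref{lem:orth-Ql+1-Pga-Wots} term-by-term to conclude the key identity $\tilde Q_{l+1}\,\Pi^B_M\,P_\Gamma = 0$. This lets me insert $\1 - P_\Gamma$ for free:
\[
\tilde Q_{l+1}\,\Pi^B_M\ket{\psi_1'}_{M\Sigma XYE\Gamma}
= \tilde Q_{l+1}\,\Pi^B_M\of[\big]{\1 - P_\Gamma}\ket{\psi_1'}_{M\Sigma XYE\Gamma}.
\]
Then I take norms, use sub-multiplicativity and the fact that $\tilde Q_{l+1}$ and $\Pi^B_M$ are projectors (so $\|\tilde Q_{l+1}\,\Pi^B_M\|_\infty \le 1$), and finish by plugging in the bound from Lemma~\ref{lem:hash-after-sign-Wots}.

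There is no real obstacle: the only point requiring care is that the outcome projector $Q_{l+1}^{b^*}$ depends on the (blinded) forgery message $m^*$, which is what forced us to introduce the $M$-controlled measurement $\tilde Q_{l+1}$ in the first place. Since that definition absorbs the dependence coherently, the Lamport-side reasoning carries over without modification and no new estimate is needed beyond the two lemmas cited.
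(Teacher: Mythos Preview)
Your proposal is correct and follows precisely the same approach as the paper, which explicitly states that the proof is identical to that of Lemma~\ref{lem:lplusoneisrare} except for invoking Lemma~\ref{lem:hash-after-sign-Wots} in place of Lemma~\ref{lem:hash-after-sign}. The orthogonality step via Lemma~\ref{lem:orth-Ql+1-Pga-Wots} and the subsequent norm bound are exactly as you describe.
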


\begin{proof}
The proof of this lemma is exactly the same as the one of \cref{lem:lplusoneisrare} of the Lamport OTS, ecxept that it uses \cref{lem:hash-after-sign-Wots} instead of \cref{lem:hash-after-sign}.

\end{proof}

Now, all is set to prove \cref{thm:winternitz}.

\begin{proof}[Proof of \cref{thm:winternitz}.]
Just like in \cref{sec:sign-queryWots} we want to bound the final adversary's success probability in winning the $\BlindForge$ game in the general case. Towards that end, we remark that the computations are the same as those done in \cref{sec:HQ-SQ-ltos}. Thus we proceed as in the Lamport OTS by deriving the bound respectively with respect to the modified $\BlindForge$ experiment and the the real experiment. Then, putting all our arguments together, we get the following:
  \begin{align*}
    \Pr_{QI,\BlindForge} [\A \textrm{ succeeds}]
    \leq (l+1)\left(2^{-n} +  q^2(\delta_W(n) + 4l\epsilon_W(n))^2\right). %
\end{align*}
Finally, plugging in the functions $\epsilon_W(n)$ and $\delta_W(n)$ from \cref{Commutator argument1,ComPgamUh:winternitz}, and applying \cref{lem:ind-r-v-qi}, we obtain
\begin{align*}
    \Pr_{\BlindForge} [\A \textrm{ succeeds}]
    &\leq (l+1)\left(2^{-n} +  q^2\left(\frac{8l(w+1)(w-1)}{2^{n/2}}+2l(w-1) \frac{6(w-1)}{2^{n/2}}\right)^2\right)+\frac{3(wl)^2}{2^n}\\
    &\leq  2^{-n}\left[\left(1+q^2l^2(w-1)^2(20w-4)^2\right)(l+1) +3w^2l^2\right]\\
    &\leq 800w^4q^2l^3\cdot 2^{-n}.
\end{align*}
\end{proof}

\section{Tightness}

The notion of blind-unforgeability does not have as close of a relation to the intuitive security property it strives to model as \EUCMA.\footnote{ Indeed, it is a nice exercise to show that an adversary against (say, $q$-time) \EUCMA with success probability $\epsilon$ can be used to construct a \BU-adversary with success probability $\Theta(\epsilon/q)$, and this reduction is tight for efficient adversaries if one-way functions exist.}
The concrete security bounds, however, arguably nevertheless provide an indication of concrete security levels. It is hence an interesting question whether the bounds proven in \cref{sec:lamport,sec:winternitz} are tight. In the following, we present an attack against the \BU security of the Lamport scheme in the QROM, and analyze its success probability, to show that the bound in \cref{thm:lamport} is tight up to a factor $l$ in the number of queries. The attack generalizes to the Winternitz scheme in a straight-forward manner.

We begin by describing the straightforward classical attack based on search. This attack proceeds as follows: To attack the \BU security of the Lamport scheme, choose a blinding probability of $\frac 1 2$. Now make $q$ distinct queries to the random oracle to search for a preimage of one of the $2l$ public key strings. This succeeds with probability
\begin{equation}
    p_{\text{search}}(q)=1-\left(1-\frac{2l}{2^n}\right)^q\ge \frac{2ql}{2^n}.
\end{equation}
Suppose this search succeeded, finding a preimage $y^*$ of $p_{i^*}^{j^*}$. Then chose $m\in\{0,1\}^l$ such that $m_{i^*}=\bar{j^*}$ and query the oracle to obtain a signature  for $m$. This succeeds with probability $1/2$. Now output $m'$ obtained from $m$ by flipping the $i^*$th bit, and $\sigma'$ obtained from $\sigma$ by replacing $\sigma_{i^*}$ with $y^*$. $m'$ is blinded with probability $1/2$, and $y^*$ is equal to the correct secret key string $s_{i^*}^{j^*}$ with constant probability. In summary, the entire attack succeeds with constant probability if 
\begin{equation}
   q=\Omega\left(\frac{2^n}{l}\right).
\end{equation}

It is now easy to see that the search step can be replaced by a Grover search in the QROM. Using the analysis of Grover's algorithm for multiple targets from \cite{brassard1997quantum}, together with a basic analysis of the number of targets (which follows a binomial distribution), it is easy to see that one can achieve a constant success probability if 
\begin{equation}
   q=\Omega\left(\sqrt{\frac{2^n}{l}}\right).
\end{equation}
To compare this result with \cref{thm:lamport}, note that the inequality in \cref{thm:lamport}, \cref{eq:lamport}, implies that to achieve a constant success probability, at least
\begin{equation}
    q\ge C\sqrt{\frac{2^n}{l^3}}
\end{equation} 
are necessary for some constant $C$, i.e. the upper and lower bounds on the number of queries the optimal attack requires indeed differ by a factor of $l$ up to constant factors. For the Winternitz scheme, the bounds differ by a factor of $w^2l$.

\paragraph{Acknowledgements.}
The authors thank Stacey Jeffery for helpful discussions. CM was funded by a NWO VENI grant (Project No. VI.Veni.192.159). CMM deeply thanks the African Institute for Mathematical Science, Quantum Leap Africa Rwanda and QuSoft Amsterdam for their funding and support.
MO was supported by NWO Vidi grant VI.Vidi.192.109.

\bibliographystyle{alphaurl}

\begin{thebibliography}{BHNP{\etalchar{+}}19}
	
	\bibitem[AASA{\etalchar{+}}20]{alagic2020status}
	Gorjan Alagic, Jacob Alperin-Sheriff, Daniel Apon, David Cooper, Quynh Dang,
	John Kelsey, Yi-Kai Liu, Carl Miller, Dustin Moody, Rene Peralta, et~al.
	\newblock Status report on the second round of the {NIST} post-quantum
	cryptography standardization process.
	\newblock {\em US Department of Commerce, {NIST}}, 2020.
	\newblock \href {https://doi.org/10.6028/NIST.IR.8309}
	{\path{doi:10.6028/NIST.IR.8309}}.
	
	\bibitem[AMRS20]{alagic2020quantum}
	Gorjan Alagic, Christian Majenz, Alexander Russell, and Fang Song.
	\newblock Quantum-access-secure message authentication via
	blind-unforgeability.
	\newblock In {\em Annual International Conference on the Theory and
		Applications of Cryptographic Techniques}, pages 788--817. Springer, 2020.
	\newblock URL: \url{https://ia.cr/2018/1150}, \href
	{http://arxiv.org/abs/1803.03761} {\path{arXiv:1803.03761}}, \href
	{https://doi.org/10.1007/978-3-030-45727-3_27}
	{\path{doi:10.1007/978-3-030-45727-3_27}}.
	
	\bibitem[BDF{\etalchar{+}}11]{boneh2011random}
	Dan Boneh, {\"O}zg{\"u}r Dagdelen, Marc Fischlin, Anja Lehmann, Christian
	Schaffner, and Mark Zhandry.
	\newblock Random oracles in a quantum world.
	\newblock In {\em International Conference on the Theory and Application of
		Cryptology and Information Security}, pages 41--69. Springer, 2011.
	\newblock URL: \url{https://ia.cr/2010/428}, \href
	{http://arxiv.org/abs/1008.0931} {\path{arXiv:1008.0931}}, \href
	{https://doi.org/10.1007/978-3-642-25385-0_3}
	{\path{doi:10.1007/978-3-642-25385-0_3}}.
	
	\bibitem[BDH11]{BDH11}
	Johannes Buchmann, Erik Dahmen, and Andreas H{\"u}lsing.
	\newblock {XMSS} - a practical forward secure signature scheme based on minimal
	security assumptions.
	\newblock In Bo-Yin Yang, editor, {\em Post-Quantum Cryptography}, pages
	117--129, Berlin, Heidelberg, 2011. Springer.
	\newblock URL: \url{https://ia.cr/2011/484}, \href
	{https://doi.org/10.1007/978-3-642-25405-5_8}
	{\path{doi:10.1007/978-3-642-25405-5_8}}.
	
	\bibitem[BHK{\etalchar{+}}19]{BHKNRS19}
	Daniel~J. Bernstein, Andreas H\"{u}lsing, Stefan K\"{o}lbl, Ruben Niederhagen,
	Joost Rijneveld, and Peter Schwabe.
	\newblock The {SPHINCS+} signature framework.
	\newblock In {\em Proceedings of the 2019 ACM SIGSAC Conference on Computer and
		Communications Security}, CCS'19, pages 2129--2146, New York, NY, USA, 2019.
	Association for Computing Machinery.
	\newblock URL: \url{https://ia.cr/2019/1086}, \href
	{https://doi.org/10.1145/3319535.3363229}
	{\path{doi:10.1145/3319535.3363229}}.
	
	\bibitem[BHNP{\etalchar{+}}19]{BHNPSS19}
	Xavier Bonnetain, Akinori Hosoyamada, Mar{\'i}a Naya-Plasencia, Yu~Sasaki, and
	Andr{\'e} Schrottenloher.
	\newblock Quantum attacks without superposition queries: The offline {S}imon's
	algorithm.
	\newblock In Steven~D. Galbraith and Shiho Moriai, editors, {\em Advances in
		Cryptology -- ASIACRYPT 2019}, pages 552--583, Cham, 2019. Springer.
	\newblock \href {http://arxiv.org/abs/2002.12439} {\path{arXiv:2002.12439}},
	\href {https://doi.org/10.1007/978-3-030-34578-5_20}
	{\path{doi:10.1007/978-3-030-34578-5_20}}.
	
	\bibitem[BHT98]{brassard1997quantum}
	Gilles Brassard, Peter H{\o}yer, and Alain Tapp.
	\newblock Quantum cryptanalysis of hash and claw-free functions.
	\newblock In Cl{\'a}udio~L. Lucchesi and Arnaldo~V. Moura, editors, {\em
		LATIN'98: Theoretical Informatics}, pages 163--169, Berlin, Heidelberg, 1998.
	Springer.
	\newblock \href {http://arxiv.org/abs/quant-ph/9705002}
	{\path{arXiv:quant-ph/9705002}}, \href {https://doi.org/10.1007/BFb0054319}
	{\path{doi:10.1007/BFb0054319}}.
	
	\bibitem[BLZ20]{blocki2020security}
	Jeremiah Blocki, Seunghoon Lee, and Samson Zhou.
	\newblock On the security of proofs of sequential work in a post-quantum world,
	2020.
	\newblock accepted for publication at ITC 2021.
	\newblock \href {http://arxiv.org/abs/2006.10972} {\path{arXiv:2006.10972}}.
	
	\bibitem[BR93]{Bellare1993a}
	Mihir Bellare and Phillip Rogaway.
	\newblock Random oracles are practical: {A} paradigm for designing efficient
	protocols.
	\newblock In {\em Proceedings of the 1st ACM conference on Computer and
		communications security}, pages 62--73. ACM, 1993.
	\newblock URL:
	\url{https://caislab.kaist.ac.kr/lecture/2010/spring/cs548/basic/B11.pdf},
	\href {https://doi.org/10.1145/168588.168596}
	{\path{doi:10.1145/168588.168596}}.
	
	\bibitem[BZ13]{boneh2013secure}
	Dan Boneh and Mark Zhandry.
	\newblock Secure signatures and chosen ciphertext security in a quantum
	computing world.
	\newblock In Ran Canetti and Juan~A. Garay, editors, {\em Advances in
		Cryptology -- CRYPTO 2013}, pages 361--379, Berlin, Heidelberg, 2013.
	Springer.
	\newblock URL: \url{https://ia.cr/2013/088}, \href
	{https://doi.org/10.1007/978-3-642-40084-1_21}
	{\path{doi:10.1007/978-3-642-40084-1_21}}.
	
	\bibitem[CFHL20]{CFHL20}
	Kai-Min Chung, Serge Fehr, Yu-Hsuan Huang, and Tai-Ning Liao.
	\newblock On the compressed-oracle technique, and post-quantum security of
	proofs of sequential work, 2020.
	\newblock URL: \url{https://ia.cr/2020/1305}, \href
	{http://arxiv.org/abs/2010.11658} {\path{arXiv:2010.11658}}.
	
	\bibitem[EGM96]{even1996line}
	Shimon Even, Oded Goldreich, and Silvio Micali.
	\newblock On-line/off-line digital signatures.
	\newblock {\em Journal of Cryptology}, 9(1):35--67, 1996.
	\newblock \href {https://doi.org/10.1007/0-387-34805-0_24}
	{\path{doi:10.1007/0-387-34805-0_24}}.
	
	\bibitem[GHHM20]{GHHM20}
	Alex~B. Grilo, Kathrin Hövelmanns, Andreas Hülsing, and Christian Majenz.
	\newblock Tight adaptive reprogramming in the {QROM}, 2020.
	\newblock URL: \url{https://ia.cr/2020/1361}, \href
	{http://arxiv.org/abs/2010.15103} {\path{arXiv:2010.15103}}.
	
	\bibitem[GHS16]{GHS16}
	Tommaso Gagliardoni, Andreas H{\"u}lsing, and Christian Schaffner.
	\newblock Semantic security and indistinguishability in the quantum world.
	\newblock In Matthew Robshaw and Jonathan Katz, editors, {\em Advances in
		Cryptology -- CRYPTO 2016}, pages 60--89, Berlin, Heidelberg, 2016. Springer.
	\newblock URL: \url{https://ia.cr/2015/355}, \href
	{http://arxiv.org/abs/1504.05255} {\path{arXiv:1504.05255}}, \href
	{https://doi.org/10.1007/978-3-662-53015-3_3}
	{\path{doi:10.1007/978-3-662-53015-3_3}}.
	
	\bibitem[GKS20]{GKS20}
	Tommaso Gagliardoni, Juliane Krämer, and Patrick Struck.
	\newblock Quantum indistinguishability for public key encryption, 2020.
	\newblock URL: \url{https://ia.cr/2020/266}, \href
	{http://arxiv.org/abs/2003.00578} {\path{arXiv:2003.00578}}.
	
	\bibitem[GMR88]{goldwasser1988digital}
	Shafi Goldwasser, Silvio Micali, and Ronald~L Rivest.
	\newblock A digital signature scheme secure against adaptive chosen-message
	attacks.
	\newblock {\em SIAM Journal on computing}, 17(2):281--308, 1988.
	\newblock \href {https://doi.org/10.1137/0217017} {\path{doi:10.1137/0217017}}.
	
	\bibitem[GYZ17]{GYZ17}
	Sumegha Garg, Henry Yuen, and Mark Zhandry.
	\newblock New security notions and feasibility results for authentication of
	quantum data.
	\newblock In Jonathan Katz and Hovav Shacham, editors, {\em Advances in
		Cryptology -- CRYPTO 2017}, pages 342--371, Cham, 2017. Springer.
	\newblock URL: \url{https://eprint.iacr.org/2017/538.pdf}, \href
	{http://arxiv.org/abs/1607.07759} {\path{arXiv:1607.07759}}, \href
	{https://doi.org/10.1007/978-3-319-63715-0_12}
	{\path{doi:10.1007/978-3-319-63715-0_12}}.
	
	\bibitem[Hay02]{hayashi2002optimal}
	Masahito Hayashi.
	\newblock Optimal sequence of quantum measurements in the sense of {S}tein's
	lemma in quantum hypothesis testing.
	\newblock {\em Journal of Physics A: Mathematical and General}, 35(50):10759,
	2002.
	\newblock \href {http://arxiv.org/abs/quant-ph/0208020}
	{\path{arXiv:quant-ph/0208020}}, \href
	{https://doi.org/10.1088/0305-4470/35/50/307}
	{\path{doi:10.1088/0305-4470/35/50/307}}.
	
	\bibitem[HBG{\etalchar{+}}18]{RFC8391}
	Andreas H{\"u}lsing, Denise Butin, Stefan-Lukas Gazdag, Joost Rijneveld, and
	Aziz Mohaisen.
	\newblock {XMSS}: {E}xtended hash-based signatures. {RFC} 8391, 2018.
	\newblock URL: \url{https://tools.ietf.org/html/rfc8391}, \href
	{https://doi.org/10.17487/RFC8391} {\path{doi:10.17487/RFC8391}}.
	
	\bibitem[KLLNP16]{KLLNP16}
	Marc Kaplan, Ga{\"e}tan Leurent, Anthony Leverrier, and Mar{\'i}a
	Naya-Plasencia.
	\newblock Breaking symmetric cryptosystems using quantum period finding.
	\newblock In Matthew Robshaw and Jonathan Katz, editors, {\em Advances in
		Cryptology -- CRYPTO 2016}, pages 207--237, Berlin, Heidelberg, 2016.
	Springer.
	\newblock \href {http://arxiv.org/abs/1602.05973} {\path{arXiv:1602.05973}},
	\href {https://doi.org/10.1007/978-3-662-53008-5_8}
	{\path{doi:10.1007/978-3-662-53008-5_8}}.
	
	\bibitem[Lam79]{Lam79}
	Leslie Lamport.
	\newblock Constructing digital signatures from a one way function.
	\newblock Technical Report SRI-CSL-98, SRI International Computer Science
	Laboratory, 1979.
	\newblock URL: \url{http://lamport.azurewebsites.net/pubs/dig-sig.pdf}.
	
	\bibitem[LZ19]{LZ19}
	Qipeng Liu and Mark Zhandry.
	\newblock On finding quantum multi-collisions.
	\newblock In Yuval Ishai and Vincent Rijmen, editors, {\em Advances in
		Cryptology -- EUROCRYPT 2019}, pages 189--218, Cham, 2019. Springer.
	\newblock \href {http://arxiv.org/abs/1811.05385} {\path{arXiv:1811.05385}},
	\href {https://doi.org/10.1007/978-3-030-17659-4_7}
	{\path{doi:10.1007/978-3-030-17659-4_7}}.
	
	\bibitem[Mer89]{merkle1989certified}
	Ralph~C. Merkle.
	\newblock A certified digital signature.
	\newblock In {\em Conference on the Theory and Application of Cryptology},
	pages 218--238. Springer, 1989.
	\newblock \href {https://doi.org/10.1007/0-387-34805-0_21}
	{\path{doi:10.1007/0-387-34805-0_21}}.
	
	\bibitem[NC02]{nielsen2002quantum}
	Michael~A. Nielsen and Isaac Chuang.
	\newblock Quantum computation and quantum information, 2002.
	\newblock URL:
	\url{http://csis.pace.edu/~ctappert/cs837-19spring/QC-textbook.pdf}, \href
	{https://doi.org/10.1023/A:1012603118140}
	{\path{doi:10.1023/A:1012603118140}}.
	
	\bibitem[Sho94]{shor1994algorithms}
	Peter~W. Shor.
	\newblock Algorithms for quantum computation: discrete logarithms and
	factoring.
	\newblock In {\em Proceedings 35th annual symposium on foundations of computer
		science}, pages 124--134. IEEE, 1994.
	\newblock URL:
	\url{http://citeseerx.ist.psu.edu/viewdoc/download?doi=10.1.1.123.5183&rep=rep1&type=pdf},
	\href {https://doi.org/10.1109/SFCS.1994.365700}
	{\path{doi:10.1109/SFCS.1994.365700}}.
	
	\bibitem[SS17]{SS17}
	Thomas Santoli and Christian Schaffner.
	\newblock Using {S}imon's algorithm to attack symmetric-key cryptographic
	primitives.
	\newblock {\em Quantum Info. Comput.}, 17(1–2):65--78, February 2017.
	\newblock \href {http://arxiv.org/abs/1603.07856} {\path{arXiv:1603.07856}},
	\href {https://doi.org/10.26421/QIC17.1-2-4}
	{\path{doi:10.26421/QIC17.1-2-4}}.
	
	\bibitem[Wat18]{watrous2018theory}
	John Watrous.
	\newblock {\em The theory of quantum information}.
	\newblock Cambridge University Press, 2018.
	\newblock URL: \url{https://cs.uwaterloo.ca/~watrous/TQI/TQI.pdf}.
	
	\bibitem[Zha15]{zhandry2015secure}
	Mark Zhandry.
	\newblock Secure identity-based encryption in the quantum random oracle model.
	\newblock {\em International Journal of Quantum Information}, 13(04):1550014,
	2015.
	\newblock URL: \url{https://ia.cr/2012/076}, \href
	{https://doi.org/10.1142/S0219749915500148}
	{\path{doi:10.1142/S0219749915500148}}.
	
	\bibitem[Zha19]{Zhandry19}
	Mark Zhandry.
	\newblock How to record quantum queries, and applications to quantum
	indifferentiability.
	\newblock In Alexandra Boldyreva and Daniele Micciancio, editors, {\em Advances
		in Cryptology -- CRYPTO 2019}, pages 239--268, Cham, 2019. Springer.
	\newblock URL:
	\url{https://www.cs.princeton.edu/~mzhandry/docs/papers/QIndiff.pdf}, \href
	{https://doi.org/10.1007/978-3-030-26951-7_9}
	{\path{doi:10.1007/978-3-030-26951-7_9}}.
	
\end{thebibliography}
\newcommand{\etalchar}[1]{$^{#1}$}

\appendix
\section{Lemmas used for proving the security of the Lamport OTS}\label{apx:Lamport lemmas}
\subsection{Proof of \cref{Commutator argument1,lem:Commutator}}\label{apx:L8andL15}

Here, we prove that the unitary $U_h$ does not significantly modify the secret key register. Towards that end, we show that the random oracle unitary $U_h$ approximately commute with the projector onto the relevant parts of the secret key register being in state $\ket{\Phi}$.

\LemComWin*

\begin{proof}
Recall from \cref{hash oracles} that $U_h$ compares the input in register $X$ to the secret key in $\Gamma$, and stores the output in register $Y$.
Recall from \cref{eq:Uh} that $U_h$ is defined as follows:
\begin{equation*}
    (U_h)_{XY\Gamma} =
    \of*{
        \prod_{i=1}^l
        \prod_{j=0}^{w-2}
        (U_i^j)_{XY\Gamma_i^j}
    }
    U^{\neq}_{XY\Gamma}.
\end{equation*}
Using \cref{lem:commutators}%
, we get
\begin{equation}
    \norm*{\sof*{
        U_h, \Phi_{\Gamma_{i'}^{\leq j'}}
    }}_\infty
    \leq
    \sum_{i=1}^l
    \sum_{j=0}^{w-2}
    \norm*{\sof*{
        U_i^j, \Phi_{\Gamma_{i'}^{\leq j'}}
    }}_\infty
    + \norm*{\sof*{
        U^{\neq}_{XY\Gamma},
        \Phi_{\Gamma_{i'}^{\leq j'}}
    }}_\infty.\label{eq:ineq1}
\end{equation}
We will bound the two terms separately.

To deal with the first term, recall from \cref{Uij} that
\begin{equation}
    (U^j_i)_{XY\Gamma^j_i\Gamma^{j+1}_i}
= P^=_{X\Gamma^j_i} \x \of[\Big]{
  (\CNOT\xp{n})_{\Gamma^{j+1}_i:Y}
  - \1
 }
+ \1,
\end{equation}
so the expression in first term of \cref{eq:ineq1} is given by
\begin{align}
    \norm*{\sof*{
        U_i^j, \Phi_{\Gamma_{i'}^{\leq j'}}
    }}_\infty
    &= \norm*{\sof*{
        P^=_{X\Gamma^j_i} \x \of*{
      (\CNOT\xp{n})_{\Gamma^{j+1}_i:Y}
      - \1
     }, \Phi_{\Gamma_{i'}^{\leq j'}}
    }}_\infty.
    \label{eq:ineq2}
\end{align}
If $i \neq i'$ or $j' < j$, the commutator vanishes because on every register at least one of the two operators acts as the identity matrix.
Let us assume now that $i = i'$ and $j'\ge j$. Then we can upper bound the norm in \cref{eq:ineq2} by noting that $\Phi_{\Gamma_{i}^{\leq j'}} = \Phi_{\Gamma_i^j} \x \Phi'$ where $\Phi'$ is a projector on the remaining registers, and then separating $X\Gamma_i^j$ out from the remaining registers.
Starting with the triangle inequality and then using this observation,
\begin{align*}
    \norm*{\sof*{
        U_i^j, \Phi_{\Gamma_{i}^{\leq j'}}
    }}_\infty
    &\le 2
    \norm*{
        \of*{
            P^=_{X\Gamma^j_i} \x
            \of*{(\CNOT\xp{n})_{\Gamma^{j+1}_i:Y} - \1}
        } \cdot
        \Phi_{\Gamma_{i}^{\leq j'}}
    }_\infty \\
    &\le 2
    \norm*{P^=_{X\Gamma^j_i}\Phi_{\Gamma_i^j}}_\infty
    \norm*{(\CNOT\xp{n})_{\Gamma^{j+1}_i:Y} - \1}_\infty
    \norm*{\Phi'}_\infty \\
    &\le 4
    \norm*{P^=_{X\Gamma^j_i}\Phi_{\Gamma_i^j}}_\infty \\
    &\le 4 \cdot 2^{-n/2},
\end{align*}
where we used $\norm{(\CNOT\xp{n})_{\Gamma^{j+1}_i:Y} - \1}_\infty \leq 2$ and $\norm{\Phi'}_\infty \leq 1$ to obtain the third inequality, and then \cref{lem:PQ} for the last inequality.
In summary,
\begin{equation}
    \norm*{\sof*{
        U_i^j, \Phi_{\Gamma_{i'}^{\leq j'}}
    }}_\infty
    \le \delta_{i,i'} \delta_{j' \geq j} \cdot
    4 \cdot 2^{-n/2},
\end{equation}
so the first term in \cref{eq:ineq1} can be upper bounded as
\begin{equation}
    \sum_{i=1}^l
    \sum_{j=0}^{w-2}
    \norm*{\sof*{
        U_i^j, \Phi_{\Gamma_{i'}^{\leq j'}}
    }}_\infty
    \le 4(j'+1) \cdot 2^{-n/2}.
    \label{eq:term1}
\end{equation}

To upper bound the second term in \cref{eq:ineq1}, recall from \cref{eq:Uneq} that
$U^{\neq}_{XY\Gamma}
= P^{\neq}_{X\Gamma} \cdot \of*{U'_{XY} - \1} + \1$.
Using \cref{lem:commutators}, we can inductively expand the projectors $P^{\neq}_{X\Gamma}$ and $\Phi_{\Gamma_{i'}^{\leq j'}}$ defined in \cref{eq:Pneq,eq:R} to get
\begin{align}
    \norm*{\sof*{
        U^{\neq}_{XY\Gamma},
        \Phi_{\Gamma_{i'}^{\leq j'}}
    }}_\infty
    &=
    \norm*{\sof*{
        P^{\neq}_{X\Gamma} \cdot \of*{U'_{XY} - \1},
        \Phi_{\Gamma_{i'}^{\leq j'}}
    }}_\infty \nonumber \\
    &\leq
    \sum_{i=1}^l
    \sum_{j=0}^{w-2}
    \sum_{k=0}^{j'}
    \norm*{\sof*{
        P^{\neq}_{X\Gamma_i^j} \cdot \of*{U'_{XY} - \1},
        \Phi_{\Gamma_{i'}^k}
    }}_\infty. \label{eq:ijk}
\end{align}
We can simplify this further by using \cref{lem:commutators} once again:
\begin{align*}
    \norm*{\sof*{
        P^{\neq}_{X\Gamma_i^j} \cdot \of*{U'_{XY} - \1},
        \Phi_{\Gamma_{i'}^k}
    }}_\infty
    &\leq
    \norm*{\sof*{
        P^{\neq}_{X\Gamma_i^j},
        \Phi_{\Gamma_{i'}^k}
    }}_\infty
    +
    \norm*{\sof*{
        U'_{XY} - \1,
        \Phi_{\Gamma_{i'}^k}
    }}_\infty.
\end{align*}
The second term vanishes since the query operation $U'_{XY}$ from \cref{eq:U'} acts trivially on $\Gamma_{i'}^k$.
Furthermore, we can replace $P^{\neq}$ by $P^=$ in the first term since the two operators differ by $\1$, which commutes with everything.
With these observations, \cref{eq:ijk} simplifies to
\begin{equation*}
    \norm*{\sof*{
        U^{\neq}_{XY\Gamma},
        \Phi_{\Gamma_{i'}^{\leq j'}}
    }}_\infty
    \leq
    \sum_{i=1}^l
    \sum_{j=0}^{w-2}
    \sum_{k=0}^{j'}
    \norm*{\sof*{
        P^=_{X\Gamma_i^j},
        \Phi_{\Gamma_{i'}^k}
    }}_\infty.
\end{equation*}
Using \cref{lem:PQ}, we obtain the following bound on each term:
\begin{equation*}
    \norm*{\sof*{
        P^=_{X\Gamma_i^j},
        \Phi_{\Gamma_{i'}^k}
    }}_\infty
    \leq
    2
    \delta_{i,i'}
    \delta_{j,k}
    \cdot 2^{-n/2}.
\end{equation*}
Putting these together, the second term in \cref{eq:ineq1} can be upper bounded as
\begin{equation*}
    \norm*{\sof*{
        U^{\neq}_{XY\Gamma},
        \Phi_{\Gamma_{i'}^{\leq j'}}
    }}_\infty
    \leq
    2 (j'+1) \cdot 2^{-n/2}.
\end{equation*}

Combining this with \cref{eq:term1}, both terms in \cref{eq:ineq1} can be upper bounded as
\begin{equation}
    \norm*{\sof*{
        U_h, \Phi_{\Gamma_{i'}^{\leq j'}}
    }}_\infty
    \leq
    6(j'+1)2^{-n/2}
    \leq
    6(w-1) \cdot 2^{-n/2}
\end{equation}
where we used $j' \leq w-2$.
\end{proof}

As a corollary, we obtain the corresponding lemma for the Lamport OTS by setting $w=2$.

\LemCom*

\subsection{Proof of \cref{Orthogonality argument}}\label{apx:QPS}

\LemOrth*

\begin{proof}
Recall from \cref{eq:Q,eq:invariant projector,eq:Phi(a)} that
\begin{align}
    Q_{l+1}^{m^*}
    &= \bigotimes_{i=1}^l
    \Phi^\perp_{S_i^{m^*_i}}, &
    P_S
    &= \sum_{\alpha \in \widehat{B^c}} \Phi(\alpha)_S
    = \sum_{\alpha \in \widehat{B^c}}
    \bigotimes_{i=1}^l
    \bigotimes_{j=0}^1
    \Phi(\alpha_i^j)_{S^j_i},
\end{align}
where
$\Phi(0) = \proj{\Phi} = \Phi$ and
$\Phi(1) = \1 - \proj{\Phi} = \Phi^\perp$,
and the set $\widehat{B^c}$ was defined in \cref{def of B^c hat} as
\begin{align}
    \widehat{B^c} = \bigcup_{m \in B^c}
    \set[\Big]{
        \alpha \in \set{0,1}^{2l}
        \mathrel{\Big|}
        \text{$\alpha_i^{\bar{m}_i} = 0$ for all $i=1,\dotsc,l$}
    }.
    \label{eq:Bchat2}
\end{align}

Substituting the expression for $Q^{m^*}_{l+1}$,
\begin{align}
    Q^{m^*}_{l+1} P_S
  &=
    \sum_{\alpha\in \widehat{B^c}}\of[\Big]{
        \of[\big]{(\Phi^\perp)\xp{l}}_{S_1^{m^*_1} \cdots S_l^{m^*_l}}\otimes
        \1_{S_1^{\bar{m}_1^*}
        \cdots S_l^{\bar{m}_l^*}}
    }
    \Phi(\alpha).
    \label{eq:ortho}
\end{align}
Consider a single term $\alpha \in \widehat{B^c}$ in the sum. By definition of $\widehat{B^c}$, there exists a message $m\in B^c$ such that $\alpha_i^{\bar{m}_i}=0$ for all $i$. But as $B\ni m^*\neq m\in B^c$, there exists an index $i$ such that $m_i^*\neq m_i$, or equivalently $m_i^*=\bar m_i$. Therefore we can simplify the corresponding term to
\begin{align*}
&\of[\Big]{
        \of[\big]{(\Phi^\perp)\xp{l}}_{S_1^{m^*_1} \cdots S_l^{m^*_l}}\otimes
        \1_{S_1^{\bar{m}_1^*}
        \cdots S_l^{\bar{m}_l^*}}
    }
    \Phi(\alpha)\\
  &=\of[\Big]{
        \Phi^\perp_{S_1^{m_1^*}}\otimes\cdots\otimes
        \Phi^\perp_{S_i^{m_i^*}}\otimes\cdots\otimes
        \Phi^\perp_{S_l^{m_l^*}}\otimes
        \1_{S_1^{\bar{m}_1^*} \cdots S_l^{\bar{m}_l^*}}
    }\\
    &\quad\,\,
    \of[\Big]{
        \Phi_{S_1^{\bar{m}_1}}
        \otimes\cdots\otimes
        \Phi_{S_i^{\bar{m}_i}}
        \otimes\cdots\otimes
        \Phi_{S_l^{\bar{m}_l}}
        \otimes
        \Phi(\alpha_1^{m_1})_{S_1^{m_1}}\otimes \cdots\otimes  \Phi(\alpha_l^{m_l})_{S_l^{m_l}}
        }
    \\
    &=
    \cdots\otimes
        \underbrace{
        (\Phi^\perp\Phi)_{S _i^{m^*_i}}
        }_{=0}\otimes\cdots
    \\
   &=0.
\end{align*}
Following the same analysis for all the terms in \cref{eq:ortho} leads to the required result.

\end{proof}

\subsection{Proof of \cref{final state invariant if no hash queries}}\label{apx:PSB}

Here, we show that if there are no hash queries before the $\Sign$ query, projecting $P_S$ onto the adversary's state after the $\Sign$ query leaves it completely in the range of $P_S$.

\LemNoHash*

\begin{proof}
We know that all secret key sub-registers are in state $\ket{\Phi}$ if there are no hash queries before $\Sign$ query, i.e.,
\[\ket{\psi_0}_{ M\Sigma EBS}
-\Phi\xp{2l}_S\ket{\psi_0}_{M\Sigma EBS}
=0\]
where $\ket{\psi_0}_{M\Sigma EBS}$ is the adversary's state immediately before the $\Sign$ query.
Let us denote the state after a single $\Sign$ query by
\begin{equation}
  \ket{\psi_1}_{M\Sigma EBS}
  = B\Sign_{\sk}
    \ket{\psi_0}_{M\Sigma EBS}.
\end{equation}
We want to show that applying $P_S$ onto the state $\ket{\psi_1}_{M\Sigma EBS}$ leaves it invariant. That is,
\begin{equation}
    \bigl\|\ket{\psi_1}_{M\Sigma EBS}
    -P_S\ket{\psi_1}_{M\Sigma EBS}\bigr\|_2
    = 0.
\end{equation}
Recall that
\begin{align}
    P_S\ket{\psi_1}_{ M\Sigma EBS}
    &=P_SB
    \Sign_{\sk}
    \ket{\psi_0}_{M\Sigma EBS}\\
    &=P_SB
    \Sign_{\sk}
    \ket{\psi_0^1}_{M\Sigma EBS} + 
    P_SB
    \Sign_{\sk}
    \ket{\psi_0^0}_{M\Sigma EBS},
    \label{P_Sapplied to post signature state}
\end{align}
where the superscripts $1$ and $0$ refer to blinded ($B$) and un-blinded ($B^c$) messages, respectively.
Recall from \cref{eq:psi0-L} that:
\begin{align}
    \ket{\psi_0^1}_{M\Sigma BES}
    &
  = \left(\sum_{m\in B}
    \sum_{\sigma\in(\{0,1\}^n)^l}
    \kappa_{m\sigma 1}
    \ket{m}_M
    \ket{\sigma}_{\Sigma}
    \ket{1}_B 
    \ket{\alpha_{m\sigma 1}}_E\right)
    \otimes
    \left(\ket{\Phi}^{\otimes 2l}\right)_S%
    \\
    \ket{\psi_0^0}_{M\Sigma BES}
    &
     = \left(\sum_{m\in B^c}
    \sum_{\sigma\in(\{0,1\}^n)^l}
    \kappa_{m\sigma 0}
    \ket{m}_M
    \ket{\sigma}_{\Sigma}
    \ket{0}_B \ket{\alpha_{m\sigma 0}}_E\right)
    \otimes
    \left(\ket{\Phi}^{\otimes 2l}\right)_S
\end{align}
We start by deriving the first term  of the right-hand side of \cref{P_Sapplied to post signature state}. If there are no hash queries before the $\Sign$ query, the secret key register is fully in uniform superposition state $\ket{\Phi}$ before the single $\Sign$ query. Since we are dealing with the first term that corresponds to blinded messages, there is no signature. Thus, the joint state of all secret key registers is in the range of the projector $\Phi$ and
\begin{align}
   P_SB
   \Sign_{\sk}
   \ket{\psi_0^1}_{M\Sigma EBS}
   =\ket{\psi_0^1}_{M\Sigma EBS}.
   \label{Ps leave psi01 invariant}
\end{align}

Next, we evaluate the second term of \cref{P_Sapplied to post signature state} that corresponds to un-blinded messages.
We start by setting
\begin{align*}
    \ket{\tau}_{M\Sigma BES} 
    &= P_SB
    \Sign_{\sk}
    \ket{\psi_0^0}_{M\Sigma EBS} \\
    &= \sum_{m\in B^c}
    P_S
    B
    \Sign_{\sk}
    \ket{\gamma(m)}_{\Sigma EB}
    \ket{\Phi}^{\otimes 2l}_S
\end{align*}
where
\begin{align*}
    \ket{\gamma(m)}_{\Sigma EB}
   = \sum_{\sigma\in(\{0,1\}^n)^l}
     \kappa_{m\sigma 0}
    \ket{m}_M
    \ket{\sigma}_{\Sigma}
    \ket{0}_B \ket{\alpha_{m\sigma}}_E.
\end{align*}

From the definition of $\widehat{B^c}$ in \cref{eq:Bchat2} we know that for any message $m \in B^c$ there exists an $i \in \set{1,\dotsc,l}$ for which the secret key register $S^{\bar{m}_i}_i$ corresponding to $\bar{m}_i$ is in the range of the projector $\Phi$. Since the projector $P_S = \sum_{\alpha \in \widehat{B^c}} \Phi(\alpha)_S$ acts on the entire secret key register, we can split it into two sums where the first is over all the projectors for which $\alpha^{\bar{m}_i}_i = 0$ (i.e., $S^{\bar{m}_i}_i$ is in the range of $\Phi(0) = \Phi$) and the second is over the remaining projectors ($\alpha_i^{\bar{m}_i} = 1$ for at least one $i$). Accordingly, for any message $m \in B^c$, we can split $P_S$ as
\begin{equation}\label{eq:decomposition}
    P_S
  = \sum_{\alpha \in \widehat{B^c}} \Phi(\alpha)
  = \sum_{\substack{\alpha \in \widehat{B^c} \\
    \forall i: \alpha_i^{\bar{m}_i} = 0}}
    \Phi(\alpha) +
    \sum_{\substack{\alpha \in \widehat{B^c} \\
    \exists i: \alpha_i^{\bar{m}_i} = 1}}
    \Phi(\alpha)
  = P_S^1(m) + P_S^2(m).
\end{equation}
Note that this split can differ from one message $m \in B^c$ to another.
Therefore, $\ket{\tau}_{M\Sigma BES}$ becomes
\begin{align}
    \ket{\tau}_{M\Sigma BES}
  &=\sum_{m\in B^c}
    \of[\big]{P_S^1(m) + P_S^2(m)}
    B\Sign_{\sk}
    \ket{\gamma(m)}_{\Sigma EB}
    \ket{\Phi}^{\otimes 2l}_S.
    \label{equation superposition of messages}
\end{align}
For a fixed message $m$, $P_S^1(m)$ can be written as 
\[P_S^1(m)
  = \sum_{\substack{\alpha \in \widehat{B^c} \\
    \forall i: \alpha_i^{\bar{m}_i} = 0}}
    \Phi(\alpha)
=\1_{S^{m_1}_1\cdots S^{m_l}_l}\otimes 
(\Phi\xp{l})_{S^{\bar{m}_1}_1\cdots S^{\bar{m}_l}_l}.\]
Thus,
\begin{align}
   & 
    P_S^1(m)
    B
    \Sign_{\sk}
    \ket{\gamma(m)}_{\Sigma EB}
    \ket{\Phi}^{\otimes 2l}_S\nonumber\\
  &=
    B
    \Sign_{\sk}
    \ket{\gamma (m)}_{\Sigma EB}
    \of[\Big]{
        \1_{S^{m_1}_1\cdots S^{m_l}_l}\otimes 
        (\Phi\xp{l})_{S^{\bar{m}_1}_1\cdots S^{\bar{m}_l}_l}
    }
    \ket{\Phi}^{\otimes 2l}_S\nonumber\\
  &=
    B
    \Sign_{\sk}
    \ket{\gamma (m)}_{\Sigma EB}
    \ket{\Phi}^{\otimes 2l}_S\label{eq:fixed-message-Ps1-invariant}
\end{align}
where the second equality follows because $P_S^1(m)$ and $
B\Sign_{\sk}$ act on different sub-registers. More precisely, $P_S^1(m)$ acts on the secret key registers corresponding to the complementary message $\bar{m}$ while $B\Sign_{\sk}$ acts on the secret key registers corresponding to the message $m$, i.e.
\[[P_S^1(m),B\Sign_{\sk}]\ket m_M = 0.\]
This is only true for the fixed message $m$.

Note that $P_S^1(m)$ and $P_S^2(m)$ are orthogonal, i.e.
\[P_S^2(m)P_S^1(m) = 0.\]
So \cref{eq:fixed-message-Ps1-invariant} implies
\[P_S^2(m)B
    \Sign_{\sk}
    \ket{\gamma(m)}_{\Sigma EB}
    \ket{\Phi}^{\otimes 2l}_S=P_S^2(m)P_S^1(m)B
    \Sign_{\sk}
    \ket{\gamma(m)}_{\Sigma EB}
    \ket{\Phi}^{\otimes 2l}_S=0.\]
Adding to this the result obtained in \cref{Ps leave psi01 invariant}, we get
\[P_S
    B
    \Sign_{\sk}
    \ket{\gamma(m)}_{\Sigma EB}
    \ket{\Phi}^{\otimes 2l}_S =~
    B
    \Sign_{\sk}
    \ket{\gamma(m)}_{\Sigma EB}
    \ket\Phi^{\otimes 2l}_S.\]
Hence,
\begin{align}
    \bigl\|\ket{\psi_1}_{M\Sigma EBS}
-P_S\ket{\psi_1}_{M\Sigma EBS}\bigr\|_2
= 0 \label{P_S leave invariant psi1}
\end{align}
by \cref{equation superposition of messages}.
\end{proof}

\subsection{Proof of \cref{commutator of ps and uh}}\label{apx:UhPS}

\LemComP*

\begin{proof}
Recall from \cref{eq:Uh} that $U_h$ is defined as follows:
\begin{equation}
    (U_h)_{XYS} =
    \of*{
        \prod_{i=1}^l
        \prod_{j=0}^1
        (U_i^j)_{XYS^j_i}
    }
    U^{\neq}_{XYS}.
    \label{eq:Uh Lamport}
\end{equation}
Similar to \cref{Uij}, the unitary $U^j_i$ is defined as
\begin{equation}
    (U^j_i)_{XY\Gamma^j_i\Gamma^{j+1}_i}
    = P^=_{X\Gamma^j_i} \x \of[\Big]{
      (\CNOT\xp{n})_{\Gamma^{j+1}_i:Y}
      - \1
     }
    + \1,
    \label{eq:U_i^j}
\end{equation}
where for the Lamport OTS $j=0$ and $\Gamma_i^{j+1}$ represents the register that stores the $i$-th $n$-bit string of the public key which is described in \cref{sec:LOTS}. More precisely, in the Lamport OTS, the pair of indices $(i,j)$ of the secret and public keys takes the role of the index $i$ in the Winternitz scheme, and whether a string is part of the secret key ($j=0$) or the public key ($j=1$) in the Lamport scheme determines the value of the index $j$ from the Winternitz scheme.
If we introduce a register $P^j_i$ that stores the $(i,j)$-th block of the public key, \cref{eq:U_i^j} for the Lamport OTS becomes
\begin{equation}
    (U_{i,j})_{XY S^j_i P^j_i}
    = P^=_{XS^j_i} \x \of[\Big]{
      (\CNOT\xp{n})_{P^j_i:Y}
      - \1
     }
    + \1,
    \label{eq:Uij for Lamport}
\end{equation}
Also, similar to \cref{eq:Uneq},
\begin{equation}
    U^{\neq}_{XYS}
    = P^{\neq}_{XS} U'_{XY}
    + \of[\big]{
        \1_{XS}
        - P^{\neq}_{XS}
    } \x \1_{Y}.
    \label{eq:UXYS}
\end{equation}

By substituting the formula for $U_h$ from \cref{eq:Uh Lamport}, we get
\begin{align}
    \bigl\|[U_h, P_S]\bigr\|_\infty
    \label{eq:firstofmany}
    &= \biggl\|\bigg[\Big(
    \prod_{\substack{i=1,\dotsc,l\\j=0,1}}
    (U_{i,j})_{XYS^j_iP^j_i}
    \Big)
    U^{\neq}_{XYS}, 
    P_S\bigg]\biggr\|_\infty\\
    &= \biggl\|\bigg[
    \prod_{\substack{i=1,\dotsc,l\\j=0,1}}
    (U_{i,j})_{XYS^j_iP^j_i},P_S\bigg]
    U^{\neq}_{XYS} +
    \prod_{\substack{i=1,\dotsc,l\\j=0,1}}
    (U_{i,j})_{XYS^j_iP^j_i}
    \bigg[U^{\neq}_{XYS},P_S\bigg]
    \biggr\|_\infty\\ 
    &\leq \biggl\|\bigg[
    \prod_{\substack{i=1,\dotsc,l\\j=0,1}}
    (U_{i,j})_{XYS^j_iP^j_i},P_S\bigg]
    U^{\neq}_{XYS} \biggr\|_\infty+
    \biggl\|
    \prod_{\substack{i=1,\dotsc,l\\j=0,1}}
    (U_{i,j})_{XYS^j_iP^j_i}
    \bigg[U^{\neq}_{XYS},P_S\bigg]
    \biggr\|_\infty\\
    &\leq \biggl\|\bigg[
    \prod_{\substack{i=1,\dotsc,l\\j=0,1}}
    (U_{i,j})_{XYS^j_iP^j_i},P_S\bigg]
    \biggr\|_\infty
    \biggl\|
    U^{\neq}_{XYS} \biggr\|_\infty+
    \biggl\|
    \prod_{\substack{i=1,\dotsc,l\\j=0,1}}
    (U_{i,j})_{XYS^j_iP^j_i}\biggr\|_\infty
    \biggl\|
    \biggl[U^{\neq}_{XYS},P_S\bigg]
    \biggr\|_\infty,
\end{align}
where the first equality follows from the definition of $U_h$, the second follows from \cref{eq:ABC}, and the two inequalities follow respectively from the triangle inequality and the sub-multiplicative property of the operator norm.
Since $U^{\neq}_{XYS}$ is an unitary,
$\|U^{\neq}_{XYS}\|_\infty = 1$.
Similarly,
\[\biggl\|
\prod_{\substack{i=1,\dotsc,l\\j=0,1}}
(U_{i,j})_{XYS^j_iP^j_i}
\biggr\|_\infty
\leq
\prod_{\substack{i=1,\dotsc,l\\j=0,1}}
\Bigl\|(U_{i,j})_{XYS^j_iP^j_i}\Bigr\|_\infty
\leq 1\]
and hence
\begin{equation*}
    \bigl\|[U_h, P_S]\bigr\|_\infty
    \le\biggl\|\biggl[
    \prod_{\substack{ i=1,\dotsc,l\\j=0, 1}}(U_{i,j})_{XYS^j_iP^j_i},P_S\biggr]
    \biggr\|_\infty+
    \biggl\|
    \biggl[U^{\neq}_{XYS},P_S\bigg]
    \biggr\|_\infty.
\end{equation*}
Substituting $U^{\neq}_{XYS}$ from \cref{eq:UXYS} and using the triangle inequality lead to
\begin{align}
    \bigl\|[U_h, P_S]\bigr\|_\infty
    &\leq
    \biggl\|\bigg[
        \prod_{\substack{i=1,\dotsc,l\\j=0,1}}
        (U_{i,j})_{XYS^j_iP^j_i},
        P_S
    \bigg] 
    \biggr\|_\infty
    +
    \biggl\|
    \bigg[
        P^{\neq}_{XS} U'_{XY}
        + \of[\big]{\1_{XS} - P^{\neq}_{XS}} \x \1_{Y},
        P_S
    \bigg]
    \biggr\|_\infty\\
    &\leq
    \biggl\|\bigg[
        \prod_{\substack{i=1,\dotsc,l\\j=0,1}}
        (U_{i,j})_{XYS^j_iP^j_i},P_S
    \bigg] 
    \biggr\|_\infty
    +
    \biggl\|
    \Big[P^{\neq}_{XS}U'_{XY},P_S\Big]
    \biggr\|_\infty
    +
    \biggl\|
    \Big[
        \of[\big]{\1_{XS} - P^{\neq}_{XS}} \x \1_{Y},
        P_S
    \Big]
    \biggr\|_\infty
    \label{subadditivity}
\end{align}
where $U_{XY}'$ is the standard random oracle. We will now bound separately each term of the latter equation.

Let us pick some register $S_i^j$ and split up the invariant projector $P_S$ into a sum of two terms according to whether they contain $\Phi$ or $\Phi^\perp = \1 - \Phi$ on this register:
\begin{equation}
    P_S
    = \Phi_{S_i^j} \x 
      \tilde\Phi^0_{(S_i^j)^c}
    + \Phi^\perp_{S_i^j} \x
      \tilde\Phi^1_{(S_i^j)^c}
    \label{eq:PS split}
\end{equation}
where, for $b \in \set{0,1}$,
\begin{equation*}
    \tilde\Phi^b_{(S_i^j)^c}
    = \sum_{\substack{\alpha\in\widehat{ B^c}\\\alpha_i^j=b}}
    \bigotimes_{\substack{i'=1,\dotsc,l\\j'=0,1\\(i',j')\neq (i,j)}}
    \Phi(\alpha_{i'}^{j'})_{S_{i'}^{j'}}
\end{equation*}
is a sum of mutually orthogonal projectors and hence a projector itself.
Consequently,
\begin{equation}
    \norm[\big]{
        \tilde\Phi^b_{(S_i^j)^c}
    }_\infty
    = 1.
    \label{eq:Phib norm}
\end{equation}
Using \cref{lem:commutators} and substituting $U_{i,j}$ from \cref{eq:Uij for Lamport}
gives, 
\begin{align*}
    \biggl\|\bigg[
    \prod_{\substack{i=1,\dotsc,l\\j=0,1}}
    (U_{i,j})_{XYS^j_iP_i^j},
    P_S
    \bigg]
    \biggr\|_\infty
   &\leq 
    \sum_{\substack{i=1,\dotsc,l\\j=0,1}}
    \biggl\|\Big[
    (U_{i,j})_{XYS^j_iP_i^j},
    P_S
    \Big]
    \biggr\|_\infty\\
    &= 
    \sum_{\substack{i=1,\dotsc,l\\j=0,1}}
    \biggl\|\Big[
        P^=_{XS^j_i} \x \of[\Big]{
      (\CNOT\xp{n})_{P_i^j:Y}
      - \1
     }
    + \1
        ,P_S
    \Big]
    \biggr\|_\infty\\
    & \le
    \sum_{\substack{i=1,\dotsc,l\\j=0,1}}
    \norm*{\Big[P^=_{XS^j_i},P_S\Big]}_\infty
    \underbrace{\norm*{(\CNOT\xp{n})_{P_i^j:Y}
      - \1}_\infty}_{=2}\\
    &= 2
    \sum_{\substack{i=1,\dotsc,l\\j=0,1}}
    \norm*{\Big[P^=_{XS^j_i},P_S\Big]}_\infty
\end{align*}
where the %
last inequality follows from the sub-multiplicative property of norm and the fact that $[\1,P_S] = 0$. Recall that here $j=0$ and $\pk_i$ is the $i$-th public key prepared in the computational basis.
Substituting $P_S$ from \cref{eq:PS split} gives
\begin{align*}
    \sum_{\substack{i=1,\dotsc,l\\j=0,1}}
    \norm*{\Big[P^=_{XS^j_i},P_S\Big]}_\infty
    &= \sum_{\substack{i=1,\dotsc,l\\j=0,1}}
    \norm*{\Big[
    P^=_{XS^j_i},
    \Phi_{S_i^j} \x 
      \tilde\Phi^0_{(S_i^j)^c}
    + \Phi^\perp_{S_i^j} \x
      \tilde\Phi^1_{(S_i^j)^c}
    \Big]}_\infty\\
    &\le
    \sum_{\substack{i=1,\dotsc,l\\j=0,1}}
    \norm*{\Big[
        P^=_{XS^j_i},
        \Phi_{S_i^j}
    \Big]}_\infty
    \norm*{
        \tilde\Phi^0_{(S_i^j)^c}
    }_\infty +
    \norm*{\Big[
        P^=_{XS^j_i},
        \1 - \Phi_{S_i^j}
    \Big]}_\infty
    \norm*{
        \tilde\Phi^1_{(S_i^j)^c}
    }_\infty\\
    &= 2 \sum_{\substack{i=1,\dotsc,l\\j=0,1}}
    \norm*{\Big[
        P^=_{XS^j_i},
        \Phi_{S_i^j}
    \Big]}_\infty,
\end{align*}
where we used the triangle inequality, sub-multiplicativity of the norm, and \cref{eq:Phib norm}. Recall from \cref{lem:PQ} that
$\bigl\|\big[
    P^=_{XS^j_i},
    \Phi_{S^j_i}
\big]\bigr\|_\infty
= 2 \cdot 2^{-n/2}$.
Hence,
\begin{equation}
    \sum_{\substack{i=1,\dotsc,l\\j=0,1}}
    \norm*{\Big[
        P^=_{XS^j_i},
        P_S
    \Big]}_\infty
    \leq
    2 \sum_{\substack{i=1,\dotsc,l\\j=0,1}}
    \norm*{\Big[
        P^=_{XS^j_i},
        \Phi_{S_i^j}
    \Big]}_\infty
    = 2 \cdot 2l \cdot 2\cdot 2^{-n/2}
    = 8l \cdot 2^{-n/2},
    \label{eq:commutator with PS}
\end{equation}
and the first term in \cref{subadditivity} can be upper bounded as
\begin{align*}    
    \biggl\|\bigg[
        \prod_{\substack{i=1,\dotsc,l\\j=0,1}}
        (U_{i,j})_{XYS^j_iP^j_i},
        P_S
    \bigg]\biggr\|_\infty
    \le 2
    \sum_{\substack{i=1,\dotsc,l\\j=0,1}}
    \norm*{\Big[P^=_{XS^j_i},P_S\Big]}_\infty
    \leq 16l \cdot 2^{-n/2}.
\end{align*}

Next, we bound the second term of \cref{subadditivity}.
Substituting
\begin{equation}
    P^{\neq}_{XS} =
    \prod_{i=1}^l
    \prod_{j=0}^1
    P^{\neq}_{XS_i^j}
\end{equation}
from \cref{eq:Pneq}, we get
\begin{align*}
    \norm*{
        \big[P^{\neq}_{XS}U'_{XY},P_S\big] 
    }_\infty
    &=
    \norm[\bigg]{
    \bigg[
        \prod_{\substack{i=1,\dotsc,l\\j=0,1}}
        P^{\neq}_{XS^j_i}U'_{XY},P_S
    \bigg]
    }_\infty\\
    &\leq
    \sum_{\substack{i=1,\dotsc,l \\ j=0,1}}
    \norm[\Big]{
        \big[P^{\neq}_{XS^j_i},P_S\big] 
    }_\infty
    +
    \norm[\Big]{
        \underbrace{\big[U'_{XY},P_S\big]}_{=0} 
    }_\infty\\
    &=
    \sum_{\substack{i=1,\dotsc,l\\j=0,1}}
    \norm[\Big]{
    \Big[
        \1_{XS^j_i}-P^=_{XS^j_i},
        P_S
    \Big]
    }_\infty\\
    &=
    \sum_{\substack{i=1,\dotsc,l \\ j=0,1}}
    \norm[\Big]{
    \big[
        P^=_{XS^j_i},
        P_S
    \big] 
    }_\infty\\
	&\leq 8l\cdot 2^{-n/2}
\end{align*}
where we used the triangle inequality, the fact that $\big[U'_{XY},P_S\big] = 0$ because $U'_{XY}$ and  $P_S$ act on different registers, and finally \cref{eq:commutator with PS} to obtain the last inequality.

Lastly, we bound the third term of \cref{subadditivity} as follows:
\begin{align*}
    &\biggl\| 
    \bigg[
    \of[\bigg]{
        \1_{XS} -
        \prod_{\substack{i=1,\dotsc,l\\j=0,1}}
        P^{\neq}_{XS^j_i}
    } \otimes \1_{Y},
    P_S
    \bigg]
    \biggr\|_\infty\\
    &=
    \biggl\| 
        \underbrace{\big[\1_{XS},P_S\big]}_{=0}
        \otimes \, \1_{Y} - 
        \bigg[
            \prod_{\substack{i=1,\dotsc,l\\j=0,1}}
            P^{\neq}_{XS^j_i},
            P_S
        \bigg]
        \otimes \1_{Y}
    \biggr\|_\infty\\
    &\leq
    \biggl\| 
        \bigg[
            \prod_{\substack{i=1,\dotsc,l\\j=0,1}}
            P^{\neq}_{XS^j_i},
            P_S
        \bigg]
    \biggr\|_\infty\\
    &\leq
    \sum_{\substack{i=1,\dotsc,l \\ j=0,1}}
    \Bigl\|
        \Big[\1_{XS^j_i}-P^=_{XS^j_i},P_S\Big]
    \Bigr\|_\infty\\
    &\leq 8l\cdot 2^{-n/2},
\end{align*}
where the reasoning is similar to what we used for bounding the second term.

Replacing the three terms of \cref{subadditivity} by their respective bounds gives
\begin{equation*}
  \bigl\|[U_h,P_S]\bigr\|_\infty
    \leq
    16l\cdot 2^{-n/2} +
    8l\cdot 2^{-n/2} +
    8l\cdot 2^{-n/2}
    =
    32l\cdot 2^{-n/2},
\end{equation*}
as claimed.
\end{proof}

\section{Lemmas used for proving the security of the Winternitz OTS}\label{apx:Winternitz lemmas}

In this appendix, we prove \cref{A'stateInv:winternitz,lem:orth-Ql+1-Pga-Wots,ComPgamUh:winternitz} which are the main technical ingredients of \cref{thm:winternitz} on security of the Winternitz OTS (another ingredient, \cref{lem:Commutator}, was already proved in \cref{apx:L8andL15}).
The proofs of these lemmas are similar to those for the Lamport OTS in \cref{apx:Lamport lemmas}.

\subsection{Proof of \cref{A'stateInv:winternitz}}

The proof of this lemma is similar to the proof of \cref{final state invariant if no hash queries} in \cref{apx:PSB}.

\LemNoHashWin*

\begin{proof}
If there are no hash queries before the $\Sign$ query, all hash chain registers (except for the last) are in state $\ket{\Phi}$, i.e.
\[\ket{\psi_0}_{M\Sigma EB\Gamma}
-\Phi_\Gamma^{\otimes l(w-1)}\ket{\psi_0}_{M\Sigma EB\Gamma}
=0\]
where $\ket{\psi_0}_{M\Sigma EB\Gamma}$ is adversary's state immediately before the $\Sign$ query.
Our goal is to show that applying the invariant projector $P_\Gamma$ onto the post-signature state $\ket{\psi_1}_{M\Sigma EB\Gamma} = B
\Sign_{\sk}
\ket{\psi_0}_{M\Sigma EB\Gamma}$ leaves it unchanged, i.e.
\[\bigl\|\ket{\psi_1}_{M\Sigma EB\Gamma}
-P_\Gamma\ket{\psi_1}_{M\Sigma EB\Gamma}\bigr\|_2
= 0.\]
We have:
\begin{align}
P_\Gamma\ket{\psi_1}_{M\Sigma EB\Gamma}
&=P_\Gamma B\Sign_{\sk}
\ket{\psi_0}_{M\Sigma EB\Gamma}\nonumber\\
&=P_\Gamma B\Sign_{\sk}
\ket{\psi_0^1}_{M\Sigma EB\Gamma} + 
P_\Gamma B\Sign_{\sk}
\ket{\psi_0^0}_{M\Sigma EB\Gamma}.
\label{PGamma applied to post signature state}
\end{align}
Recall from \cref{eq:psi0 and psi1} that $\ket{\psi_0^0}$ and $\ket{\psi_0^1}$ refer respectively to the blinded message ($m\in B$) term and the un-blinded message ($m\in B^c$) term of the queried message.
Note that using \cref{eq:psi0-W}, their respective expressions is given by:
\begin{align}
  \ket{\psi _0^1}_{M\Sigma EB\Gamma} 
   &=\sum_{m\in B}
    \sum_{\sigma\in(\{0,1\}^n)^l}
    \kappa_{m\sigma 1} 
    \ket{m}_M
    \ket{\sigma}_{\Sigma}
    \ket{\alpha_{m\sigma 1}}_E
    \ket{1}_B
    \ket \nu _\Gamma\label{eq:psi01}
    \\
    \ket{\psi _0^0}_{M\Sigma EB\Gamma} 
   &=\sum_{m\in B^c}
    \sum_{\sigma\in(\{0,1\}^n)^l}
    \kappa_{m\sigma 0} 
    \ket{m}_M
    \ket{\sigma}_{\Sigma}
    \ket{\alpha_{m\sigma 0}}_E
    \ket{0}_B
    \ket \nu _\Gamma \label{eq:psi00}
\end{align}
Let us analyze each term of the right hand-side of \cref{PGamma applied to post signature state} separately.
Given that the first term corresponds to blinded messages, there is no signature, so all the hash chain registers (except for the last) are in range of the projector $\Phi$. Thus, the state $\ket{\psi_0^0}_{M\Sigma EB\Gamma}$ remains unchanged, that is
\begin{align}
    P_\Gamma
    B\Sign_{\sk}
    \ket{\psi_0^1}_{M\Sigma EB\Gamma}
    =\ket{\psi_0^1}_{M\Sigma EB\Gamma}.
    \label{Pgamma leave lambda10 invariant}
\end{align}
For the second term, we know by the checksum that if a message $m \in B^c$ with corresponding block $b(m)$ has been queried for any other message $m'\neq m$, the corresponding block $b(m')$ contains at least one index $i$ such that $b_i' < b_i$ with $1 \leq i \leq l$ where $b'_i$ is the $i$-th block of $b(m')$.

Recall from \cref{eq:b(m)} that $b(m)$ is defined by $b=b(m) = (b_1,\dotsc,b_l) = m \parallel C(m)$, where $C(m)$ is the checksum corresponding to the message $m$.

Since $P_\Gamma$ acts on the whole hash chain register but we are interested only in those sub-registers that are consistent with $\widehat{B^c}$, we can split $P_\Gamma$ into a sum of two orthogonal projectors, just like we did in \cref{eq:PS split}.
Letting
\begin{equation}
    A(m)
    = \set{\alpha \in \widehat{B^c} : \text{$\alpha_i^{j} = 0$ for all $(i,j)$ with $j<b_i(m)$}}
    \subseteq \widehat{B^c},
    \label{A(m)}
\end{equation}
we can split $P_\Gamma$ as follows:
\begin{align*}
    P_\Gamma 
    = \sum_{\alpha \in \widehat{B^c}} \Phi(\alpha)_\Gamma 
 &= \sum_{\alpha \in A(m)}
    \Phi(\alpha)_\Gamma +
    \sum_{\alpha \in \widehat{B^c} \setminus A(m)}
    \Phi(\alpha)_\Gamma \\
 &= P_\Gamma^1(m) +
    P_\Gamma^2(m).
\end{align*}
The first projector $P_\Gamma^1(m)$ simply collects all terms that correspond to those hash chain registers for which $\alpha_i^j = 0$. More precisely, $P_\Gamma^1(m)$ corresponds to the hash chain registers $\Gamma_i^j$ with $j < b_i$ in range of $\Phi$ while $P_\Gamma^2(m)$  corresponds to the remaining hash chain registers.
Note that the way $P_\Gamma$ is split differs from one message to another.

Using this decomposition of $P_\Gamma$, the second term of the right-hand side of \cref{PGamma applied to post signature state} can be expressed as 
\begin{align}
    P_\Gamma
    B\Sign_{\sk}
    \ket{\psi_0^0}_{M\Sigma EB\Gamma}
 &= \sum_{m\in B^c}
    \left( P_\Gamma^1(m) +  P_\Gamma^2(m)\right)
    B\Sign_{\sk}
    \ket{\gamma(m)}_{ \Sigma EB}
    \ket \nu _\Gamma
    \label{Pgamma applied to lambda11}
\end{align}
where
\begin{align*}
    \ket{\gamma(m)}_{ \Sigma EB}
 = \sum_{\sigma\in(\{0,1\}^n)^l}
    \kappa_{m\sigma 0}
    \ket m_M
   \ket{\sigma}_{\Sigma}
   \ket{\alpha_{m\sigma 0}}_E
   \ket{0}_B
\end{align*}
comes from the definition of $\ket{\psi_0^0}_{M\Sigma EB\Gamma}$ in \cref{eq:psi00} and $\ket\nu_\Gamma$ is the initial hash chain block (before any query) as defined in \cref{initial:hash-chain}.
For a fixed message $m$, $P_\Gamma^1(m)$ can also be written as 
\[P_\Gamma^1(m)
= \sum_{\alpha \in A(m)} \Phi(\alpha)_\Gamma
=\Phi_{\Gamma^{<b_1(m)}_1\cdots\Gamma^{<b_l(m)}_l}\otimes 
\1_{\Gamma_{1}^{\ge b_1(m)}\cdots\Gamma_l^{\ge b_l(m)}},\]
where $\Gamma_j^{<j}=\Gamma_i^0\Gamma_i^1\cdots\Gamma_i^{j-1}$ and $\Gamma_i^{\ge j}$ is defined analogously.

Thus,
\begin{align}
   & 
    P_\Gamma^1(m)
    B\Sign_{\sk}
    \ket{\gamma(m)}_{ \Sigma EB}
    \ket\nu_\Gamma\nonumber\\
  &=
    B
    \Sign_{\sk}
    \ket{\gamma(m)}_{ \Sigma EB}
    \of*{
    \Phi_{\Gamma^{<b_1(m)}_1\cdots\Gamma^{<b_l(m)}_l}\otimes 
\1_{\Gamma_{1}^{\ge b_1(m)}\cdots\Gamma_l^{\ge b_l(m)}}
    }
    \ket \nu _\Gamma\nonumber\\*
  &=
    B
    \Sign_{\sk}
    \ket{\gamma(m)}_{ \Sigma EB}
    \ket \nu _\Gamma
   \label{Pgamma leave state of fixed-message-invariant}
\end{align}
where the second equation follows because $P_\Gamma^1(m)$ and $B\Sign_{\sk}$ act on different registers and therefore commute. More precisely, $P_\Gamma^1(m)$ acts on the hash chain sub-registers of messages $m'$ whose blocks fulfil the condition $b_i(m') < b_i(m)$ while $B\Sign_{\sk}$ acts on hash chain registers corresponding to the message $m$ that was signed, i.e.
\[[P_\Gamma^1(m),B
\Sign_{\sk}]
\ket m_M = 0.\]
Note that this is only true for a fixed message $m$. 
The last equation follows from the fact that $\Gamma$ (except for the last sub-chain) is in state $\ket{\Phi}$.

Since $P_\Gamma^1$ and $P_\Gamma^2$ are sums of mutually orthogonal projectors,
\[P_\Gamma^2(m)P_\Gamma^1(m) = 0.\]
Therefore, by \cref{Pgamma leave state of fixed-message-invariant},
\[P_\Gamma^2(m)
    B\Sign_{\sk}
    \ket{\gamma(m)}_{ \Sigma EB}
    \ket\nu_\Gamma=0.\]
Hence, from the latter equation and \cref{Pgamma leave lambda10 invariant} it follows that
\begin{align*}
   P_\Gamma
    B\Sign_{\sk}
    \ket{\gamma(m)}_{ \Sigma EB}
    \ket\nu_\Gamma
   =
    B\Sign_{\sk}
    \ket{\gamma(m)}_{ \Sigma EB}
    \ket\nu_\Gamma.
\end{align*}
Hence, by \cref{Pgamma applied to lambda11},
\begin{align}
    \bigl\|\ket{\psi_1}_{M\Sigma EB\Gamma} -
    P_\Gamma\ket{\psi_1}_{M
    \Sigma EB\Gamma}\bigr\|_2
    = 0.
\label{Pgamma leave lambda1 invariant}
\end{align}
We conclude that the projection of $P_\Gamma$ onto the post-signature state leaves it unchanged considering that hash queries before $\Sign$ query leave the hash chain registers (except for the last) in state $\ket{\Phi}$.
\end{proof}

\subsection{Proof of \cref{lem:orth-Ql+1-Pga-Wots}}

The proof of this lemma is similar to the proof of \cref{Orthogonality argument} in \cref{apx:QPS}.

\LemOrthWots*

\begin{proof}
Recall from \cref{eq:Qq} that
\begin{align}
    Q_{l+1}^{b^*}
    &= \bigotimes_{i=1}^l
    \Phi^\perp_{\Gamma_i^{b^*_i}}
\end{align}
where
$\Phi^\perp = \1 - \Phi$ and
$\Phi = \proj{\Phi}$.
Also, recall from \cref{def of Be complement} that
\begin{align}
    \widehat{B^c} = \bigcup_{m \in B^c}
    \set[\Big]{
        \alpha \in \set{0,1}^{l(w-1)}
        \mathrel{\Big|}
        \text{$\alpha_i^j = 0$ for all $i=1,\dotsc,l$ and $j < b_i(m)$}
    }
\end{align}
where $b_i(m)$ is the $i$-th block of $m$ concatenated with its checksum, see \cref{sec:WOTS}.
Moreover, recall from \cref{eq:PGamma,eq:Phi(a)-Wots} that
\begin{equation}
    P_\Gamma 
    = \sum_{\alpha \in \widehat{B^c}} \Phi(\alpha)_\Gamma
    = \sum_{\alpha \in \widehat{B^c}}
    \bigotimes_{i=1}^l
    \bigotimes_{j=0}^{w-2}
    \Phi(\alpha_i^j)_{\Gamma^j_i}.
\end{equation}
where
$\Phi(0) = \Phi$ and
$\Phi(1) = \Phi^\perp$.

Combining the expressions of $Q_{l+1}^{b^*}$ and $P_\Gamma$, we get
\begin{align}
    Q_{l+1}^{b^*} P_\Gamma
  = \sum_{\alpha\in\widehat{B^c}}
    \of[\Big]{
        \of[\big]{(\Phi^\perp)\xp{l}}_
        {\Gamma_1^{b_1^*}\cdots
        \Gamma_l^{b_l^*}}\otimes
        \1_{\Gamma_{(b^*_i,i)^c}}^{\otimes l(w-2)}
    }
    \Phi(\alpha).
    \label{eq:QP}
\end{align}

Recall from \cref{sec:WOTS} that by construction of the checksum, if the block $b$ of a message $m$ is computed, the block $b'$ of any other message $m'$ contains at least an index $i$ such that $b_i' < b_i$, $i=1,\dotsc,l$.
Consider a single term $\alpha \in \widehat{B^c}$ in the sum of \cref{eq:QP} and an un-blinded message $m\in B^c$ with associated block $b=b(m)$. 
By definition of $\widehat{B^c}$, there exists a message $m'\in B^c$ such that $\alpha_i^{b_i'}=0$ for all $i$. This implies by construction of the checksum that $b'_i<b_i$ for all $i$. But as $B\ni m^*\neq m\in B^c$, there exists an index $i$ such that $b_i^*< b_i$, or equivalently $b_i^*=b_i'$.
Therefore we can rewrite the corresponding term as follows:
\begin{align*}
&   \of[\Big]{
    \of[\Big]{
        \Phi^\perp_{\Gamma_1^{b_1^*}}
        \otimes\cdots\otimes
        \Phi^\perp_{\Gamma_l^{b_l^*}}
    }
    \otimes
    \1_{\Gamma_{({b^*_i},i)^c}}^{\otimes l(w-2)}
    }
    \Phi(\alpha)\\
&=  \of[\Big]{
        \Phi^\perp_{\Gamma_1^{b_1^*}}
        \otimes\cdots
        \Phi^\perp_{\Gamma_i^{b_i^*}}
        \otimes\cdots\otimes
        \Phi^\perp_{\Gamma_l^{b_l^*}}
        \otimes
        \1_{\Gamma_{({b^*_i},i)^c}}^{\otimes l(w-2)}
    }\\
    &\quad\,\,
    \of[\Big]{
        \Phi_{\Gamma_1^{b_1'}}
        \otimes\cdots\otimes
        \Phi_{\Gamma_i^{b_i'}}
        \otimes\cdots\otimes
        \Phi_{\Gamma_l^{b_l'}}
        \otimes
        \Phi(\alpha_1^{b_1})_{\Gamma_1^{b_1}}\otimes \cdots\otimes  \Phi(\alpha_l^{b_l})_{\Gamma_1^{b_l}}
        }
    \\
    &=
    \cdots\otimes
        \underbrace{
        (\Phi^\perp\Phi)_{\Gamma_i^{b_i^*}}
        }_{=0}\otimes\cdots
    \\
  &= 0.
\end{align*}
The result follows by applying this argument to each term in \cref{eq:QP}.
\end{proof}

\subsection{Proof of \cref{ComPgamUh:winternitz}}

The proof of this lemma is similar to the proof of \cref{commutator of ps and uh} in \cref{apx:UhPS}.

\LemComPWin*

\begin{proof}
Here, we want to prove that the commutator of the invariant projector $P_\Gamma$ and the random oracle unitary $U_h$ is small. We remind the reader that we use the convention that operators are tensored with an identity on any missing registers, which should be clear from context. We begin by deriving a decomposition for $P_\Gamma$. By definition of $\widehat{B^c}$, for some $i$ and $j$, if $\alpha\in\widehat{B^c}$ such that $\alpha_i^{j'}=0$ for $j'<j$ and $\alpha_i^j=1$, then $\tilde\alpha\in\widehat{B^c}$ for all $\tilde\alpha$ such that $\alpha_{i'}^{j'}=\tilde\alpha_{i'}^{j'}$ if $i'\neq i$ or $j'\le j$. It follows that for a fixed $i\in\{1,\dotsc,l\}$, we can write
\begin{align}
    P_\Gamma
    &=\sum_{j=0}^{w-1}
    \of[\big]{\Phi\xp{j}}_{\Gamma_i^{\le j-1}}
    \otimes \Phi^{\perp}_{\Gamma_i^j}
    \otimes P^{(i,j)}_{\Gamma_{i^c}} +
    \of[\big]{\Phi\xp{(w-1)}}_{\Gamma_i}
    \otimes P^{(i,w)}_{\Gamma_{i^c}}\\
    &=\sum_{j=0}^{w-1} \of[\Big]{
    \of[\big]{\Phi\xp{j}}_{\Gamma_i^{\le j-1}} -
    \of[\big]{\Phi\xp{(j+1)}}_{\Gamma_i^{\le j}}}
    \otimes P^{(i,j)}_{\Gamma_{i^c}}+
    \of[\big]{\Phi\xp{(w-1)}}_{\Gamma_i}
    \otimes P^{(i,w)}_{\Gamma_{i^c}}
\end{align}
with
$\Phi = \proj{\Phi}$ and
$\Phi^\perp = \1 - \Phi$ with $\ket \Phi$ defined in \cref{eq:Phi}, and
\begin{equation}
    P^{(i,j)}=\sum_{\substack{\alpha\in A(m)\\\alpha_i^j=1}}\bigotimes_{\substack{i'=1,\dotsc,l\\i'\neq i}}\bigotimes_{j'=0,\dotsc,w-2}\Phi(\alpha_{i'}^{j'})
\end{equation}
where we use the convention that $\alpha_i^w=1$ and $A(m)$ was defined in \cref{A(m)} in the proof of \cref{A'stateInv:winternitz}. We can further rearrange to bring $P_\Gamma$ into the form
\begin{equation}
    P_\Gamma=\sum_{j=0}^{w-1}
    \of[\big]{\Phi\xp{j}}_{\Gamma_i^{\le j-1}}\otimes (A_i^j)_{\Gamma_{i^c}},
\end{equation}
where $A_i^j$ is a difference of two projectors\footnote{For some $j$, one of these projectors is the zero projector.}.
We begin by bounding
\begin{align}
   \bigl\|[U_h, P_\Gamma]\bigr\|_\infty
   \leq
    \sum_{\substack{i=1,\dotsc,l \\ j=0,\dotsc,w-2}}
    \biggl\|\big[
    (U_i^j)_{\Gamma ^j_i}
    ,P_\Gamma\big] 
    \biggr\|_\infty +
    2\sum_{\substack{i=1,\dotsc,l \\ j=0,\dotsc,w-2}}
    \biggl\|\big[
    P^{\neq}_{X\Gamma ^j_i},P_\Gamma\big] 
    \biggr\|_\infty\label{commutator uh and Pgamma}
\end{align}
using the same steps as in eqs.~\eqref{eq:firstofmany} to \eqref{subadditivity}.
We can bound the first term as
\begin{align*}
    \sum_{\substack{i=1,\dotsc,l \\ j=0,\dotsc,w-2}}
    \biggl\|\big[
    (U_i^j)_{\Gamma_i XY}
    ,P_\Gamma\big] 
    \biggr\|_\infty
    \le 2\sum_{j=0}^{w-1}\sum_{\substack{i=1,\dotsc,l \\ j'=0,\dotsc,w-2}}
    \biggl\|\big[
    (U_i^j)_{\Gamma_i XY},\of[\big]{\Phi\xp{j'}}_{\Gamma_i^{\le j'-1}}\big] 
    \biggr\|_\infty,
\end{align*}
where the inequality follows from the decomposition of $P_\Gamma$ above, the triangle inequality, and the fact that $\|A_i^j\|_\infty\le 2$.
By \cref{eq:term1} in the proof of \cref{lem:Commutator},
\begin{equation}
    \biggl\|\big[
    (U_i^j)_{\Gamma_i XY},\of[\big]{\Phi\xp{j'}}_{\Gamma_i^{\le j'-1}}\big]
    \biggr\|_\infty\le 4 (w-1)2^{-n/2}.
\end{equation}

For the second term in \cref{commutator uh and Pgamma}, we simplify it to
\begin{align}
    2\sum_{\substack{i=1,\dotsc,l \\ j=0,\dotsc,w-2}}
    \biggl\|\big[
    P^{\neq}_{X\Gamma ^j_i},P_\Gamma\big] 
    \biggr\|_\infty 
 &= 2\sum_{\substack{i=1,\dotsc,l \\ j=0,\dotsc,w-2}}
    \biggl\|\big[\1_{X\Gamma}-
    P^=_{X\Gamma ^j_i},P_\Gamma\big] 
    \biggr\|_\infty \nonumber\\   
 &= 
    2\sum_{\substack{i=1,\dotsc,l \\ j=0,\dotsc,w-2}}
    \biggl\|
    \big[P^=_{X\Gamma ^j_i},P_\Gamma\big] 
    \biggr\|_\infty.
\end{align}
We can now alternatively decompose $P_\Gamma$ similar to \cref{eq:decomposition}, i.e.
\begin{equation}
    P_\Gamma =
    \Phi_{\Gamma_i^j} \otimes
    \tilde\Phi^0_{\Gamma_{(i,j)^c}} +
    \Phi^\perp_{\Gamma_i^j} \otimes
    \tilde\Phi^1_{\Gamma_{(i,j)^c}}
\end{equation}
for some projectors $\tilde \Phi^b$, $b=0,1$. 
Using this decomposition, we bound
\begin{align*}
    2\sum_{\substack{i=1,\dotsc,l \\ j=0,\dotsc,w-2}}
    \biggl\|\big[
    P^{=}_{X\Gamma ^j_i},P_\Gamma\big] 
    \biggr\|_\infty 
  \leq  \frac{8l(w-1)}{2^{n/2}}   
\end{align*}
using the same calculations we performed in the proof of \cref{commutator of ps and uh}.  
By replacing the two terms of \cref{commutator uh and Pgamma} by their respective bounds, we get
\begin{align}
    \bigl\|[U_h, P_\Gamma]\bigr\|_\infty
  &=  \sum_{\substack{i=1,\dotsc,l \\ j=0,\dotsc,w-2}}
     \biggl\|\big[
    (U_i^j)_{\Gamma ^j_i}
    ,P_\Gamma\big] 
    \biggr\|_\infty +
     2\sum_{\substack{i=1,\dotsc,l \\ j=0,\dotsc,w-2}}
    \biggl\|\big[
    P^{\neq}_{X\Gamma ^j_i},P_\Gamma\big] 
    \biggr\|_\infty\nonumber\\
 &\leq
    \frac{8lw(w-1)}{2^{n/2}} + 
    \frac{8l(w-1)}{2^{n/2}}\nonumber\\
  &= \frac{8l(w+1)(w-1)}{2^{n/2}},
  \label{final bound of Pgamma and Uh}
\end{align}
as desired.
\end{proof}

\end{document}